\documentclass[leqno,10pt]{amsart}

%-------------------------------------------------------------------------------

\usepackage[full]{textcomp}
\usepackage{newpxtext}
\usepackage{times}
\usepackage{cabin} % sans serif
\usepackage[varqu,varl]{inconsolata} % sans serif typewriter
\usepackage[bigdelims,vvarbb]{newpxmath} % bb from STIX
\usepackage[cal=cm,bb=esstix,bbscaled=1.05]{mathalfa} % mathcal
\usepackage[bottom=.96in,top=.96in,left=1.25in,right=1.25in]{geometry}
\setlength{\footskip}{20pt}
\usepackage{savesym}
\let\savedegree\bigtimes
\let\bigtimes\relax
\usepackage{mathabx}
\let\bigtimes\savedegree
\usepackage[usenames,dvipsnames]{color}
\usepackage{hyperref}
\hypersetup{
 colorlinks,
 linkcolor={blue!90!black},
 citecolor={red!80!black},
 urlcolor={blue!50!black}
}
\usepackage{tikz}
\usetikzlibrary{decorations.markings}
\usepackage{mathrsfs}
\usepackage[all,cmtip]{xy}
\usepackage{mleftright}
\usepackage{booktabs}
\usepackage{cite}
\usepackage{enumitem}

%------------------------------------------------------------------------------
\usepackage{graphicx}
\usepackage{subfigure}
\usepackage{caption}
\captionsetup{
font=small,
justification   = justified}
\usepackage{placeins}
\usepackage{comment}
\usepackage[normalem]{ulem} 

\newcommand{\Hilbert}{H}
\newcommand{\Projection}{P}

\newcommand{\la}{\langle}
\newcommand{\ra}{\rangle}
\newcommand{\sgn}{\operatorname{sgn}}
\newcommand{\bds}{\boldsymbol}
\newcommand{\PV}{\operatorname{PV}}
\newcommand{\ph}{\text{phys}}
\newcommand{\bbb}{\sigma}
\allowdisplaybreaks
\usepackage{setspace}

%-------------------------------------------------------------------------------

\setlist[enumerate]{labelsep=*, leftmargin=1.5pc}
\setlist[enumerate]{label=\normalfont(\roman*), ref=\roman*}
%-------------------------------------------------------------------------------

\newtheorem{theorem}{Theorem}[section]

\newtheorem{cor}[theorem]{Corollary}

%-------------------------------------------------------------------------------
\theoremstyle{definition}
\newtheorem{example}[theorem]{Example}
\newtheorem{remark}[theorem]{Remark}
\newtheorem{definition}[theorem]{Definition}

\numberwithin{equation}{section}
%-------------------------------------------------------------------------------

\newcommand{\IGNORE}[1]{}
\newcommand{\ignore}[1]{}
\newcommand{\veps}{\varepsilon}
\newcommand{\opn}{\operatorname}

\newcommand{\im}{\operatorname{Im}}

\newcommand{\phm}{\phantom{-}}
\newcommand{\mbb}[1]{\mathbb{#1}}

\newcommand{\mc}[1]{\mathcal{#1}}

\newcommand{\jd}{\displaystyle}

\newcommand{\jt}{\textstyle}

\newcommand{\pa}{\partial}
\newcommand{\wtil}{\widetilde}

%-------------------------------------------------------------------------------
\usepackage{xcolor}
\usepackage{etoolbox}
\makeatletter
\pretocmd\@bibitem{\color{black}\csname keycolor#1\endcsname}{}{\fail}
\newcommand\citecolor[1]{\@namedef{keycolor#1}{\color{blue}}}
\makeatother

%-------------------------------------------------------------------------------
%double space
%\doublespacing
%--------------------------------------------------------------------------------
\begin{document}

\author[Jon Wilkening and Xinyu Zhao]{Jon Wilkening and Xinyu Zhao}
\address{Department of Mathematics\\ University of California at
  Berkeley\\Berkeley, CA\\94720\\USA}
\email{wilkening@berkeley.edu}
\email{zhaoxinyu@berkeley.edu}
\thanks{This work was supported in part by the National Science
  Foundation under award number DMS-1716560 and by the Department of
  Energy, Office of Science, Applied Scientific Computing Research,
  under award number DE-AC02-05CH11231.}
%-------------------------------------------------------------------------------
\keywords{}
\subjclass[]{}
%-------------------------------------------------------------------------------
\title{Spatially quasi-periodic water waves of infinite depth}

\begin{abstract}
  We formulate the two-dimensional gravity-capillary water wave
  equations in a spatially quasi-periodic setting and present a
  numerical study of solutions of the initial value problem.  We
    propose a Fourier pseudo-spectral discretization of the equations
    of motion in which one-dimensional quasi-periodic functions are
    represented by two-dimensional periodic functions on a torus.  We
  adopt a conformal mapping formulation and employ a quasi-periodic
  version of the Hilbert transform to determine the normal velocity of
  the free surface. Two methods of time-stepping the initial
  value problem are proposed, an explicit Runge-Kutta (ERK)
  method and an exponential time-differencing (ETD) scheme. The
  ETD approach makes use of the small-scale decomposition to eliminate
  stiffness due to surface tension.  We perform a convergence study to
  compare the accuracy and efficiency of the methods on a traveling
  wave test problem.  We also present an example of a periodic wave
  profile containing vertical tangent lines that is set in motion with
  a quasi-periodic velocity potential.  As time evolves, each
  wave peak evolves differently, and only some of them
  overturn.  Beyond water waves, we argue that spatial
  quasi-periodicity is a natural setting to study the dynamics of
  linear and nonlinear waves, offering a third option to the usual
  modeling assumption that solutions either evolve on a periodic
  domain or decay at infinity.
\end{abstract}

\maketitle \markboth{J. WILKENING AND X. ZHAO}{SPATIALLY
  QUASI-PERIODIC WATER WAVES}

%%%%%%%%%%%%%%%%%%%%%%%%%%%%%%%%%%%%%%%%%%%%%%%%%%%%
%-------------------------------------------------------------------------------
\section{Introduction}
%-------------------------------------------------------------------------------
%%%%%%%%%%%%%%%%%%%%%%%%%%%%%%%%%%%%%%%%%%%%%%%%%%%%%

Linear and nonlinear wave equations are generally studied under the
assumption that the solution is spatially periodic or decays to zero
at infinity \cite{lax:76}. Beginning with Berenger \cite{berenger}, a
great deal of effort has been devoted to developing perfectly matched
layer (PML) techniques for imposing absorbing boundary conditions over
a finite computational domain to simulate wave propagation problems on
unbounded domains. However, in many situations, assuming the waves
decay to zero at infinity is not a realistic model. For example, a
large body of water such as the ocean is often covered in surface
waves in every direction over vast distances.  But assuming spatial
periodicity may limit one's ability to observe interesting
dynamics. In this paper, we formulate the initial value problem of the
surface water wave equations in a spatially quasi-periodic setting,
design numerical algorithms to compute such waves, and study their
properties.

Since the pioneering work of Benjamin and Feir
  \cite{benj:feir:67} and Zakharov \cite{zakharov1968stability}, it
  has been recognized that water waves exhibit interesting nonlinear
  interactions between component waves of different wavelength.  For
  example, in oceanography, modulational instabilities of periodic
  wavetrains introduce perturbations that lead to spatially
  quasi-periodic dynamics and are believed to be one of the mechanisms
  responsible for the formation of rogue waves \cite{osborne2000,
    onorato2006modulational, ablowitz2015interacting}. These
  instabilities have been studied extensively using a variety of
  techniques, summarized below, including linearization using Bloch
  stability theory, evolving the nonlinear equations on a larger
  periodic domain, developing coupled weakly nonlinear models, and
  solving weakly nonlinear models via the inverse scattering
  transform. However, it has not been known how to formulate or
  compute fully nonlinear water waves in a spatially quasi-periodic
  setting. We show that a conformal mapping formulation of the water
  wave equations, introduced by Dyachenko \emph{et al.}
  \cite{dyachenko1996analytical} and further developed by many authors
  \cite{dyachenko1996nonlinear, choi1999exact, dyachenko2001dynamics,
    zakharov2002new, li2004numerical, milewski:10, wang2013two,
    viotti2014conformal, dyachenko:newell:16, turner:bridges}, extends
  nicely to this setting via a quasi-periodic generalization of the
  Hilbert transform. Currently our method is limited to
  two-dimensional fluids, but we formulate the equations of motion and
  discuss computational challenges of 3D quasi-periodic water waves in
  Appendix~\ref{sec:3d}.

The Bloch stability approach can be carried out by linearizing
  the full water wave equations about a traveling Stokes wave
  \cite{longuet:78, mclean:82, mackay:86, oliveras:11,
    trichtchenko:16, waterTS2} or within a weakly nonlinear model such
  as the nonlinear Schr\"odinger (NLS) equation
  \cite{benney:newell:67, zakharov1968stability}.  A major drawback is
  that unstable modes grow exponentially forever and eventually leave
  the realm of validity of the linearization. Osborne
  et.~al.~\cite{osborne2000} have observed that if nonlinear effects
  are taken into account in this scenario, the perturbation often
  exhibits Fermi-Pasta-Ulam recurrence \cite{berman:FPU:2005}. They
  solve a 2+1-dimensional NLS equation on a domain that is 10 times
  larger than the wavelength of the carrier wave and look for
  rogue-wave formation and recurrence over long simulation
  times. Similarly, Bryant and Stiassnie \cite{bryant:stiassnie:94}
  study recurrence using both a weakly nonlinear model (Zakharov's
    equation) and the full water wave equations in the context of
  standing water waves when the wavelength of the subharmonic
  perturbation is 9 times that of the unperturbed standing wave. The
  main drawback of this approach is that the larger periodic
  computational domain must be an integer multiple of both the base
  wave and the perturbation, which requires that the ratio of their
  wavelengths be a rational number with a small numerator and
  denominator. We propose a method below that allows for more general
  perturbations and plan to investigate the long-time nonlinear
  dynamics of unstable subharmonic perturbations of traveling and
  standing waves in future work.

An alternative approach that does not require rationally related
  wave numbers is to model the interaction of two periodic wavetrains
  as a coupled weakly nonlinear system.  This is particularly
  useful for studying the interaction between oblique waves on the
  surface of a three-dimensional fluid. For example, Bridges and
  Laine-Pearson studied coupled NLS equations \cite{bridges2001} and
  extended the theory to analyze the stability of short-crested waves
  \cite{bridges2005}. More recently, Ablowitz and Horikis
  \cite{ablowitz2015interacting} showed that some propagation angles
  enhance the number and amplitude of rogue wave events in a coupled
  NLS water wave model.

Some weakly nonlinear models are completely integrable and can
  be studied using the inverse scattering transform (IST)
  \cite{ablowitz:segur}.  Osborne et.~al.~discuss IST results for the
  1+1-dimensional NLS equation in the context of rogue waves in
  \cite{osborne2000}.  Other equations such as the Korteweg-deVries
  and Benjamin-Ono equations are meant to model wave dynamics in
  shallow water \cite{ablowitz:segur} and internal waves in a
  stratified fluid \cite{ono75}, respectively. Solving these equations
  using the inverse scattering transform \cite{ablowitz:segur} leads to
  infinite hierarchies of exact spatio-temporal quasi-periodic
  solutions \cite{flaschka,dobro:91}.

In these and other examples involving weakly nonlinear theory,
  it is natural to seek analogous quasi-periodic solutions of the
  Euler equations in regimes where the model equations are intended to
  be accurate. It is also of interest to search for new regimes and
  behavior not predicted by model water wave equations. Even within
  weakly nonlinear theory, except for the exact quasi-periodic
  solutions obtained via the IST, spatial quasi-periodicity has only
  been approximated by embedding in a larger periodic domain or by
  introducing coupling terms between two or more single-mode NLS
  equations. The framework we propose below for water waves, which
  involves representing quasi-periodic functions as periodic functions
  on a higher dimensional torus and using a spectral method to solve a
  torus version of the equations of motion, could also be used to find
  true quasi-periodic solutions of weakly nonlinear equations without
  introducing systems of coupled equations.

Only recently have quasi-periodic dynamics of water waves been
  studied mathematically. Berti and Montalto \cite{berti2016quasi} and
  Baldi et.~al.~\cite{baldi2018time} used Nash-Moser theory to prove
  the existence of small-amplitude temporally quasi-periodic
  gravity-capillary standing waves.  With different assumptions on the
  form of solutions, Berti~et.~al.~\cite{berti2020traveling} have
  proved the existence of time quasi-periodic gravity-capillary waves
  with constant vorticity while Feola and Giuliani
  \cite{feola2020trav} have proved the existence of time
  quasi-periodic irrotational gravity waves.  New families of
relative-periodic \cite{collision} and traveling-standing
\cite{waterTS} water wave solutions have been computed by Wilkening.
As with \cite{berti2016quasi, baldi2018time, berti2020traveling,
  feola2020trav}, these solutions are quasi-periodic in time rather
than space.

Another motivation for studying spatially quasi-periodic water waves
is the work of Wilton \cite{wilton1915lxxii}, who observed that a
resonance can occur that causes Stokes' regular perturbation expansion
for traveling water waves to break down \cite{akers2012wilton,
  akers2020wilton,
  vandenBroeck:book, trichtchenko:16, akers2019periodic}. Suppose
$k_1$ and $k_2$ are both roots of the dispersion relation for
linearized water waves of infinite depth,
\begin{equation}\label{resonance_deep}
  c^2 = gk^{-1}+\tau k.
\end{equation}
Here the gravitational acceleration $g$, wave speed $c$, and
  surface tension $\tau$ are held constant when solving for the wave
  numbers $k_1$ and $k_2$. If $k_2=Kk_1$ with $K\ge2$ an integer, the
$K^\text{th}$ harmonic will enter a modified Stokes expansion for traveling
waves of wavelength $2\pi/k_1$ at order $\veps^{\text{max(K-2,1)}}$
instead of $\veps^K$, where $\veps$ is the expansion coefficient of
the fundamental mode. This resonance occurs because the two waves
travel at the same speed under the linearized water wave
equations. Bridges and Dias \cite{bridges1996spatially} consider a
generalization in which the wave numbers $k_1$ and $k_2$ are
irrationally related.  They use a spatial Hamiltonian structure to
construct weakly nonlinear approximations of spatially quasi-periodic
traveling gravity-capillary waves for two special cases: deep water
and shallow water. This inspired us to develop a conformal mapping
framework for computing spatially quasi-periodic, fully nonlinear
traveling gravity-capillary waves, which is the topic of the companion
paper \cite{quasi:trav}.

We show in \cite{quasi:trav} that these spatially quasi-periodic
  traveling waves come in two-parameter families in which the
  amplitudes of the base modes with wave numbers $k_1$ and $k_2$ serve
  as bifurcation parameters. The wave speed and surface tension depend
  nonlinearly on these parameters as well.  Akers \emph{et al.}
  \cite{akers2019periodic} have proved existence of similar
  two-parameter families of traveling waves for the case of a
  two-fluid hydro-elastic interface. They develop an integral equation
  formulation of the equations governing traveling hydro-elastic waves
  such that the linearization about any state is a compact
  perturbation of the identity and use global bifurcation theory to
  establish existence and uniqueness results. They show that the
  nullspace of the linearized operator about the flat rest state has
  dimension one or two, and is two if and only if the
  non-dimensionalized wave numbers $k_1<k_2$ that travel with a given
  speed are integers. When they are integers, Akers \emph{et al.}
  distinguish resonant and non-resonant cases depending on whether
  $k_2/k_1$ is an integer. The non-resonant case leads to a smooth
  two-parameter family of traveling waves with wave speed and surface
  tension depending nonlinearly on the amplitude parameters. This is
  the case most analogous to the spatially quasi-periodic traveling
  waves that we compute in \cite{quasi:trav}.

The present paper focuses on the more general spatially quasi-periodic
initial value problem, which we use to validate the traveling wave
computations of \cite{quasi:trav} and explore new dynamic phenomena.
In recent years, conformal mapping methods have proved useful
  for studying two-dimensional traveling
  \cite{choi1999exact, milewski:10, wang2013two, dyachenko2016branch}
  and time-dependent \cite{dyachenko1996analytical,
    dyachenko1996nonlinear, dyachenko2001dynamics, zakharov2002new,
    li2004numerical, milewski:10, dyachenko:newell:16,
    turner:bridges} water waves with periodic boundary conditions.
We introduce a Hilbert transform for
quasi-periodic functions to compute the normal velocity and maintain a
conformal parametrization of the free surface.  This leads to a
numerical method to compute the time evolution of solutions of the
Euler equations from arbitrary quasi-periodic initial data.
Following the definitions in
  \cite{moser1966theory,dynnikov2005topology}, we represent a general
  quasi-periodic function $u(\alpha)$ in one dimension by a periodic
  function $\tilde u(\bds\alpha)$ on a $d$-dimensional torus,
  i.e.~$u(\alpha)=\tilde u(\bds k\alpha)$ for $\alpha\in\mbb R$, where
  $\bds k=(k_1,\dots,k_d)$. The $k_i$ are assumed to be linearly
  independent over the integers. We take these basic wave numbers
  $k_i$ and the initial conditions on the torus as given, focusing on
  the $d=2$ case.  This leaves open the important question of how best
  to measure a one-dimensional wave profile or velocity potential and
  identify its quasi-periods and corresponding torus function.

We present two variants of the numerical method, one in a high-order
explicit Runge-Kutta framework and one in an exponential
time-differencing (ETD) framework. The former is suitable for the case
of zero or small surface tension while the latter makes use of the
small-scale decomposition \cite{HLS94,HLS01} to eliminate stiffness
due to surface tension. The conformal mapping method has not
  been implemented in an ETD framework before, even for periodic
  boundary conditions.  We present a
convergence study of the methods as well as a large-scale computation
of a quasi-periodic wave in which some of the wave crests overturn
when evolved forward in time while others do not. Due to the
  torus representation of solutions, there are infinitely many wave
  crests and no two of them evolve in exactly the same way. The
computation involves over 33 million degrees of freedom evolved over
5400 time steps to maintain double-precision accuracy.

We include four appendices that cover various technical aspects
of this work. In Appendix~\ref{sec:one:one}, we prove a theorem
establishing sufficient conditions for an analytic function $z(w)$ to
map the lower half-plane topologically onto a semi-infinite region
bounded above by a parametrized curve and for $1/|z_w|$ to be
uniformly bounded. In Appendix~\ref{sec:families}, we study families
of quasi-periodic solutions obtained by introducing phases in the
reconstruction formula for extracting 1D quasi-periodic functions from
periodic functions on a torus. This enables us to prove that if all
the solutions in the family are single-valued and have no vertical
tangent lines, the solutions are also quasi-periodic in the original
graph-based formulation of the Euler equations. We also present a
simple procedure for computing the change of variables from the
conformal representation to the graph representation.  This appears
  to be a new result even for periodic boundary conditions. In
  Appendix~\ref{sec:etd} we provide details on how to implement the
  equations of motion in an exponential time-differencing framework to
  avoid stepsize limitations due to stiffness caused by surface
  tension.  And in Appendix~\ref{sec:3d}, we discuss the equations of
  motion for spatially quasi-periodic water waves in three dimensions
  and outline possible alternatives to the conformal mapping approach.

%%%%%%%%%%%%%%%%%%%%%%%%%%%%%%%%%%%%%%%%%%%%%%%%%%%%
%-------------------------------------------------------------------------------

\section{Mathematical Formulation} \label{mathematical_formulation}

%-------------------------------------------------------------------------------
%%%%%%%%%%%%%%%%%%%%%%%%%%%%%%%%%%%%%%%%%%%%%%%%%%%%%

In this section, we review the governing equations for
gravity-capillary waves in both physical space and
  conformal space.  We then extend the
conformal mapping framework to allow for spatially quasi-periodic
solutions. For simplicity, we initially assume the wave profile
$\eta(x,t)$ remains single-valued. This assumption is relaxed when
discussing the conformal formulation, and an example of a wave in
which some of the peaks overturn as time advances is presented in
Section~\ref{sec:overturn}.

%-------------------------------------------------------------------------------

\subsection{Governing Equations in Physical Space}
\label{sec:gov:eqs}

%-------------------------------------------------------------------------------
Gravity-capillary waves of infinite depth are governed by the
two-dimensional free-surface Euler equations \cite{zakharov1968stability,CraigSulem}
\begin{equation} \label{general_initial}
  \eta(x, 0) = \eta_0(x), \qquad \varphi(x, 0) = \varphi_0(x), \qquad t=0, \quad
  x\in \mathbb{R},
\end{equation}
\begin{equation}\label{harmonic_function}
  \begin{aligned}
    \Phi_{xx} + \Phi_{yy} &= 0, \qquad -\infty < y < \eta(x, t),\\
    \Phi_y &\to 0, \qquad y \to -\infty,\\
    \Phi &= \varphi, \qquad y = \eta(x, t),
  \end{aligned}
\end{equation}
\begin{equation}\label{eta_phi_evolve}
  \eta_t = \Phi_y - \eta_x\Phi_x, \qquad y = \eta(x, t),
\end{equation}
\begin{equation}\label{Bernoulli}
  \varphi_t = \Phi_y\eta_t-\frac{1}{2}\Phi_x^2-\frac{1}{2}\Phi_y^2-
  g\eta+\tau\frac{\eta_{xx}}{(1+\eta_x^2)^{3/2}}+C(t), \qquad y = \eta(x, t),
\end{equation}
where $x$ is the horizontal coordinate, $y$ is the vertical
coordinate, $t$ is the time, $\Phi(x, y, t)$ is the velocity
potential in the fluid, $\eta(x, t)$ is the free surface
elevation,
\begin{equation}\label{eq:varphi:def}
  \varphi(x, t) = \Phi(x, \eta(x, t), t)
\end{equation}
is the boundary value of the velocity potential on the free surface,
$g$ is the vertical acceleration due to gravity and $\tau$ is the
coefficient of surface tension. Following
  \cite{zakharov1968stability, CraigSulem}, only the surface variables
  $\eta$ and $\varphi$ are evolved in time; the velocity potential
  $\Phi$ in the bulk fluid is reconstructed from $\eta$ and $\varphi$
  by solving (\ref{harmonic_function}), which causes the problem to be
  nonlocal.  The function $C(t)$ in the Bernoulli condition
(\ref{Bernoulli}) is an arbitrary integration constant that is allowed
to depend on time but not space. When the domain is periodic or
  quasi-periodic, one can choose $C(t)$ so that the mean value
of $\varphi(x,t)$ remains constant in time, where the mean is
  defined as $\lim_{a\rightarrow\infty}\frac1{2a}\int_{-a}^a \varphi(x,t)\,dx$.

%-------------------------------------------------------------------------------

\subsection{The Quasi-Periodic Hilbert Transform} \label{sec:quasi}

%-------------------------------------------------------------------------------

We find that a conformal mapping representation of the free surface
greatly simplifies the solution of the Laplace equation for the
velocity potential in the quasi-periodic setting. In this section, we
establish the properties of the Hilbert transform that will be needed
to study quasi-periodic water waves in a conformal mapping framework.

As defined in \cite{moser1966theory,dynnikov2005topology},
a quasi-periodic, real analytic function $u(\alpha)$ is a function
of the form
\begin{equation}\label{general_quasi_form}
  u(\alpha) = \tilde u(\bds{k} \alpha), \qquad
  \tilde u(\bds\alpha) = \sum_{\boldsymbol{j}\in\mathbb{Z}^d}\hat{u}_{\boldsymbol{j}}
  e^{i\la\boldsymbol{j},\,\boldsymbol{\alpha}\ra}, \qquad
  \alpha\in\mbb R, \;\; \bds\alpha,\bds k \in \mathbb{R}^d,
\end{equation}
where $\la\cdot , \,\cdot\ra$ denotes the standard inner
product in $\mathbb{R}^d$ and $\tilde u$ is a periodic, real analytic
function defined on the $d$-dimensional torus
\begin{equation}
  \mathbb{T}^d := \mbb R^d\big/(2\pi\mbb Z)^d. % [0,2\pi)^d.
\end{equation}
Entries of the vector $\boldsymbol{k}$ are called the basic wave
  numbers (or basic frequencies) of $u$ and are required to be
  linearly independent over $\mathbb{Z}$. If $\bds{k}$ is
  given, one can reconstruct the Fourier coefficients
  $\hat u_{\bds j}$ from $u$ via
\begin{equation}\label{uhat:from:u}
  \hat{u}_{\bds{j}} = \lim_{a\to\infty} \frac{1}{2a} 
  \int_{-a}^au(\alpha) e^{-i\la \bds{j}, \bds{k}\ra\alpha} d\alpha,
  \qquad \bds{j}\in\mathbb{Z}^d.
\end{equation}
A similar averaging formula holds for functions in the more general
class of almost periodic functions \cite{moser1966theory,bohr:book,
  fink:book, amerio:book, hino:almost:periodic}, which is the closure
with respect to uniform convergence on $\mbb R$ of the set of
trigonometric polynomials $p(x)=\sum_{n=1}^N c_n e^{i\kappa_nx}$.
Before taking limits to obtain the closure, this set includes
polynomials of any degree and there is no restriction on the real
numbers~$\kappa_n$.  Within the framework of almost periodic
functions, one obtains quasi-periodic functions if one assumes the
$\kappa_n$ in the approximating polynomials are integer linear
combinations of a fixed, finite set of basic wave numbers
$k_1,\dots,k_d$.

We have not attempted to formulate the water wave problem in the
  full generality of almost periodic functions, and instead assume the
  basic wave numbers are given and the torus representation
  (\ref{general_quasi_form}) is available. Thus, the average over
  $\mbb R$ on the right-hand side of (\ref{uhat:from:u}) can be
  replaced by the simpler Fourier coefficient formula
  \begin{equation}\label{uhat:from:util}
    \hat u_{\bds j} = \frac1{(2\pi)^d}\int_{\mbb T^d}
    \tilde u(\bds\alpha)e^{-i\la\bds
      j,\bds\alpha\ra} \,d\alpha_1\cdots\, d\alpha_d.
  \end{equation}
  Our assumption that $\tilde u(\bds\alpha)$ is real analytic is
  equivalent to the conditions that $\hat u_{-\bds j}=\overline{\hat
    u_{\bds j}}$ for $\bds j\in\mbb Z^d$ and there exist positive
  numbers $M$ and $\sigma$ such that $|\hat u_{\bds j}|\le
  Me^{-\sigma\|\bds j\|}$, i.e.~the Fourier modes $\hat{u}_{\bds{j}}$
  decay exponentially as $\|\bds j\|\rightarrow\infty$. This
  is proved e.g.~in Lemma 5.6 of \cite{broer:book}.

Next we define the projection operators $\Projection$ and
$\Projection_0$ that act on $u$ and $\tilde u$ via
\begin{equation}\label{eq:proj}
  \Projection = \operatorname{id} - \Projection_0, \qquad
  \Projection_0 [u] = \Projection_0 [\tilde u] = \hat{u}_{\boldsymbol{0}}
  = \frac{1}{(2\pi)^d} \int_{\mathbb{T}^d}
  \tilde u(\boldsymbol{\alpha})\, d\alpha_1\cdots\, d\alpha_d.
\end{equation}
Note that $\Projection$ projects onto the space of zero-mean functions
while $\Projection_0$ returns the mean value, viewed as a constant
function on $\mathbb R$ or $\mathbb T^d$. There
  are two versions of $P$ and $P_0$, one acting on quasi-periodic
  functions defined on $\mbb R$ and one acting on torus functions
  defined on $\mbb T^d$.

Given $u(\alpha)$ as in (\ref{general_quasi_form}), the most
  general bounded analytic function $f(w)$ in the lower half-plane
  whose real part agrees with $u$ on the real axis has the form
\begin{equation}\label{eq:f:from:u}
  f(w) = \hat u_{\bds 0} + i\hat v_{\bds 0} + \sum_{\la\bds j,\bds k\ra<0}
  2 \hat u_{\bds j}e^{i\la\bds j,\bds k\ra w}, \qquad (w=\alpha+i\beta\,,\; \beta\le0)
\end{equation}
where $\hat v_{\bds 0}\in\mbb R$ and the sum is over all $\bds
j\in\mbb Z^d$ satisfying $\la\bds j,\bds k\ra<0$. The imaginary part
of $f(z)$ on the real axis is given by
\begin{equation}\label{eq:imf:v}
  v(\alpha) = \tilde v(\bds k\alpha), \qquad \tilde v(\bds\alpha) =
  \sum_{\bds j\in\mathbb{Z}^d} \hat{v}_{\bds j}e^{i\la\bds j,\bds
    \alpha\ra}, \qquad \hat{v}_{\bds j} = i\sgn(\la\bds j,\bds
    k\ra)\hat u_{\bds j}, \quad (\bds j\ne0)
\end{equation}
where $\sgn(q)\in\{1,0,-1\}$ depending on whether $q>0$, $q=0$ or
$q<0$, respectively. Similarly, given $v(\alpha)$ and requiring
$(\im f)\vert_\mbb R=v$ yields (\ref{eq:f:from:u}) with $\hat u_{\bds
  j}$ replaced by $i\hat v_{\bds j}$. We introduce a quasi-periodic
Hilbert transform to compute $v$ from $u$ or
$u$ from $v$,
\begin{equation}\label{eq:HuHv}
  v = \hat v_{\bds 0} - H[u], \qquad\quad
  u = \hat u_{\bds 0} + H[v],
\end{equation}
where the constant $\hat v_{\bds 0}=P_0[v]$ or $\hat u_{\bds
  0}=P_0[u]$ is a free parameter when computing $v$ or $u$,
respectively. $\Hilbert$ returns the ``zero-mean'' solution,
i.e.~$\Projection_0\Hilbert[u]=0$.  Uniqueness of the bounded
extension from $u$ or $v$ to $f$ up to the additive constant $i\hat
v_{\bds 0}$ or $\hat u_{\bds 0}$ follows from two well-known results:
the only bounded solution of the Laplace equation on a half-space
satisfying homogeneous Dirichlet boundary conditions is identically
zero \cite{axler:harmonic}, and the harmonic conjugate of the zero
function on a connected domain is constant.

\begin{definition}
  The Hilbert transform of a quasi-periodic, analytic function
  $u(\alpha)$ of the form (\ref{general_quasi_form}) is defined to be
  \begin{equation}\label{eq:hilb:def}
    \Hilbert[u](\alpha) = \sum\limits_{\boldsymbol{j}\in\mathbb{Z}^d}
            (-i)\sgn(\la\boldsymbol{j},\,\boldsymbol{k}\ra)
            \hat{u}_{\boldsymbol{j}}
            e^{i \la\boldsymbol{j},\,\boldsymbol{k}\ra\alpha}.
  \end{equation}
\end{definition}

  This agrees with the standard definition \cite{dyachenko2016branch}
  of the Hilbert transform as a Cauchy principal value integral:
  \begin{equation}\label{eq:Hf:PV}
    \Hilbert[u](\alpha) = \frac{1}{\pi}\PV
    \int_{-\infty}^\infty\frac{u(\xi)}{\alpha-\xi}\,d\xi.
  \end{equation}  
  Indeed, it is easy to show that for functions of the form $u(\alpha)
  = e^{i\rho \alpha}$ with $\rho$ real, the integral in
  (\ref{eq:Hf:PV}) gives $\Hilbert[u](\alpha) =
  -i\sgn(\rho)e^{i\rho\alpha}$.
  For extensions to the \emph{upper}
  half-plane, the sum in (\ref{eq:f:from:u}) is over $\la\bds j,\bds
  k\ra>0$, the last formula in (\ref{eq:imf:v}) becomes $\hat{v}_{\bds
    j} = -i\sgn(\la\bds j,\bds k\ra)\hat u_{\bds j}$, and the signs in
  front of $\Hilbert[u]$ and $\Hilbert[v]$ in (\ref{eq:HuHv}) are
  reversed.

\begin{remark}\label{rmk:f:quasi}
  As with $P$ and $P_0$, there is an analogous operator on $L^2(\mbb
    T^d)$ such that $\Hilbert[u](\alpha)=H[\tilde u](\bds k\alpha)$.
  The formula is
  \begin{equation}
    \Hilbert[\tilde u](\bds\alpha) = \sum\limits_{\boldsymbol{j}\in\mathbb{Z}^d}
            (-i)\sgn(\la\boldsymbol{j},\,\boldsymbol{k}\ra)
            \hat{u}_{\boldsymbol{j}}
            e^{i \la\boldsymbol{j},\,\boldsymbol{\alpha}\ra}.
  \end{equation}
  If necessary for clarity, one can also write $\Hilbert_{\bds
    k}[\tilde u]$ to emphasize the dependence of $\Hilbert$ on $\bds
  k$.  $\Hilbert$ commutes with the shift operator
  $S_{\bds\theta}[\tilde u](\bds\alpha) = \tilde
  u(\bds\alpha+\bds\theta)$, so if $\tilde v=\hat v_{\bds 0} -
  \Hilbert[\tilde u]$ and $\hat u_{\bds 0}=P_0[\tilde u]$, then
  $v(\alpha;\bds\theta)=\tilde v(\bds k\alpha+\bds\theta)$ is related
  to $u(\alpha;\bds\theta)=\tilde u(\bds k\alpha+\bds\theta)$ by
  (\ref{eq:HuHv}). Also, if $f(z)$ in (\ref{eq:f:from:u}) is the
  bounded analytic extension of $(u+iv)(\alpha)=\tilde u(\bds
      k\alpha) +i\tilde v(\bds k\alpha)$ to the lower half-plane, we
  have
  \begin{equation}\label{eq:f:tilde}
    f(w) = \tilde f(\bds k\alpha,\beta), \qquad
    (w=\alpha+i\beta\,,\;\beta\le0),
  \end{equation}
  where $\tilde f(\bds\alpha,\beta) = \hat u_{\bds0} + i\hat v_{\bds0} +
  \sum_{\la\bds j,\bds k\ra<0} 2[\hat u_{\bds j}e^{-\la\bds j,\bds k\ra\beta}]
  e^{i\la\bds j,\bds\alpha\ra}$ is periodic in $\bds\alpha$
  for fixed $\beta\le0$.  The bounded analytic extension of
  $[u(\alpha;\bds\theta)+iv(\alpha;\bds\theta)]$ to the lower
  half-plane is then given by $f(\alpha+i\beta;\bds\theta)=
  \tilde f(\bds k\alpha+\bds\theta,\beta)$.
\end{remark}

%-------------------------------------------------------------------------------

\subsection{The Conformal Mapping} \label{the_conformal_mapping}

%-------------------------------------------------------------------------------

We consider a time dependent conformal mapping that maps the conformal
domain
\begin{equation}
  \mbb C^- = \{\alpha+i\beta\,:\, \alpha\in\mbb R,\,\beta<0\}
\end{equation}
to the fluid domain
\begin{equation}
  \Omega(t) = \{x + iy\,:\, x\in\mbb R,\,y<\eta(x,t)\}.
\end{equation}
This conformal mapping, denoted by $z(w, t)$, is assumed to
extend continuously to $\overline{\mbb C^-}$ and maps the real line
$\beta = 0$ to the free surface
\begin{equation}
  \Gamma(t) = \{x + iy\,:\, y = \eta (x, t)\}.
\end{equation}
We express $z(w, t)$ as
\begin{equation} \label{conformal_mapping}
  z(w, t) = x(w, t) + i y(w, t), \qquad\quad (w=\alpha+i\beta).
\end{equation}
We also introduce the notation $\zeta=z\vert_{\beta=0}$,
$\xi=x\vert_{\beta=0}$ and $\eta=y\vert_{\beta=0}$ so that the free
surface is parametrized by
\begin{equation}\label{zeta:xi:eta}
  \zeta (\alpha, t) = \xi(\alpha, t) + i \eta(\alpha, t), \qquad\quad
  (\alpha\in\mathbb R).
\end{equation}
This allows us to denote a generic field point in the physical fluid
by $z=x+iy$ while simultaneously discussing points $\zeta=\xi+i\eta$
on the free surface. To avoid ambiguity, we will henceforth denote
the free surface elevation function from the previous section by
$\eta^\ph(x, t)$. Thus,
\begin{equation}\label{eta_chain_rule}
  \eta(\alpha,t) = \eta^\ph(\xi(\alpha,t),t), \qquad
  \eta_\alpha = \eta^\ph_x\xi_\alpha, \qquad \eta_t =
  \eta^\ph_x\xi_t + \eta^\ph_t.
\end{equation}
The parametrization (\ref{zeta:xi:eta}) is more general than
(\ref{eta_chain_rule}) in that it allows for overturning waves. In
deriving the equations of motion for $\zeta(\alpha,t)$ and
$\varphi(\alpha,t)$ in Section~\ref{sec:gov:conf} below, we will
indicate the modifications necessary to handle the case of overturning
waves. In particular, as discussed in Appendix~\ref{sec:one:one},
$\Gamma(t)$ is defined in this case as the image of $\zeta(\cdot,t)$,
which is assumed to be injective on $\mbb R$, and $\Omega(t)$ can be
obtained from $\Gamma(t)$ using the Jordan curve theorem.

The conformal map is required to remain a bounded distance
from the identity map in the lower half-plane. Specifically, we
require that
\begin{equation}\label{y_boundary}
  |z(w,t)-w|\le M(t) \quad\qquad (w=\alpha+i\beta\,,\; \beta\le0),
\end{equation}
where $M(t)$ is a uniform bound that could vary in time.  The
Cauchy integral formula implies that $|z_w-1|\le M(t)/|\beta|$, so
at any fixed time,
\begin{equation}\label{eq:zw:lim}
  z_w\rightarrow1 \quad \text{ as } \quad \beta\to-\infty.
\end{equation}
Our goal is to investigate the case when the free surface is
quasi-periodic in $\alpha$. This differs from conformal mappings
discussed in \cite{meiron1981applications, dyachenko1996analytical,
  dyachenko2001dynamics,
  zakharov2002new, li2004numerical, milewski:10}, where it is assumed
to be periodic.

In the present work, $\eta$ is assumed to have two spatial
quasi-periods, i.e.~at any time it has the form
(\ref{general_quasi_form}) with $d=2$ and $\bds{k}=[k_1,k_2]^T$. Since
$k_1$ and $k_2$ are irrationally related, we assume without loss of
generality that $k_1=1$ and $k_2 = k$, where $k$ is irrational:
\begin{equation}\label{eq:eta:tilde}
    \eta(\alpha,t) = \tilde\eta(\alpha,k\alpha,t), \qquad
    \tilde\eta(\alpha_1,\alpha_2,t) = \sum_{j_1, j_2\in\mbb Z}
  \hat{\eta}_{j_1, j_2}(t)e^{i(j_1\alpha_1+j_2\alpha_2)}.
\end{equation}
Here $\hat{\eta}_{-j_1, -j_2}(t) = \overline{\hat{\eta}_{j_1,j_2}(t)}$
since $\tilde\eta(\alpha_1,\alpha_2,t)$ is real-valued. Since
$w\mapsto[z(w,t)-w]$ is bounded and analytic on $\mbb C^-$ and its imaginary
part agrees with $\eta$ on the real axis, there is a real number $x_0$
(possibly depending on time) such that
\begin{equation}\label{eq:xi:from:eta}
  \xi(\alpha, t) = \alpha + x_0(t) + \Hilbert[\eta](\alpha, t).
\end{equation}
Using (\ref{eq:zw:lim}) and\, $\im [z_w\vert_{\beta=0}]=\eta_\alpha$\, or
differentiating (\ref{eq:xi:from:eta}) gives
\begin{equation}\label{hilbert_xi}
  \xi_\alpha(\alpha, t) = 1 + \Hilbert[\eta_\alpha](\alpha, t).
\end{equation}
We use a tilde to denote the periodic functions on the torus
that correspond to the quasi-periodic parts of $\xi$, $\zeta$ and
$z$,
\begin{equation}\label{eq:xi:zeta}
  \begin{gathered}
    \xi(\alpha,t) = \alpha + \tilde\xi(\alpha,k\alpha,t), \qquad
    \zeta(\alpha,t) = \alpha + \tilde\zeta(\alpha,k\alpha,t), \\
    z(\alpha+i\beta,t)=\left(\alpha+i\beta \right)+
    \tilde z(\alpha,k\alpha,\beta,t),
    \qquad (\beta\le 0).
  \end{gathered}
\end{equation}
Specifically, $\tilde\xi = x_0(t) + \Hilbert[\tilde\eta]$,
$\tilde\zeta= \tilde\xi + i\tilde\eta$, and
\begin{equation}\label{eq:z:tilde}
  \tilde z(\alpha_1,\alpha_2,\beta,t) =
  x_0(t)+i\hat\eta_{0,0}(t)+\sum_{j_1+j_2k<0}
  \left(2i\hat\eta_{j_1,j_2}(t)e^{-(j_1+j_2k)\beta}\right)
  e^{i(j_1\alpha_1+j_2\alpha_2)}.
\end{equation}
While the mean surface height remains constant in physical space,
$\hat\eta_{0,0}(t)$ generally varies in time. Since the modes
$\hat\eta_{j_1,j_2}$ are assumed to decay exponentially, there is a
uniform bound $M(t)$ such that $|\tilde
z(\alpha_1,\alpha_2,\beta,t)|\le M(t)$ for $(\alpha_1,\alpha_2)\in\mbb
T^2$ and $\beta\le 0$.  In Appendix~\ref{sec:one:one} we show that as
long as the free surface $\zeta(\alpha,t)$ does not self-intersect at
a given time $t$, the mapping $w\mapsto z(w,t)$ is an analytic
isomorphism of the lower half-plane onto the fluid region.

\subsection{The Complex Velocity Potential}
\label{sec:cx:vel}

Let $\Phi^\ph(x,y,t)$ denote the velocity potential in physical space
from Section~\ref{sec:gov:eqs} above and let
$W^\ph(x+iy,t)= \Phi^\ph(x,y,t)+i\Psi^\ph(x,y,t)$ be the
complex velocity potential, where $\Psi^\ph$ is the stream function.
Using the conformal mapping (\ref{conformal_mapping}), we pull
back these functions to the lower half-plane and define
\begin{equation*}
  W(w,t) = \Phi(\alpha,\beta,t)+i\Psi(\alpha,\beta,t) = W^\ph(z(w,t),t), \qquad\quad
  (w=\alpha+i\beta).
\end{equation*}
We also define $\varphi=\Phi\vert_{\beta=0}$ and
$\psi=\Psi\vert_{\beta=0}$ and use (\ref{harmonic_function}) and
(\ref{eta_chain_rule}) to obtain
\begin{equation}\label{eq:phi:tilde}
  \varphi(\alpha,t) = \varphi^\ph(\xi(\alpha,t),t), \qquad
  \psi(\alpha,t) = \psi^\ph(\xi(\alpha,t),t),
\end{equation}
where $\psi^\ph(x,t)=\Psi^\ph(x,\eta^\ph(x,t),t)$.  We assume
$\varphi$ is quasi-periodic with the same quasi-periods as $\eta$,
\begin{equation*}
  \varphi(\alpha, t) = \tilde\varphi(\alpha,k\alpha,t), \qquad
  \tilde\varphi(\alpha_1,\alpha_2,t) =
  \sum_{j_1, j_2\in\mathbb{Z}}
  \hat{\varphi}_{j_1, j_2}(t) e^{i(j_1\alpha_1+j_2\alpha_2)}.
\end{equation*}
The fluid velocity $\nabla\Phi^\ph(x,y,t)$ is assumed to decay to zero
as $y\rightarrow-\infty$ (since we work in the lab frame).  From
(\ref{eq:zw:lim}) and the chain rule (see (\ref{chain_rule}) below),
$dW/dw\rightarrow0$ as
$\beta\to-\infty$. Thus, $\psi_\alpha = -H[\varphi_\alpha]$. Writing
this as $\pa_\alpha[\psi+H\varphi]=0$, we conclude that
\begin{equation}\label{hilbert_phi}
  \psi(\alpha, t) = -\Hilbert[\varphi](\alpha, t).
\end{equation}
Here we have set the integration constant to zero and assumed
$\Projection_0[\varphi] =
\hat{\varphi}_{0,0}(t) = 0$ and $\Projection_0[\psi] =
\hat{\psi}_{0,0}(t) = 0$, which is allowed
since $\Phi$ and $\Psi$ can be modified by additive constants (or
  functions of time only) without affecting the fluid motion.

%-------------------------------------------------------------------------------

\subsection{Governing Equations in Conformal Space}
\label{sec:gov:conf}

%-------------------------------------------------------------------------------

Following
  \cite{dyachenko1996analytical,choi1999exact,zakharov2002new,
    li2004numerical,viotti2014conformal,turner:bridges}, we present a
derivation of the equations of motion for surface water waves in a
conformal mapping formulation, modified as needed to handle
quasi-periodic solutions. We also justify the assumption that
  $z_t/z_\alpha$ remains bounded in the lower half-plane, which we
  have not seen discussed previously in the literature.

From the chain rule,
\begin{equation} \label{chain_rule}
  \frac{dW}{dw} = \frac{dW^\ph}{dz}\cdot\frac{dz}{dw} \qquad \Rightarrow \qquad
  \Phi^\ph_x + i \Psi^\ph_x = \frac{\Phi_\alpha + i\Psi_\alpha}{x_\alpha+iy_\alpha}.
\end{equation}
Evaluating (\ref{chain_rule}) on the free surface gives
\begin{equation} \label{chain_rule_surface}
  \Phi^\ph_x = \frac{\varphi_\alpha \xi_\alpha + \psi_\alpha \eta_\alpha}J, \qquad
  \Phi^\ph_y = -\Psi^\ph_x = \frac{\varphi_\alpha \eta_\alpha - \psi_\alpha \xi_\alpha}J,
  \qquad  J = \xi_\alpha^2 + \eta_\alpha^2.
\end{equation}
Using (\ref{eta_chain_rule}) and (\ref{chain_rule_surface})
in (\ref{eta_phi_evolve}) and multiplying by $\xi_\alpha$, we obtain
\begin{equation}\label{kinematic_conformal_1}
  \eta_t \xi_\alpha- \xi_t\eta_\alpha = -\psi_\alpha.
\end{equation}
This states that the normal velocity of the free surface is equal to
the normal velocity of the fluid,
$\bds{\hat n}\cdot(\xi_t,\eta_t) = \bds{\hat{n}}\cdot\nabla\Phi^\ph$,
where $\bds{\hat n} = (-\eta_\alpha,\xi_\alpha)/\sqrt{J}$. This can
also be obtained by tracking a fluid particle $x_p(t)+iy_p(t)=
\zeta(\alpha_p(t),t)$ on the free surface. We have $\dot
x_p=\xi_\alpha\dot\alpha_p+\xi_t=\Phi^\ph_x$ and $\dot
y_p=\eta_\alpha\dot\alpha_p+\eta_t=\Phi^\ph_y$, which leads to
(\ref{kinematic_conformal_1}) after eliminating $\dot\alpha_p$.
This argument does not assume the free surface is a graph,
i.e.~(\ref{kinematic_conformal_1}) is also valid for overturning
waves.

Next we define a new function,
\begin{equation}
  q:= \frac{\zeta_t}{\zeta_\alpha} = \frac{(\xi_t\xi_\alpha+\eta_t\eta_\alpha) +
    i (\eta_t\xi_\alpha-\xi_t\eta_\alpha)}J = \frac{
  (\xi_t\xi_\alpha+\eta_t\eta_\alpha) - i \psi_\alpha}J.
\end{equation}
Since $q$ is quasi-periodic in $\alpha$ and extends analytically to
the lower half-plane via $z_t/z_\alpha$, the real and imaginary part of
$q$ can be related by the Hilbert transform. Here we have assumed that
$z_t/z_\alpha$ is bounded, which will be justified below. Thus,
\begin{equation}\label{kinematic_conformal_2}
  \frac{\xi_t\xi_\alpha+\eta_t\eta_\alpha}J =
  -\Hilbert\left[\frac{\psi_\alpha}{J}\right] + C_1,
\end{equation}
where $C_1$ is an arbitrary integration constant that may depend on
time but not space. Let $\bds{\hat s} =
(\xi_\alpha,\eta_\alpha)/\sqrt{J}$ denote the unit tangent vector to
the curve. Equation (\ref{kinematic_conformal_2}) prescribes the
tangential velocity\, $\bds{\hat s}\cdot (\xi_t,\eta_t)$\, of points
on the curve in terms of the normal velocity in order to maintain a
conformal parametrization. Note that the tangent velocity of the curve
differs from that of the underlying fluid particles. This is similar
in spirit to a method of Hou, Lowengrub and Shelley
\cite{HLS94,HLS01}, who proposed a tangential velocity that maintains
a uniform parametrization of the curve (rather than a conformal one);
see also \cite{choi1999exact, turner:bridges, da:jm:jw:laypot,
  quasi:trav}.  Combining (\ref{kinematic_conformal_1}) and
(\ref{kinematic_conformal_2}), we obtain the kinematic boundary
conditions in conformal space,
\begin{equation} \label{kinematic_conformal_3}
  \begin{pmatrix}
    \xi_t \\
    \eta_t
  \end{pmatrix}
  = 
  \begin{pmatrix}
    \xi_\alpha & -\eta_\alpha\\
    \eta_\alpha & \xi_\alpha
  \end{pmatrix}
  \begin{pmatrix}
    -\Hilbert\left[\frac{\psi_\alpha}{J}\right] + C_1\\
    -\frac{\psi_\alpha}{J}
  \end{pmatrix}.
\end{equation}
The right-hand side can be interpreted as complex multiplication of
$z_\alpha$ with $z_t/z_\alpha$. Since both functions are analytic in
the lower half-plane, their product is, too. Thus, $\xi_t$ is related
to $\eta_t$ via the Hilbert transform (up to a constant).
The constant is determined by comparing (\ref{eq:xi:from:eta}) with
(\ref{kinematic_conformal_3}), which gives
\begin{equation}\label{eq:x0:evol}
  \frac{dx_0}{dt} = \Projection_0\left[
    \xi_\alpha\left(-\Hilbert\left[\frac{\psi_\alpha}J\right]
      + C_1\right) + \frac{\eta_\alpha\psi_\alpha}J\right].
\end{equation}
The three most natural choices of $C_1$ are
\begin{equation}\label{eq:C1:opt}
  \begin{alignedat}{2}
    (a) \;\; C_1&=0: & &
    \text{evolve $x_0(t)$ via (\ref{eq:x0:evol})}, \\
    (b) \;\; C_1&=\Projection_0\big[\xi_\alpha\Hilbert[\psi_\alpha/J] -
      \eta_\alpha\psi_\alpha/J\big]:& \qquad
    &x_0(t)=0, \\
    (c) \;\; C_1&=\big[\Hilbert[\psi_\alpha/J] -
      \eta_\alpha\psi_\alpha/(\xi_\alpha J)\big]_{\alpha=0}:& \qquad
    &\xi(0,t) = 0.
  \end{alignedat}
\end{equation}
In options ($b$) and ($c$), the evolution equation ensures that
$dx_0/dt=0$ and $\xi_t(0,t)=0$, respectively; we have assumed the
initial conditions satisfy $x_0(0)=0$ or $\xi(0,0)=0$,
respectively. Option ($c$) amounts to setting
$x_0(t)=-\Hilbert[\eta](0,t)$ in (\ref{eq:xi:from:eta}). This arguably
leads to the most natural parametrization, but would have a problem if
the vertical part of an overturning wave crosses $\alpha=0$. Indeed,
such a crossing would lead to $\xi_\alpha(0,t)=0$ at some time $t$ in
the denominator of (\ref{eq:C1:opt}c). We recommend option ($b$)
in this scenario.

\begin{remark}\label{rmk:ab:same}
  In infinite depth as considered here, if $\tilde f$ and $\tilde g$
  are torus functions on $\mbb T^d$ and $f(\alpha)=\tilde f(\bds
    k\alpha)$, $g(\alpha)=\tilde g(\bds k\alpha)$, the identity
  \begin{equation}\label{eq:H:ident}
    P_0\big[ fg - (Hf)(Hg)\big] = \sum_{\la\bds j,\bds k\ra=0} \hat
    f_{\bds j}\hat g_{-\bds j}
  \end{equation}
  is easily proved, where the sum is over $\bds j\in\mbb Z^d$
  satisfying $\la\bds j,\bds k\ra=0$. In the periodic case with $d=1$
  and $\bds k=(1)$ or the quasi-periodic case with $\bds
  k=(k_1,\dots,k_d)$ and the $k_i$ linearly independent over
  the integers, the
  right-hand side of (\ref{eq:H:ident}) is $\hat f_{\bds 0} \hat
  g_{\bds 0}$. This simplifies (\ref{eq:C1:opt}b) to $C_1=0$ and
  (\ref{eq:x0:evol}) to $dx_0/dt=C_1$, i.e.~cases (a) and (b) in
  (\ref{eq:C1:opt}) coincide.  However, this only works in infinite
  depth as the finite-depth Hilbert transform with symbol
  $-i\tanh\big(\la\bds j,\bds k\ra h\big)$ does not satisfy
  (\ref{eq:H:ident}). Here $h$ is the fluid depth in conformal space,
  which evolves in time to maintain constant fluid depth in physical
  space \cite{li2004numerical,turner:bridges}. Finite-depth
  quasi-periodic water waves
  will be investigated in future work.
\end{remark}

Next we evaluate the Bernoulli equation\, $\Phi^\ph_t +
\frac12\big|\nabla\Phi^\ph\big|^2 + \frac p\rho + gy = C_2$\, at the
free surface to obtain an evolution equation for $\varphi(\alpha,t)$.
Here $C_2$ is an arbitrary integration constant that may depend on
time but not space.  The pressure at the free surface is determined by
the Laplace-Young condition, $p=p_0-\rho\tau\kappa$, where $\kappa$ is
the curvature, $\rho\tau$ is the surface tension, and $p_0$ is a
constant that can be absorbed into $C_2$ and set to zero.  From
(\ref{chain_rule}) or (\ref{chain_rule_surface}), we know
$\big|\nabla\Phi^\ph\big|^2=(\varphi_\alpha^2+\psi_\alpha^2)/J$ on the
free surface. Finally, differentiating
$\varphi(\alpha,t)=\Phi^\ph(\xi(\alpha,t),\eta(\alpha,t),t)$ and using
(\ref{chain_rule_surface}) and (\ref{kinematic_conformal_3}), we
obtain
\begin{equation}\label{eq:bernoulli:conf}
  \varphi_t = \underbrace{\big(\Phi^\ph_x\,,\,\Phi^\ph_y\big)\begin{pmatrix}
    \xi_\alpha & -\eta_\alpha\\
    \eta_\alpha & \xi_\alpha
  \end{pmatrix}}_{\jd\big(\varphi_\alpha\,,\,-\psi_\alpha\big)}
  \begin{pmatrix}
    -\Hilbert\left[\psi_\alpha/J\right] + C_1\\
    -\psi_\alpha/J
  \end{pmatrix}
  - \frac{\varphi_\alpha^2+\psi_\alpha^2}{2J} - g\eta + \tau\kappa + C_2.
\end{equation}
We choose $C_2$ so that $\Projection_0[\varphi_t]=0$.  In conclusion,
we obtain the following governing equations for spatially
quasi-periodic gravity-capillary waves in conformal space
\begin{equation} \label{general_conformal}
  \begin{gathered}
    \xi_\alpha = 1 + \Hilbert[\eta_\alpha], \qquad
    \psi = -\Hilbert[\varphi], \qquad
    J = \xi_\alpha^2 + \eta_\alpha^2, \qquad
    \chi = \frac{\psi_\alpha}{J}, \\[2pt]
    \text{
        choose $C_1$, e.g.~as in (\ref{eq:C1:opt}), \qquad
        compute\, $\jd\frac{dx_0}{dt}$\, in
      (\ref{eq:x0:evol}) if necessary,} \\[2pt]
    \eta_t = -\eta_\alpha \Hilbert[\chi] - \xi_\alpha\chi +
    C_1\eta_\alpha, \qquad
    \kappa = \frac{\xi_\alpha\eta_{\alpha\alpha} -
      \eta_\alpha\xi_{\alpha\alpha}}{J^{3/2}}, \\[-4pt]
    \varphi_t = P\bigg[\frac{\psi_\alpha^2 - \varphi_\alpha^2}{2J} -
    \varphi_\alpha\Hilbert[\chi] + C_1\varphi_\alpha - g\eta +
    \tau\kappa\bigg].
  \end{gathered}
\end{equation}
Note that these equations govern the evolution of $x_0$, $\eta$ and
$\varphi$, which determine the state of the system. The functions
$\xi$, $\psi$, $J$, $\chi$ and $\kappa$ are determined at any moment
by $x_0$, $\eta$ and $\varphi$ through the auxiliary equations in
(\ref{general_conformal}).  We emphasize that $C_1$ can be chosen
arbitrarily as long as $dx_0/dt$ satisfies (\ref{eq:x0:evol}). The
special cases (\ref{eq:C1:opt}b) and (\ref{eq:C1:opt}c) lead to nice
formulas for $x_0(t)$ without having to evolve (\ref{eq:x0:evol})
numerically.  An alternative approach was proposed by
  Li \emph{et al.} \cite{li2004numerical}, who set $C_1=0$
  (by not introducing it) and avoid writing down a differential
  equation for $x_0(t)$ by instead solving both the $\xi_t$ and
  $\eta_t$ equations in (\ref{kinematic_conformal_3}).

In deriving (\ref{general_conformal}) from
(\ref{kinematic_conformal_1}) and (\ref{eq:bernoulli:conf}), we had to
assume $z_t/z_\alpha$ remains bounded in the lower
half-plane. Conditions that ensure the boundedness of $1/z_\alpha$ are
given in Appendix~\ref{sec:one:one}. We note that $z_t/z_\alpha$ is
automatically bounded in the converse direction, where
(\ref{kinematic_conformal_1}) and (\ref{eq:bernoulli:conf}) are
derived from (\ref{general_conformal}). In more detail, when solving
(\ref{general_conformal}), $z_t/z_\alpha$ is constructed first,
before~$z_t$, as the bounded extension of the quasi-periodic function
with imaginary part $(-\psi_\alpha/J)$ to the lower
half-plane. Equation (\ref{kinematic_conformal_3}) then defines $z_t$
as the product of this function by $z_\alpha$, which is also bounded
since $\xi_\alpha=1+\Hilbert[\eta_\alpha]$.  Thus, the first component
of each side of (\ref{kinematic_conformal_3}) is related to the
corresponding second component by the Hilbert transform, up to a
constant. Since the second components are equal (i.e.~the $\eta_t$
  equation holds), the $\xi_t$ equation also holds --- the constants
are accounted for by (\ref{eq:x0:evol}).  Left-multiplying
(\ref{kinematic_conformal_3}) by the row vector
$[-\eta_\alpha,\xi_\alpha]$ gives the kinematic condition
(\ref{kinematic_conformal_1}), as required.

Equations (\ref{general_conformal}) break down if $J$ becomes zero
somewhere on the curve. Such a singularity would arise, for example,
if the wave profile were to form a corner in finite time. To our
knowledge, it remains an open question whether the free-surface Euler
equations can form such a corner.

Often we wish to verify that a given curve
$\zeta(\alpha,t)=\xi(\alpha,t)+i\eta(\alpha,t)$ and velocity potential
$\varphi(\alpha,t)$ satisfy the conformal version of the water wave
equations. We say that $(\zeta,\varphi)$ satisfy
(\ref{general_conformal}) if $\xi$ and $\eta$ remain conformally
related via (\ref{eq:xi:from:eta}), which determines $x_0(t)$, and if
$x_0$, $\eta$ and $\varphi$ satisfy (\ref{general_conformal}) with
$C_1(t)$ obtained from (\ref{eq:x0:evol}) using $P_0[\xi_\alpha]=1$.
As noted above, these equations imply the kinematic condition
(\ref{kinematic_conformal_1}) and Bernoulli equation
(\ref{eq:bernoulli:conf}). If necessary, one should replace the given
$\varphi(\alpha,t)$ by $\Projection[\varphi(\cdot,t)]$ before checking
that (\ref{general_conformal}) is satisfied.

\begin{remark}\label{rmk:torus}
  Equations (\ref{general_conformal}) can be interpreted as an
  evolution equation for the functions
  $\tilde\zeta(\alpha_1,\alpha_2,t)$ and
  $\tilde\varphi(\alpha_1,\alpha_2,t)$ on the torus $\mbb T^2$.  The
  $\alpha$-derivatives are replaced by the directional derivatives
  $(\pa_{\alpha_1}+k\pa_{\alpha_2})$, which we still denote by a
    subscript $\alpha$, e.g.~$\tilde\eta_\alpha =
    (\pa_{\alpha_1}+k\pa_{\alpha_2})\tilde\eta$, and, as noted in
  Remark~\ref{rmk:f:quasi} above, the Hilbert transform becomes a
  two-dimensional Fourier multiplier operator with symbol
  $(-i)\sgn(j_1+j_2k)$. The pseudo-spectral method we propose in
  Section~\ref{sec:num} below is based on this
  representation. Equation (\ref{eq:C1:opt}c) becomes
  \begin{equation}\label{eq:C1:opt2}
    C_1 = \left[\Hilbert\left[\frac{\tilde\psi_\alpha}{\tilde J}\right] -
      \frac{\tilde\eta_\alpha\tilde\psi_\alpha}{
        (1+\tilde\xi_\alpha)\tilde J}
      \right]_{(\alpha_1,\alpha_2)=(0,0)}, \qquad\quad
    \tilde\xi(0,0,t)=0,
  \end{equation}
  where $\tilde J=(1+\tilde\xi_\alpha)^2 + \tilde\eta_\alpha^2$.
  Note that $\xi_\alpha$ in (\ref{general_conformal}) is replaced
  by
  \begin{equation}
    \widetilde{\xi_\alpha} = 1 + \tilde\xi_\alpha,
  \end{equation}
  which is the one place this notation becomes awkward.  Using
  (\ref{eq:xi:from:eta}) and (\ref{eq:xi:zeta}), $\tilde\zeta$ is
  completely determined by $x_0(t)$ and $\tilde\eta$, so only these
  have to be evolved --- the formula for $\tilde\xi_t$ in
  (\ref{kinematic_conformal_3}) is redundant as long as
  (\ref{eq:x0:evol}) is satisfied. If both components of
  $\tilde\zeta(\alpha_1,\alpha_2,t)$ are given, we say that
  $(\tilde\zeta,\tilde\varphi)$ satisfy the torus version of
  (\ref{general_conformal}) if there is a continuously differentiable
  function $x_0(t)$ such that $\tilde\xi=x_0(t)+\Hilbert[\tilde\eta]$
  and if $x_0$, $\tilde\eta$ and $\tilde\varphi$ satisfy the torus
  version of (\ref{general_conformal}) with $C_1=dx_0/dt +
  P_0\big[(1+\tilde\xi_\alpha)H[\tilde\psi_\alpha/\tilde J] -
    \tilde\eta_\alpha\tilde\psi_\alpha/\tilde J\big]$.  As noted
    in Remark~\ref{rmk:ab:same}, if $k$ is irrational, one can
    verify that $P_0\left[\tilde{\xi}_\alpha
      H[\tilde{\psi}_\alpha/\tilde{J}] -
      \tilde{\eta}_\alpha\tilde{\psi}_\alpha/\tilde{J}\right] = 0$,
    hence $d x_0/dt = C_1$.
\end{remark}

We show in Appendix~\ref{sec:families} that solving the torus version
of (\ref{general_conformal}) yields a three-parameter family of
one-dimensional solutions of the form
\begin{equation}\label{eq:family:full}
  \begin{aligned}
    \zeta(\alpha,t\,;\,\theta_1,\theta_2,\delta) &= \alpha +
    \delta + \tilde\zeta(
      \theta_1+\alpha,\theta_2+k\alpha,t), \\
    \varphi(\alpha,t\,;\,\theta_1,\theta_2) &= \tilde\varphi(
      \theta_1+\alpha,\theta_2+k\alpha,t),
  \end{aligned} \qquad
  \left( \begin{aligned}
      \alpha\in\mbb R, \, t\ge0 \\
      \theta_1,\theta_2,\delta\in\mbb R
    \end{aligned}\right).
\end{equation}
We also show that if all the waves in this family are single-valued
and have no vertical tangent lines, there is a corresponding family of
solutions of the Euler equations in the original graph-based
formulation of (\ref{general_initial})--(\ref{Bernoulli}) that are
quasi-periodic in physical space. A precise statement is given in
Theorem~\ref{thm:quasi:phys1} and the discussion that follows.

%%%%%%%%%%%%%%%%%%%%%%%%%%%%%%%%%%%%%%%%%%%%%%%%%%%%
%-------------------------------------------------------------------------------

\section{Numerical Method}
\label{sec:num}

%-------------------------------------------------------------------------------
%%%%%%%%%%%%%%%%%%%%%%%%%%%%%%%%%%%%%%%%%%%%%%%%%%%%%

In this section, we describe a pseudo-spectral time-stepping strategy
for evolving water waves with spatially quasi-periodic initial
conditions.  The evolution equations (\ref{general_conformal}) for
$\eta$ and $\varphi$ are nonlinear and involve computing derivatives,
antiderivatives and Hilbert transforms of quasi-periodic functions.
Let $f$ denote one of these functions (e.g.~$\eta$, $\varphi$ or
  $\chi$) and let $\tilde f$ denote the corresponding periodic
function on the torus,
%. Since $f$ has two quasi-periods, it has the form
%
\begin{equation}\label{quasi_f_form}
  f(\alpha) = \tilde f(\alpha,k\alpha), \qquad
  \tilde f(\alpha_1,\alpha_2) = 
  \sum\limits_{j_1, j_2 \in \mathbb{Z}}
  \hat{f}_{j_1, j_2} e^{i(j_1\alpha_1+j_2k\alpha_2)}, \qquad
  (\alpha_1, \alpha_2)\in\mathbb{T}^2.
\end{equation}
The functions $f_\alpha$ and $\Hilbert[f]$ then correspond to
\begin{equation} \label{quasi_deriv_hilbert}
  \begin{gathered}
    \wtil{f_\alpha} (\alpha_1, \alpha_2) = \sum\limits_{j_1, j_2\in \mathbb{Z}}
    i(j_1 + j_2 k)
    \hat{f}_{j_1, j_2} e^{i(j_1\alpha_1+j_2\alpha_2)}, \\
    \wtil{\Hilbert[f]} (\alpha_1, \alpha_2) = \sum\limits_{j_1, j_2\in \mathbb{Z}}
            (-i)\text{sgn}(j_1 + j_2 k) \hat{f}_{j_1, j_2}
            e^{i(j_1\alpha_1+j_2\alpha_2)}.
  \end{gathered}
\end{equation}
We propose a pseudo-spectral method in which each such $f$ that arises
in the formulas (\ref{general_conformal}) is represented by the values
of $\tilde f$ at $M_1\times M_2$ equidistant gridpoints on the torus
$\mbb T^2$,
\begin{equation}\label{eq:T2:discr}
  \tilde f_{m_1,m_2} = \tilde f(2\pi m_1/M_1\,,\,2\pi m_2/M_2), \qquad\quad
  (0\le m_1<M_1\,,\,0\le m_2<M_2).
\end{equation}
We visualize a $90^\circ$ rotation between the matrix $\tilde F$
holding the entries $\tilde f_{m_1,m_2}$ and the collocation points in
the torus.  The columns of the matrix correspond to horizontal slices
of gridpoints while the rows of the matrix correspond to vertical
slices indexed from bottom to top.  The nonlinear operations in
(\ref{general_conformal}) consist of products, powers and division;
they are carried out pointwise on the grid. Derivatives and the
Hilbert transform are computed in Fourier space via
(\ref{quasi_deriv_hilbert}).
To plot the solution, we also need to compute an antiderivative to
get $\xi$ from $f=\xi_\alpha$. This involves dividing $\hat
f_{j_1,j_2}$ by $i(j_1 + j_2 k)$ when $(j_1,j_2)\ne(0,0)$ and
adjusting the $(0,0)$ mode to obtain $\xi(0,t)=0$.

Since the functions $f$ that arise in the computation are real-valued,
we use the real-to-complex (`r2c') version of the two-dimensional
discrete Fourier transform.  The `r2c' transform of a one-dimensional
array of length $M$ (assumed even) is given by
\begin{equation}\label{eq:r2c}
  \{g_m\}_{m=0}^{M-1} \quad \mapsto \quad
  \{\hat g_j\}_{j=0}^{M/2} \quad , \quad
  \hat g_j = \frac1M\sum_{m=0}^{M-1} g_m e^{-2\pi ijm/M}.
\end{equation}
In practice, the $\hat g_j$ are computed simultaneously in $O(M\log
  M)$ time rather than by this formula.  The fully complex (`c2c')
transform of this (real) data would give additional values $\hat g_j$
with $M/2+1\le j\le M-1$. These extra entries are actually aliased
values of negative-index modes; they are redundant due to $\hat
g_j=\hat g_{j-M} = \overline{\hat g_{M-j}}$.  Since the imaginary
components of $\hat g_0$ and $\hat g_{M/2}$ are zero, the number of
real degrees of freedom on both sides of (\ref{eq:r2c}) is $M$. The
Nyquist mode $\hat g_{M/2}$ requires special attention.  Setting $\hat
g_{M/2}=1$ and the other modes to zero yields $g_m=\cos(\pi m)
=(-1)^m$. The derivative and Hilbert transform of this mode are taken
to be zero since they would involve evaluating $\sin(M\alpha/2)$ at
the gridpoints $\alpha_m=2\pi m/M$.

The two-dimensional `r2c' transform can be computed by applying
one-dimensional `r2c' transforms in the $x$-direction (i.e.~to the
  columns of $\tilde F$) followed by one-dimensional `c2c' transforms
in the $y$-direction (i.e.~to the rows of $\tilde F$):
\begin{equation}\label{eq:r2c:fhat}
  \hat f_{j_1,j_2} = \frac1{M_2}\sum_{m_2=0}^{M_2-1}\left(\frac1{M_1}\sum_{m_1=0}^{M_1-1}
    \tilde f_{m_1,m_2} e^{-2\pi ij_1m_1/M_1}\right)e^{-2\pi ij_2m_2/M_2}, \quad
  \left(\begin{gathered}
      0\le j_1\le M_1/2 \\
      -M_2/2< j_2\le M_2/2
    \end{gathered}\right).
\end{equation}
The `r2c' routine in the FFTW library actually returns the index
range $0\le j_2<M_2$, but we use $\hat f_{j_1,j_2-M_2}=\hat
f_{j_1,j_2}$ to de-alias the Fourier modes and map the indices
$j_2>M_2/2$ to their correct negative values. The missing entries with
$-M_1/2<j_1<0$ are determined implicitly by
\begin{equation}\label{eq:f:j1j2:sym}
  \hat f_{-j_1,-j_2}=\overline{\hat f_{j_1,j_2}}.
\end{equation}
This imposes additional constrains on the computed Fourier modes,
namely
\begin{equation}\label{eq:r2c:constraints}
  \begin{gathered}
%    &\im\{\hat f_{j_1,j_2}\}=0, \qquad (j_1,j_2)\in
%    \big\{(0,0)\,,\,(M_1/2,0)\,,\,(0,M_2/2)\,,\,(M_1/2,M_2/2)\big\}, \\
    \im\{\hat f_{0,0}\}=0,\quad\im\{\hat f_{M_1/2,0}\}=0,\quad
    \im\{\hat f_{0,M_2/2}\}=0,\quad \im\{\hat f_{M_1/2,M_2/2}\}=0, \\
% &\im\{\hat f_{j_1,j_2}\}=0, \qquad (j_1,j_2)\in
%    \{0,M_1/2\}\times\{0,M_2/2\}, \\
%    &\hat f_{j_1,-j_2} = \overline{\hat f_{j_1,j_2}}, \qquad
%    j_1\in\{0,M_1/2\}, \quad 1\le j_2\le M_2/2-1,
    \hat f_{0,-j_2}=\overline{\hat f_{0,j_2}}, \qquad
    \hat f_{M_1/2,-j_2}=\overline{\hat f_{M_1/2,j_2}}, \qquad
    (1\le j_2\le M_2/2-1),
  \end{gathered}
\end{equation}
where we also used $\hat f_{-M_1/2,j_2}=\hat f_{M_1/2,j_2}$.  This
reduces the number of real degrees of freedom in the complex
$(M_1/2+1)\times M_2$ array of Fourier modes to $M_1M_2$. When
computing $f_\alpha$ and $\Hilbert[f]$ via
(\ref{quasi_deriv_hilbert}), the Nyquist modes with $j_1=M_1/2$ or
$j_2=M_2/2$ are set to zero.  Otherwise the formulas
(\ref{quasi_deriv_hilbert}) respect the constraints
(\ref{eq:r2c:constraints}) and the `c2r' transform
reconstructs real-valued functions $\wtil{f_\alpha}$ and
$\wtil{\Hilbert[f]}$ from their Fourier modes.

The evolution equations (\ref{general_conformal}) are not stiff when
the surface tension parameter is small or vanishes, but become
moderately stiff for larger values of $\tau$. We find that the 5th and
8th order explicit Runge-Kutta methods of Dormand and Prince
\cite{hairer:I} work well for smaller values of $\tau$, and
exponential time-differencing (ETD) methods
\cite{cox:matthews,kassam,berland,whalen:etd,chen:wilkening} work well
generally. This will be demonstrated in
  Sections~\ref{sec:trav:rslts} and~\ref{sec:overturn} below.  In the
ETD framework, we follow the basic idea of the small-scale
decomposition for removing stiffness from interfacial flows
\cite{HLS94,HLS01} and write the
evolution equations (\ref{general_conformal})
in the form
\begin{equation}\label{eq:ssd:L}
  \begin{pmatrix}
    \eta_t \\ \varphi_t
  \end{pmatrix} = L\begin{pmatrix}
    \eta \\ \varphi
  \end{pmatrix} + \mc{N}, \qquad
  L = \begin{pmatrix}
    0 & \Hilbert\partial_\alpha \\
    -(g\Projection-\tau\partial_{\alpha\alpha}) & 0
    \end{pmatrix},
\end{equation}
where $\Projection$ is the projection in (\ref{eq:proj}),
$\Hilbert$ is the Hilbert transform in (\ref{eq:hilb:def}), and
\begin{equation}\label{eq:ssd:N}
  \mc N = \begin{pmatrix}
    -\eta_\alpha \Hilbert[\chi] - \big(\xi_\alpha\chi - \psi_\alpha\big)
    + C_1\eta_\alpha \\
    \Projection\left[ \frac{\psi_\alpha^2-\varphi_\alpha^2}{2J} -
      \varphi_\alpha \Hilbert[\chi] + C_1\varphi_\alpha + \tau(\kappa -
        \eta_{\alpha\alpha})\right]\end{pmatrix}.
\end{equation}
Note that $\mc N$ is obtained by subtracting the terms included in $L$
from (\ref{general_conformal}).  In particular, $\psi_\alpha$ in
  (\ref{eq:ssd:N}) is $-\Hilbert\pa_\alpha\varphi$ from (\ref{eq:ssd:L}).
The eigenvalues of\, $L$\,
are\, $\pm i\sqrt{|j_1+j_2k|\big(g+\tau(j_1+j_2k)^2\big)}$, so the leading
source of stiffness is dispersive. This $3/2$ power growth rate of the
eigenvalues of the leading dispersive term with respect to wave number
is typical of interfacial fluid flows with surface tension
\cite{HLS94,HLS01}. For stiffer problems such as the
Benjamin-Ono and KdV equations, the growth rate is
faster (quadratic and cubic, respectively) and it becomes essential to
use a semi-implicit or exponential time-differencing scheme to avoid
severe time-stepping restrictions.  Here it is less critical, but still
useful. Further details on how to implement (\ref{eq:ssd:L}) and
(\ref{eq:ssd:N}) in the ETD framework are given in
Appendix~\ref{sec:etd}.

In both the explicit Runge-Kutta and ETD methods, as explained above,
the functions evolved in time are $\tilde\eta(\alpha_1,\alpha_2,t)$
and $\tilde\varphi(\alpha_1,\alpha_2,t)$, sampled on the uniform
$M_1\times M_2$ grid covering $\mbb T^2$.  At the end of each time
step, we apply a 36th order filter \cite{hou:li:07,HLS94} with
Fourier multiplier
\begin{equation}\label{eq:filter}
  \rho(j_1,j_2) = \left\{\begin{array}{cc}
  0 & \hspace*{-3pt} j_1=M_1/2 \text{\, or\, } |j_2|=M_2/2, \\[3pt]
    \exp\big(\!-36\big[(2j_1/M_1)^{36}+
        (2j_2/M_2)^{36}\big]\big) & \text{otherwise.}
  \end{array}\right.
\end{equation}
In all the computations reported below, we used the same number of
gridpoints in the $\alpha_1$ and $\alpha_2$-directions,
$M_1=M_2=M$.  It is easy to check \emph{a-posteriori} that the
  Fourier modes decay sufficiently (e.g.~to machine precision) by the
  time the filter deviates appreciably from 1. If they do not, the
  calculation can be repeated with a larger value of $M$. This will be
  demonstrated in Section~\ref{sec:overturn} below.

%-------------------------------------------------------------------------------

\section{Numerical Results} \label{numerical_results}

%-------------------------------------------------------------------------------

In this section, we compute spatially quasi-periodic solutions of the
initial value problem (\ref{general_conformal}) with $k = 1/\sqrt{2}$
and $g$ normalized to 1. First we validate the traveling wave
computations of \cite{quasi:trav} and compare the accuracy and
efficiency of the ERK and ETD schemes in a convergence study. We then
consider more complex dynamics in which some of the wave peaks overturn.

\subsection{Traveling waves}\label{sec:trav:rslts}

\begin{figure}
\includegraphics[width=\textwidth]{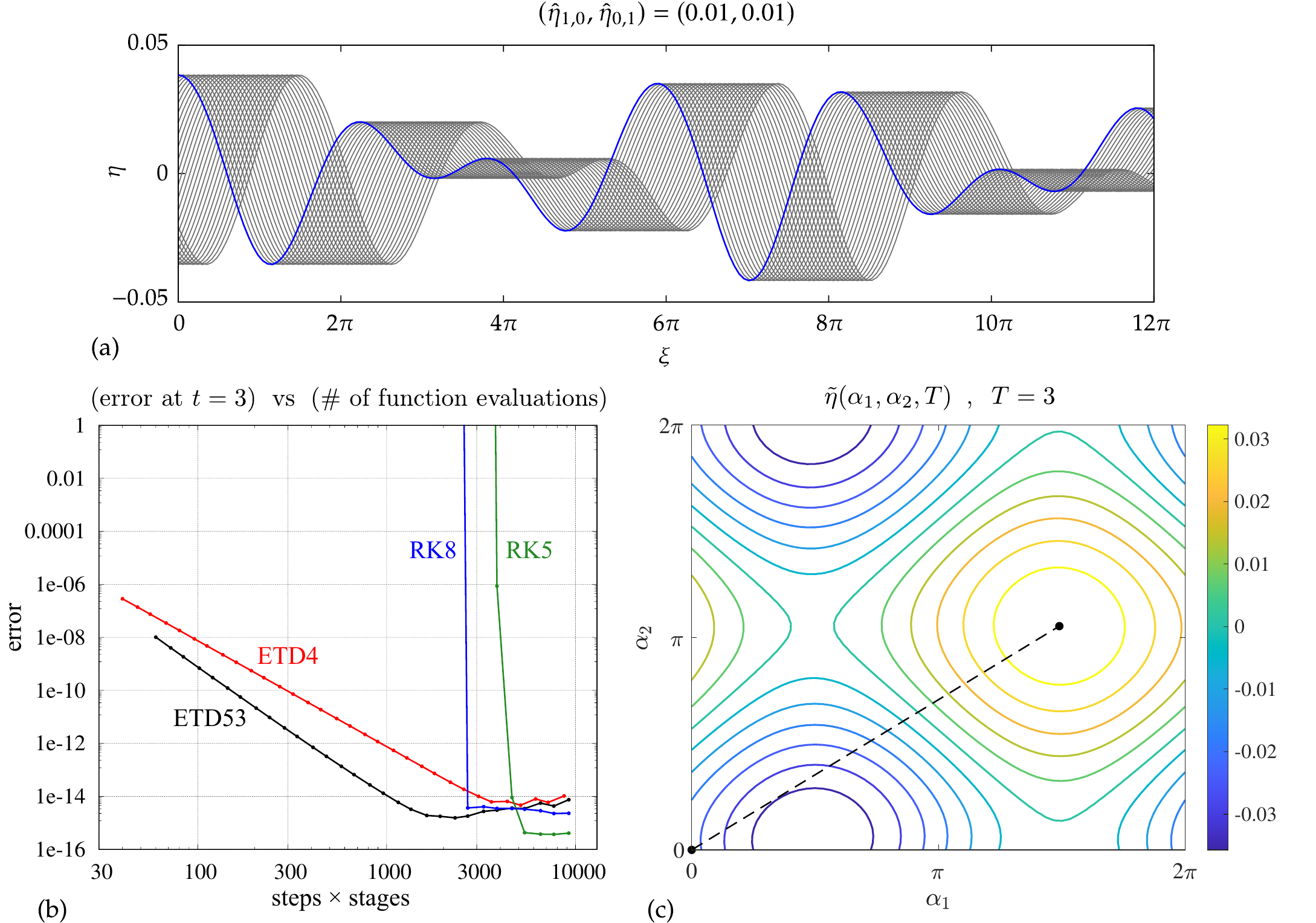}
\caption{\label{stepper_plot} Time evolution of a quasi-periodic
  traveling water wave computed in \cite{quasi:trav} and a convergence plot
  comparing the accuracy and efficiency of the proposed time-stepping
  schemes.}
\end{figure}

Figure~\ref{stepper_plot} shows the time evolution of a traveling wave
computed at $t=0$ by the algorithm of \cite{quasi:trav} with amplitude
parameters $\hat\eta_{1,0}=0.01$ and $\hat\eta_{0,1}=0.01$ and
  evolved to $t=T$ using the methods of Section~\ref{sec:num}, where
  $T=3$.  It also shows the error at the final time $T$ for various
choices of time-stepping scheme and number of time steps.  In
  all the computations of the figure, the torus functions
  $\tilde\eta(\alpha_1,\alpha_2,t)$ and
  $\tilde\varphi(\alpha_1,\alpha_2,t)$ are evolved on an $M\times M$
  mesh with $M=60$ gridpoints in each direction.  Panel (a) shows
snapshots of the solution in the lab frame at 30 equal time intervals
of size $T/30=0.1$.  Here we plotted every 30th step of the 5th
  order, 6 stage explicit Runge-Kutta method of Dormand and Prince
  \cite{hairer:I}, so the Runge-Kutta stepsize was $\Delta t=0.1/30$.
The initial condition is plotted with a thick blue line, and the wave
travels right at constant speed $c=1.552197$ in physical space.  The
solution is plotted over the representative interval $0\le x\le
12\pi$, though it extends in both directions to $\pm\infty$ without
exactly repeating.

Panel (b) shows the error in time-stepping this traveling wave
solution from $t=0$ to $t=3$ using the 5th and 8th order explicit
Runge-Kutta methods of Dormand and Prince \cite{hairer:I}, the 4th
order ETD scheme of Cox and Matthews \cite{cox:matthews, kassam}, and
the 5th order ETD scheme of Whalen, Brio and Moloney
\cite{whalen:etd}. These errors compare the numerical solution from
time-stepping the initial condition to the exact formula of how a
quasi-periodic traveling wave should evolve under
(\ref{general_conformal}) and (\ref{eq:C1:opt2}), which is worked out
in \cite{quasi:trav}.  If the initial wave profile has the torus
representation $\tilde\eta_0(\alpha_1,\alpha_2)$, we define
$\tilde\xi_0=\Hilbert[\tilde\eta_0]$ and
$\tilde\varphi_0=c\tilde\xi_0$.  By construction
  \cite{quasi:trav}, $\tilde\eta_0$ is an even function of
$\bds\alpha=(\alpha_1,\alpha_2)$, so $\tilde\xi_0$ and
$\tilde\varphi_0$ are odd. The exact traveling solution is then
\begin{equation}\label{eq:exact:a0}
  \begin{aligned}
    \tilde\eta_\text{exact}(\bds\alpha,t) &=
    \tilde\eta_0\big(\bds\alpha-\bds k\alpha_0(t)\big), \\
    \tilde\varphi_\text{exact}(\bds\alpha,t) &=
    \tilde\varphi_0\big(\bds\alpha-\bds k\alpha_0(t)\big),
  \end{aligned}
\end{equation}
where $\bds k=(1,k)$, $\alpha_0(t)=ct - \mc A(-\bds kct,0)$ and
$\mc A(\bds x,t)$ is a periodic function on $\mbb T^2$ defined
implicitly by (\ref{eq:mcA:def1}) below.  To derive
  (\ref{eq:exact:a0}), one makes use of the change of variables
  formula (\ref{eq:chg:vars:T2}) from physical space, where the wave
  speed is constant, to conformal space; see \cite{quasi:trav}. Note
that the waves in (\ref{eq:exact:a0}) do not change shape as they
  move through the torus in the direction $\bds k$, but the traveling
  speed $\alpha_0'(t)$ in conformal space varies in time in order to
  maintain $\tilde\xi(0,0,t)=0$ via (\ref{eq:C1:opt2}). The error
  plotted in panel (b) is the discrete norm at the final time
  computed, $T=3$:
\begin{equation*}
  \text{err} = \sqrt{\|\tilde\eta-\tilde\eta_{\text{exact}}\|^2 +
    \|\tilde\varphi-\tilde\varphi_{\text{exact}}\|^2}, \qquad
  \|\tilde\eta\|^2 = \frac1{M_1M_2}\sum_{m_1,m_2}
      \tilde\eta\left(\frac{2\pi m_1}{M_1},\frac{2\pi m_2}{M_2},T\right)^2.
\end{equation*}
The surface tension in this example ($\tau=1.410902$) is high enough
that once the stepsize is sufficiently small for the Runge-Kutta methods
to be stable, roundoff error dominates truncation error. So the errors
suddenly drop from very large values ($10^{20}$ or more) to machine
precision. By contrast, the error in the ETD methods decreases
steadily as the stepsize is reduced, indicating that the small-scale
decomposition introduced in (\ref{eq:ssd:L}) is successful in removing
stiffness from the equations of motion \cite{HLS94,HLS01}.

A contour plot of $\tilde\eta(\alpha_1,\alpha_2,T)$ at $t=T=3$ is
shown in panel (c) of Figure~\ref{stepper_plot}.  The dashed line
shows the trajectory from $t=0$ to $t=3$ of the wave crest that begins
at $(0,0)$ and continues along the path $\alpha_1=\alpha_0(t),$
$\alpha_2=k\alpha_0(t)$. We use Newton's method to solve the implicit
equation (\ref{eq:mcA:def1}) for $\mc A(\bds x,0)$ at each point of the
pseudo-spectral grid. We then use FFTW to compute the 2D Fourier
representation of $\mc A(\bds x,0)$, which can then be used to quickly
evaluate the function at any point.
We find that the Fourier modes of $\mc A(\bds x,0)$ decay to machine
precision on the $M\times M$ grid with $M=60$, corroborating the
assertion in Theorem~\ref{thm:quasi:phys1} below that $\mc A(x_1,x_2,t)$
is real analytic in $x_1$ and $x_2$.

\subsection{Overturning waves}\label{sec:overturn}
Next we present a spatially quasi-periodic water wave computation in
which some of the wave peaks overturn as they evolve while others do not.
Conformal mapping methods have been used previously to compute
  overturning waves. For example, Dyachenko and Newell
  \cite{dyachenko:newell:16} use this approach to study whitecapping in the
  ocean and Wang \textit{et al.}  \cite{wang2013two} use it
  to compute solitary and periodic overturning traveling flexural-gravity
  waves. The novelty of our work is the computation of a spatially
  quasi-periodic water wave in which every wave peak evolves
  differently, and only some of them overturn. Since torus functions
  are involved, the number of degrees of freedom is squared, leading
  to a large-scale computation.
For simplicity, we set the surface tension parameter, $\tau$, to
zero.

We first seek spatially periodic dynamics in which the initial
wave profile has a vertical tangent line that overturns when evolved
forward in time and flattens out when evolved backward in time. Through
trial and error, we selected the following parametric curves for the
initial wave profile and velocity potential of this auxiliary periodic
problem:
\begin{equation}\label{eq:ot:param}
  \xi_1(\bbb) = \bbb+\frac35\sin \bbb - \frac15\sin2\bbb, \qquad
  \begin{aligned}
    \eta_1(\bbb) &= -(1/2)\cos(\bbb+\pi/2.5), \\
    \varphi_1(\bbb) &= -(1/2)\cos(\bbb+\pi/4).
  \end{aligned}
\end{equation}
Note that $\xi_1'(\bbb)=0$ when $\bbb\in\pi+2\pi\mbb Z$, and otherwise
$\xi_1'(\bbb)>0$. Thus, vertical tangent lines occur where
$\xi_1(\bbb)\in\pi+2\pi\mbb Z$ and
$\eta_1(\bbb)=-0.5\cos(1.4\pi)=0.154508$; see
Figure~\ref{fig:periodic:ot}.

To convert (\ref{eq:ot:param}) to a conformal parametrization, we
search for $2\pi$-periodic functions $\eta_2(\alpha)$ and
$B_2(\alpha)$ and a number $x_2$ such that
\begin{equation}\label{eq:ot:conf}
  \alpha+x_2+H\left[\eta_2\right](\alpha) = \xi_1(\alpha+B_2(\alpha)), \qquad
  \eta_2(\alpha) = \eta_1(\alpha+B_2(\alpha)), \qquad B_2(0)=0.
\end{equation}
First we solve a simpler variant in which $x_2$ is absent and $B_2(0)$
is unspecified. Specifically, we solve
$\alpha+H\left[\eta_3\right](\alpha) = \xi_1(\alpha+B_3(\alpha))$,
$\eta_3(\alpha) = \eta_1(\alpha+B_3(\alpha))$ for $\eta_3(\alpha)$ and
$B_3(\alpha)$ on a uniform grid with $M=4096$ gridpoints on
$[0,2\pi)$ using Newton's method. The Hilbert transform is computed
with spectral accuracy in Fourier space. We then define $x_2$ as the
solution of $x_2+B_3(x_2)=0$ that is smallest in magnitude. We solve
this equation by a combination of root bracketing and Newton's method;
the result is $x_2=0.393458$. Finally, we define
$B_2(\alpha)=x_2+B_3(\alpha+x_2)$ and
$\eta_2(\alpha)=\eta_3(\alpha+x_2)$, which satisfy
(\ref{eq:ot:conf}).

\begin{figure}
  \begin{center}
    \includegraphics[width=.92\textwidth]{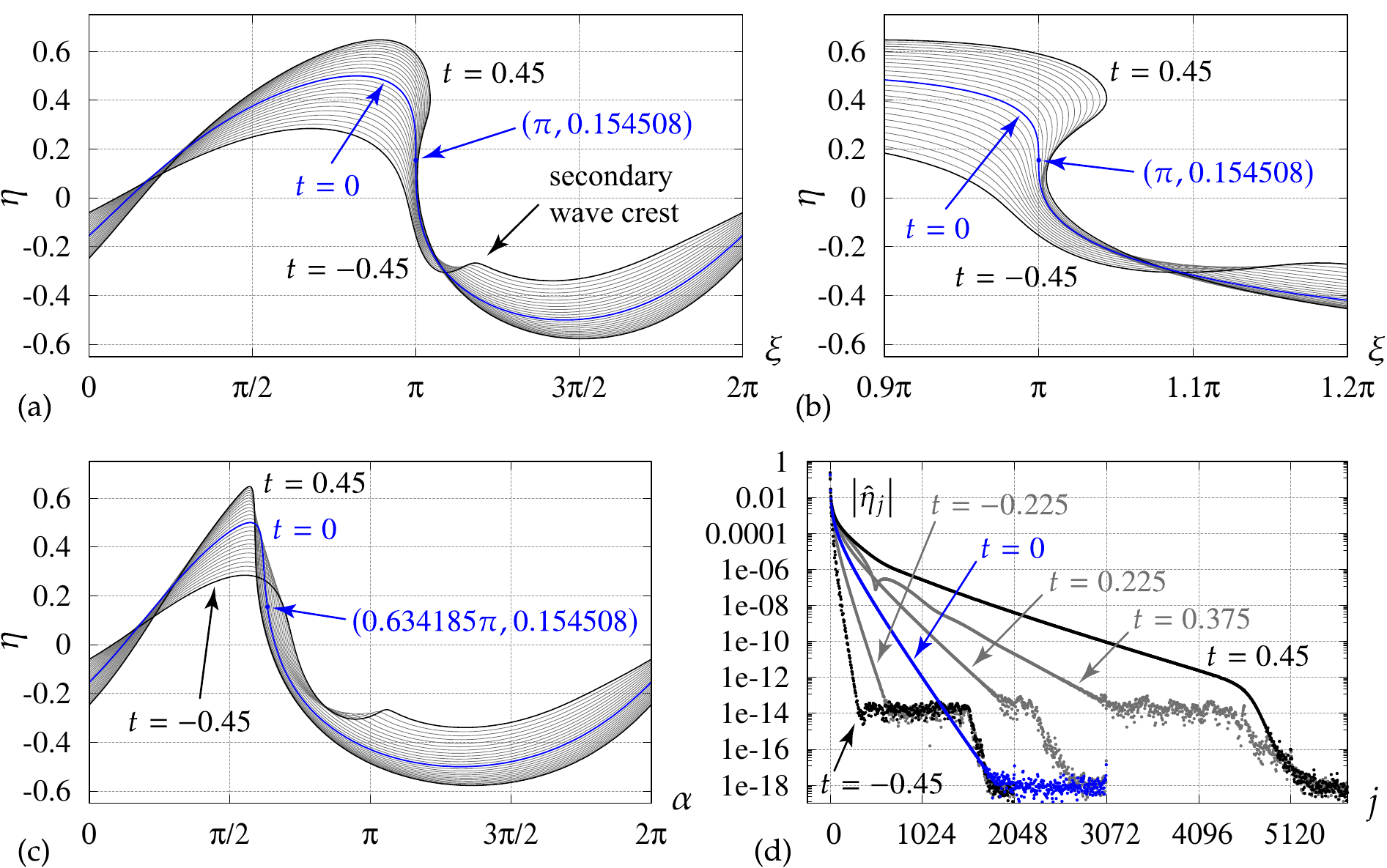}
  \end{center}
  \caption{\label{fig:periodic:ot} Time evolution of a spatially
    periodic water wave initialized via (\ref{eq:ot:param}) and
    evolved forward and backward in time to $t=\pm0.45$. Panels (a)
    and (b) show snapshots of the wave in physical space; panel (c)
    shows snapshots of $\eta(\alpha,t)$ in conformal space; and panel
    (d) shows snapshots of $|\hat\eta_j(t)|$ in Fourier space. The
    initial condition ($t=0$) is shown in blue in each plot.
  }
\end{figure}

As shown in Figure~\ref{fig:periodic:ot}, the initial conditions
$\eta_2(\alpha)$ and
\begin{equation}
  \varphi_2(\alpha)=\varphi_1(\alpha+B_2(\alpha))
\end{equation}
have the desired property that the wave overturns when evolved forward
in time and flattens out when evolved backward in time.  In other
  words, the wave becomes less steep in the neighborhood of the
  initial vertical tangent line when time is reversed. However, it
  does not evolve backward to a flat state. Instead, a secondary wave
  crest forms to the right of the initial wave crest and grows in
  amplitude as $t$ decreases. This secondary wave crest resembles the
  early stages of the fluid jets that were observed by Aurther
  \emph{et al.} \cite{shkoller} to form in the wave troughs when the
  initial condition $\eta_0(x)=\big(\frac13\sin x+\frac16\sin
    2x+\frac13\sin 3x)$ is evolved from rest in the graph-based
  formulation (\ref{general_initial}).

  The blue markers in panels (a) and (b) of
  Figure~\ref{fig:periodic:ot} show the location of the vertical
  tangent line in physical space at $t=0$.
The blue marker in panel (c) shows the corresponding point in
conformal space. When the wave overturns for $t>0$ in physical space,
it is because $\alpha\mapsto\xi(\alpha,t)$ no longer increases
monotonically. Indeed, we see in panel (c) that $\eta(\alpha,t)$
remains single-valued as a function of the conformal variable $\alpha$
but becomes very steep. This causes the Fourier mode amplitudes in
panel (d) to decay more slowly as $t$ increases. We used different
mesh sizes and timesteps in the regions $0\le t\le 0.3$, $0.3\le t\le
0.45$ and $0\ge t\ge-0.45$ to maintain spectral accuracy; details are
given below when discussing the quasi-periodic calculation. The
drop-off in $|\hat\eta_j|$ from $10^{-14}$ to $10^{-18}$ as $j$
approaches the Nyquist frequency $M/2$ is due to the 1D version of the
filter (\ref{eq:filter}), which is applied after each
timestep. Floating point errors of size $10^{-14}$ occur in the
discretization of the equations of motion while errors of size
$10^{-18}$ are due to having computed the inverse FFT of the filtered
data to get back to real space before taking the FFT again to
plot the Fourier data.

We turn the solution of this auxiliary periodic problem into a
spatially quasi-periodic solution by defining initial
conditions on the torus of the form
\begin{equation}\label{eq:ot:init:quasi}
  \tilde\eta_0(\alpha_1,\alpha_2) = \eta_2(\alpha_1), \qquad
  \tilde\varphi_0(\alpha_1,\alpha_2) = \varphi_2(\alpha_1)\cos(\alpha_2-q),
\end{equation}
where $q$ is a free parameter that we choose heuristically to be
$q=0.6k\pi=1.3329$ in order to make the first wave crest to the right
of the origin behave similarly to the periodic 1D solution of
Figure~\ref{fig:periodic:ot}. (This will be explained below).

\begin{figure} 
  \begin{center}
    \includegraphics[width=.92\textwidth]{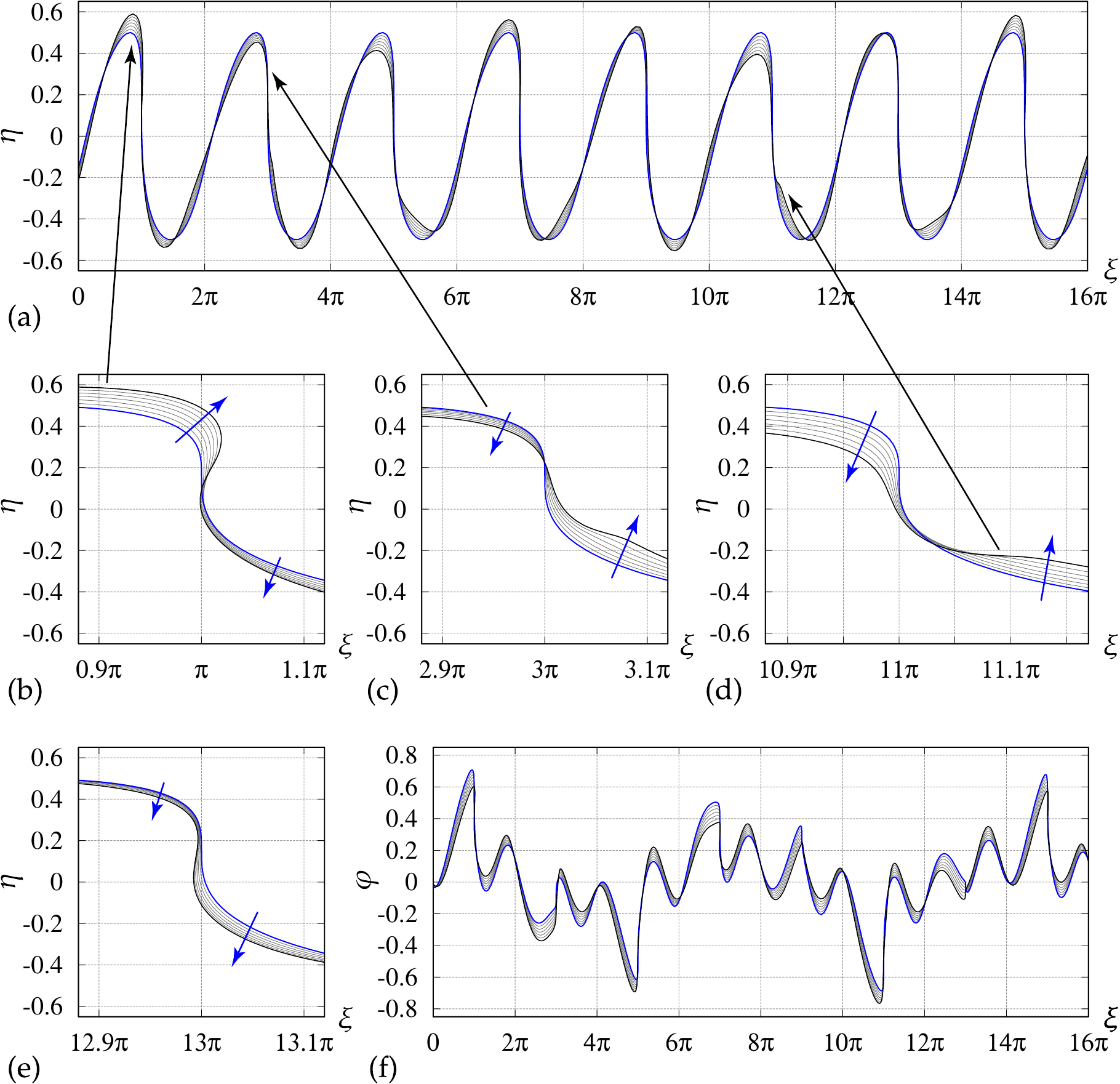}
  \end{center}
  \caption{\label{fig:ot:1} Snapshots in time of a spatially
    quasi-periodic water wave with a periodic initial wave profile
    with vertical tangent lines at $\xi=\pi+2\pi n$, $n\in\mbb Z$. A
    quasi-periodic initial velocity potential causes some of the 
    peaks to overturn for $t>0$ while others do not.
    Panels (a) and (f) show $\eta(\alpha,t)$ and $\varphi(\alpha,t)$ versus
    $\xi(\alpha,t)$ over $0\le x\le16\pi$ and $0\le t\le
    T=0.225$. Panels (b)--(e) show the results of panel (a) in more
    detail. The blue arrows show the direction of travel of the
    wave at various locations.}
\end{figure}

\begin{figure}
\includegraphics[width=\textwidth]{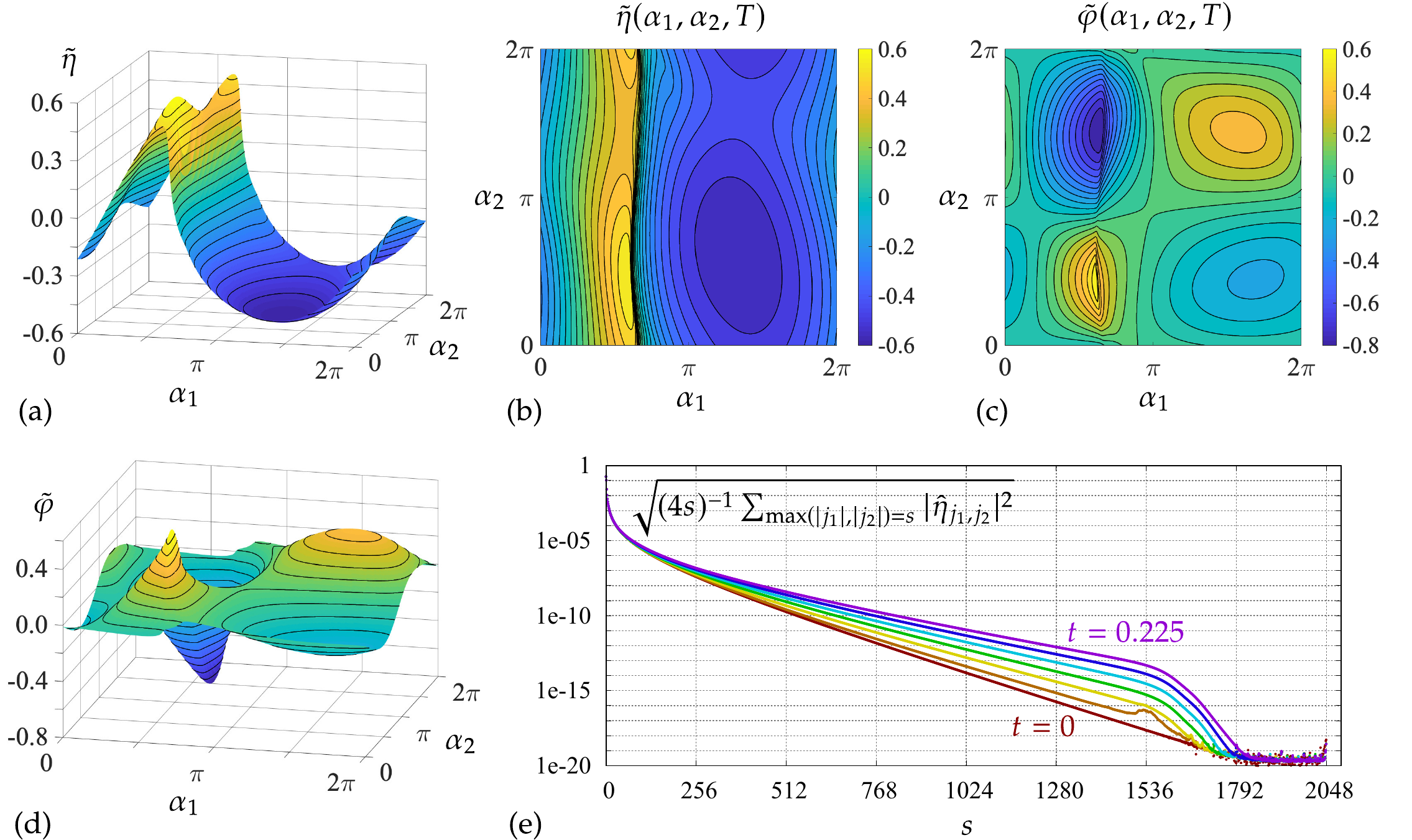}
\caption{\label{fig:ot:2} Surface and contour plots of the torus
  version of the solution plotted in Figure~\ref{fig:ot:1} at the
  final time $T=0.225$. The rapid dropoff in
  $\tilde\eta(\alpha_1,\alpha_2,t)$ over the window
  $0.6\pi\le\alpha_1\le0.667\pi$ persists from the initial state in
  which $\tilde\eta_0(\alpha_1,\alpha_2)$ does not depend on
  $\alpha_2$.  Panel (e) shows the exponential decay of Fourier modes
  with respect to the shell index $s$ at different times.}
\end{figure}

The results of the quasi-periodic calculation are summarized in
Figures~\ref{fig:ot:1} and~\ref{fig:ot:2}.  Panel (a) of
Figure~\ref{fig:ot:1} shows snapshots of the solution at $t=(\ell/6)T$
for $0\le\ell\le 6$ over the range $0\le\xi(\alpha)\le 16\pi$, where
$T=0.225$. The initial wave profile,
$\zeta_0(\alpha)=\xi_0(\alpha)+i\eta_0(\alpha)$ with
$\eta_0(\alpha)=\tilde\eta_0(\alpha,k\alpha)$, is plotted with a thick
blue line.
The wave profile is plotted with a thick black line at $t=T$ and with
thin grey lines at intermediate times.  Panel (b) zooms in on the
first wave in panel (a), which overturns as the wave crest moves up
and right while the wave trough moves down and left, as indicated by
the blue arrows. This is very similar (by design) to the forward
evolution of the auxiliary periodic wave of
Figure~\ref{fig:periodic:ot}, with initial conditions
$\eta_2(\alpha)$, $\varphi_2(\alpha)$. Panels (c) and (d) zoom in on
two other wave crests from panel (a) that flatten out (rather than
  overturn) as $t$ advances from 0 to $T$.  Panel (e) shows another
type of behavior in which the wave overturns due to the wave trough
moving down and left faster than the wave crest moves down and
left. Panel (f) shows the evolution of the velocity potential
$\varphi(\alpha,t)$ over $0\le t\le T$.  Unlike $\eta_0(\alpha)$, the
initial velocity potential
$\varphi_0(\alpha)=\tilde\varphi_0(\alpha,k\alpha)$ is not
$2\pi$-periodic due to the factor of $\cos(\alpha_2-q)$ in
(\ref{eq:ot:init:quasi}).

Panels (a) and (d) of Figure~\ref{fig:ot:2} show surface plots of
$\tilde\eta(\alpha_1,\alpha_2,T)$ and
$\tilde\varphi(\alpha_1,\alpha_2,T)$ at the final time computed,
$T=0.225$. The corresponding contour plots are shown in panels (b) and
(c). Initially, $\tilde\eta(\alpha_1,\alpha_2,0)$ depends only on
$\alpha_1$; however, by $t=T$, the dependence on $\alpha_2$ is clearly
visible. Although the waves overturn in some places when
$\eta(\alpha,t)= \tilde\eta(\alpha,k\alpha,t)$ is plotted
parametrically versus $\xi(\alpha,t)$ with $t>0$ held fixed, both
$\tilde\eta$ and $\tilde\varphi$ are single-valued functions of
$\alpha_1$ and $\alpha_2$ at all times.  Nevertheless, throughout the
evolution, $\tilde\eta(\alpha_1,\alpha_2,t)$ has a steep dropoff over
a narrow range of values of $\alpha_1$.  Initially,
$\tilde\eta_0(\alpha_1,\alpha_2)= \eta_2(\alpha_1) =
\eta_1(\alpha_1+B_2(\alpha_1))$ and the rapid dropoff occurs for
$\alpha_1$ near the solution of $\alpha_1+B_2(\alpha_1)=\pi$ (since
  the vertical tangent line occurs at $\xi_1(\bbb)+i\eta_1(\bbb)$ with
  $\bbb=\pi$). Using Newton's method, we find that this occurs at
$\alpha_1=0.634185\pi$. The blue curve in panel (c) of
Figure~\ref{fig:periodic:ot} gives $\eta_2(\alpha)$. If one
zooms in on this plot, one finds that $\eta_2(\alpha)$ decreases
rapidly by more than half its crest-to-trough height over the narrow
range $0.6\pi\le\alpha\le0.667\pi$. At later times,
$\tilde\eta(\alpha_1,\alpha_2,t)$ continues to drop off rapidly when
$\alpha_1$ traverses this narrow range in spite of the dependence on
$\alpha_2$. This can be seen in panel (b) of Figure~\ref{fig:ot:2},
where there is a high clustering of nearly vertical contour lines
separating the yellow-orange region from the blue region.  Over this
narrow window, $\tilde\varphi(\alpha_1,\alpha_2,t)$ also varies
rapidly with respect to $\alpha_1$.

Many gridpoints are needed to resolve these rapid variations with
spectral accuracy. Although $\xi_1(\bbb)$, $\eta_1(\bbb)$ and
$\varphi_1(\bbb)$ involve only a few nonzero Fourier modes, conformal
reparametrization via (\ref{eq:ot:conf}) vastly increases the Fourier
content of the initial condition. We used $M=6144$ gridpoints to
evolve the periodic auxiliary problem of Figure~\ref{fig:periodic:ot}
from $t=0$ to $t=0.3$  using the 8th order Runge-Kutta method of
  Dormand and Prince \cite{hairer:I} with stepsize $\Delta
  t=2.08333\times10^{-5}$. We then switched to $M=12288$ gridpoints
to evolve from $t=0.3$ to $t=0.45$ with $\Delta
  t=7.5\times10^{-6}$. In the reverse direction, we used $M=4096$
  gridpoints to evolve from $t=0$ to $t=-0.45$ with $\Delta
  t=-4.6875\times10^{-5}$.  Studying the Fourier modes in panel (d)
of Figure~\ref{fig:periodic:ot}, it appears that 4096 gridpoints (2048
  modes) are sufficient to maintain double-precision accuracy forward
or backward in time to $t=\pm0.225$. Using this as a guideline for the
quasi-periodic calculation, we evolved (\ref{general_conformal}) on a
$4096\times4096$ spatial grid using the 8th order explicit Runge-Kutta
method described in Section~\ref{sec:num}.  The calculation involved
5400 time steps from $t=0$ to $t=T=0.225$, which took 2.5 days on 12
threads running on a server with two 3.0 GHz Intel Xeon Gold 6136
processors. Additional threads had little effect on the running time
as the FFT calculations require a lot of data movement relative to the
number of floating point operations involved.

Panel (e) of Figure~\ref{fig:ot:2} shows the $\ell^2$ average of the
Fourier mode amplitudes $|\hat\eta_{j_1,j_2}|$ in each shell of
indices satisfying $\max(|j_1|,|j_2|)=s$ for $1\le s\le2048$.  Since
$\hat\eta_{-j_1,-j_2}=\overline{\hat\eta_{j_1,j_2}}$, we can discard
half the modes and sweep through the lattice along straight lines from
$(0,s)$ to $(s,s)$ to $(s,-s)$ to $(1,-s)$, which sweeps out $4s$
index pairs. (The same ordering is used to enumerate the unknowns in
  the nonlinear least squares method proposed in \cite{quasi:trav} to
  compute quasi-periodic traveling water waves.)  We see in panel (e)
that as time increases, the modes continue to decay at an exponential
rate with respect to $s$, but the decay rate is slower at later
times. The rapid dropoff in the mode amplitudes for $s\ge1536$ is due
to the Fourier filter. At the final time $t=T=0.225$, the modes still
decay by 12 orders of magnitude from $s=1$ to $s=1536$, so we believe
the solution is correct to 10--12 digits. A finer grid would be
required to maintain this accuracy over longer times.  As in
  Figure~\ref{fig:periodic:ot}, the additional drop-off in the
  amplitude of Fourier modes from $s=1536$ to $s=2048$ in panel (e) is
  due applying the filter (\ref{eq:filter}) to the solution after
  every timestep.

Beyond monitoring the decay of Fourier modes, as an additional check
of accuracy, we compute the average energy, mass and momentum of the
solution of Figure~\ref{fig:ot:2} as a function of time. The results
are shown in Table~\ref{tbl:energy}. The formulas for $E$, $M$ and
$P_x$ are
\begin{equation}
  \begin{aligned}
    E &= \frac1{(2\pi)^2}\int_{\mathbb{T}^2} \bigg[\,\frac{1}{2} \tilde{\psi}
      \tilde{\varphi}_\alpha
  + \frac{1}{2} g\tilde{\eta}^2\big(1+ \tilde{\xi}_\alpha\big)
  + \tau\left(\sqrt{\big(1+ \tilde{\xi}_\alpha\big)^2 + 
      \big(\tilde{\eta}_\alpha\big)^2}
    - 1\right)\!\bigg]\,d\alpha_1\,d\alpha_2, \\[8pt]
    M &= \frac1{(2\pi)^2}\int_{\mbb T^2}
    \tilde\eta\big(1+\tilde\xi_\alpha\big)\,d\alpha_1\,d\alpha_2, \qquad
    P_x = \frac1{(2\pi)^2}\int_{\mathbb{T}^2} -
    \tilde{\varphi}\tilde{\eta}_\alpha\,
  d\alpha_1\,d\alpha_2,
  \end{aligned}
\end{equation}
which may be shown to be conserved quantites under the free-surface
Euler equations following the derivations in
\cite{zakharov2002new,dyachenko2019}. The only changes required for
the quasi-periodic case are that derivatives and the Hilbert transform
are replaced by their torus versions, and the integrals over $\mbb R$
or $\mbb T$ are replaced by integrals over $\mbb T^2$. We also divide
by $(2\pi)^2$ to obtain average values over the torus. This scaling
has the advantage that $E$, $M$ and $P_x$ do not suddenly jump by a
factor of $2\pi$ when periodic functions are viewed as quasi-periodic
functions that depend on $\alpha_1$ only.  The equations of motion are
Hamiltonian \cite{zakharov1968stability, dyachenko2019} whether or not
the energy is scaled by $1/(2\pi)^2$ --- one just has to multiply the
symplectic 2-form  \cite{abraham:marsden:ratiu} by the same factor.

The numerical results in Table~\ref{tbl:energy} show that energy is
conserved to a relative error of
$1.5\times10^{-13}/0.0975=1.5\times10^{-12}$; mass is conserved to a
relative error of $3.3\times10^{-13}/0.0464=7.1\times10^{-12}$; and
momentum is conserved to an absolute error of $4.6\times10^{-15}$ over
the course of the numerical computation of Figure~\ref{fig:ot:2}. This
gives further evidence that $\tilde\eta$ and $\tilde\varphi$ are
accurate to 10--12 digits.  The mass being negative is an artifact of
the choice of $\xi_1(\sigma)$ and $\eta_1(\sigma)$ in
(\ref{eq:ot:param}). While $\eta_1(\sigma)$ has zero mean as a
function of $\sigma$, its average value with respect to $x$ is
$(2\pi)^{-1}\int_0^{2\pi}\eta_1(\sigma)\xi_1'(\sigma)\,d\sigma
=-(3\pi/40)(\sqrt5-1)=-0.04635254915624$. If we had added a constant to
$\tilde\eta$ to make $M$ zero initially, it would have remained zero
up numerical errors, similar to $P_x$, which is initially zero since
$\tilde\eta$ and $\tilde\varphi$ in (\ref{eq:ot:init:quasi}) are
independent of $\alpha_2$ except for the factor of $\cos(\alpha_2-q)$.
The constant value of the energy would also change if a constant
were added to $\tilde\eta$.

\begin{table}[t]
  \begin{equation*}
\begin{array}{c|c|c|r}
  t & E & M & P_x \hspace*{.375in} \\\hline
  0.000 & 0.097501331570\textbf{54} & -0.046352549156\textbf{43} &  1.19\times10^{-15}\\\hline
  0.025 & 0.097501331570\textbf{57} & -0.046352549156\textbf{23} & -2.98\times10^{-15}\\\hline
  0.050 & 0.097501331570\textbf{58} & -0.046352549156\textbf{21} & -1.24\times10^{-15}\\\hline
  0.075 & 0.097501331570\textbf{57} & -0.046352549156\textbf{19} &  1.15\times10^{-15}\\\hline
  0.100 & 0.097501331570\textbf{59} & -0.046352549156\textbf{17} &  0.43\times10^{-15}\\\hline
  0.125 & 0.097501331570\textbf{60} & -0.046352549156\textbf{16} &  0.34\times10^{-15}\\\hline
  0.150 & 0.097501331570\textbf{62} & -0.046352549156\textbf{14} &  4.59\times10^{-15}\\\hline
  0.175 & 0.097501331570\textbf{63} & -0.046352549156\textbf{13} &  1.35\times10^{-15}\\\hline
  0.200 & 0.097501331570\textbf{64} & -0.046352549156\textbf{11} & -2.18\times10^{-15}\\\hline
  0.225 & 0.097501331570\textbf{69} & -0.046352549156\textbf{10} & -2.10\times10^{-15}\\
\end{array}
\vspace*{10pt}
  \end{equation*}
  \caption{\label{tbl:energy} Average energy, mass and momentum of the
  overturning wave example of Figure~\ref{fig:ot:2} at the times indicated.}
\end{table}

The rationale for setting $q=0.6k\pi$ in (\ref{eq:ot:init:quasi}) is
that $\cos(\alpha_2-q)\approx1$ where the characteristic line
$(\alpha,k\alpha)$ crosses the dropoff in the torus near
$\alpha_1=0.6\pi$ for the first time. Locally,
$\varphi_0(\alpha)=\tilde\varphi_0(\alpha,k\alpha)$ is close to
$\varphi_2(\alpha)$, the initial condition of the auxiliary periodic
problem, so we expect the quasi-periodic wave to evolve similarly to
the periodic wave near $x=\pi$ for a short time. (Here $z=x+iy$
  describes physical space). This is indeed what happens, which may be
seen by comparing panel (b) of Figure~\ref{fig:periodic:ot} to panel
(b) of Figure~\ref{fig:ot:1}, keeping in mind that $t\in[-0.45,0.45]$
in the former plot and $t\in[0,0.225]$ in the latter plot.  Advancing
$\alpha$ from $0.6\pi$ to $10.6\pi$ causes the characteristic line
$(\alpha,k\alpha)$ to cross a periodic image of the dropoff at
$\alpha_2=10.6k\pi$, where
$\cos(\alpha_2-q)=-0.9752\approx-1$. Locally, $\varphi_0(\alpha)$ is
close to $-\varphi_2(\alpha)$, the initial condition of the
time-reversed auxiliary periodic problem.  Thus, we expect the
quasi-periodic wave to evolve similarly to the time-reversed periodic
wave near $x=11\pi$. (Recall that $\xi_0(0.634185\pi)=\pi$, so
  $\xi_0(10.634185\pi)=11\pi$). Comparing panel (b) of
Figure~\ref{fig:periodic:ot} to panel (d) of Figure~\ref{fig:ot:1}
confirms that this does indeed happen.  At most wave peaks, the
velocity potential of the quasi-periodic solution is not closely
related to that of the periodic auxiliary problem since the cosine
factor is not near a relative maximum or minimum, where it is flat.
As a result, the wave peaks of the quasi-periodic solution evolve in
many different ways as $\alpha$ varies over the real line.

%-------------------------------------------------------------------------------

\section{Conclusion} 

%-------------------------------------------------------------------------------

In this work, we have formulated the two-dimensional, infinite depth
gravity-capillary water wave problem in a spatially quasi-periodic,
conformal mapping framework. We developed two time-stepping strategies
for solving the quasi-periodic initial value problem, an
explicit Runge-Kutta method and an exponential time differencing
scheme. We numerically verified a result in \cite{quasi:trav} that
quasi-periodic traveling waves evolve in time on the torus $\mbb T^2$
in the direction $(1,k)$ without changing form, though their speed is
non-uniform in conformal space if the condition $\tilde\xi(0,0,t)=0$
is imposed via (\ref{eq:C1:opt2}). We then performed a convergence
study to demonstrate the effectiveness of the small-scale
decomposition at removing stiffness from the evolution equations when
the surface tension is large. Finally, we presented the results
  of a large-scale computation of a spatially quasi-periodic
  overturning water wave for which the wave peaks exhibit a wide
  array of dynamic behavior.

In the appendices, we establish minimal conditions to ensure that a
quasi-periodic analytic function $z(w)$ maps the lower half-plane
topologically onto a region bounded above by a curve, and that
$1/|z_w|$ is bounded. We also show that if all the solutions in the
family are single-valued and have no vertical tangent lines, the
corresponding solutions of the original graph-based formulation
(\ref{general_initial})--(\ref{Bernoulli}) of the Euler equations are
quasi-periodic in physical space.  This analysis includes a
  change of variables formula from conformal space to physical space,
  which to our knowledge is also new in the periodic setting. We then
  provide details on implementing the exponential time-differencing
  scheme and discuss generalizations to quasi-periodic water waves
  at the surface of a 3D fluid, where conformal mapping methods are
  no longer applicable.

We believe that spatial quasi-periodicity is a natural setting to
study the dynamics of linear and nonlinear waves, and has largely been
overlooked as a possible third option to the usual modeling assumption
that the solution either evolves on a periodic domain or decays at
infinity. In the future, we plan to develop numerical methods to
compute temporally quasi-periodic water waves with $n=3$ quasi-periods
and to study subharmonic instabilities
\cite{longuet:78,mackay:86,mercer:92,oliveras:11,trichtchenko:16} of
periodic traveling waves and standing waves as well as the
  long-time dynamics of spatially quasi-periodic perturbations.

\appendix

\section{Conformal Mappings of the Lower Half-Plane}
\label{sec:one:one}

In this section we discuss sufficient conditions for an analytic
function $z(w)$ to map the lower half-plane topologically onto a
semi-infinite region bounded above by a parametrized curve. We will
prove the following theorem, and a corollary concerning boundedness
of $1/|z_w|$ when the functions are quasi-periodic.

\begin{theorem}\label{thm:conformal}
  Suppose $\veps>0$ and $z(w)$ is analytic on the half-plane $\mbb
  C^-_\veps = \{w\;:\; \im w<\veps\}$. Suppose there is a constant
  $M>0$ such that $|z(w)-w|\le M$ for $w\in\mbb C^-_\veps$, and that the
  restriction $\zeta=z\vert_\mbb R$ is injective. Then the curve
  $\zeta(\alpha)$ separates the complex plane into two regions, and
  $z(w)$ is an analytic isomorphism of the lower half-plane onto the
  region below the curve $\zeta(\alpha)$.
\end{theorem}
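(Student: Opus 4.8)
The plan is to combine the Jordan curve theorem (after a one-point compactification) with a degree/argument-principle count for the holomorphic map, using the close-to-identity bound $|z(w)-w|\le M$ to control all behaviour at infinity. Throughout I would write $g(w)=z(w)-w$, so $|g|\le M$, $\re\zeta(\alpha)=\alpha+\re g(\alpha)$, and $\im\zeta(\alpha)=\im g(\alpha)$; analyticity on $\{\im w<\veps\}$ gives that $z$ is continuous (indeed analytic) up to and across $\mbb R$, so $\zeta$ is a genuine curve and $z$ is continuous on $\overline{\Omega^-}$, where $\Omega^-=\{\im w<0\}$. To prove separation I would note that $|\re g|\le M$ forces $\re\zeta(\alpha)\to\pm\infty$ as $\alpha\to\pm\infty$, so the injective $\zeta$ is also proper; extending it to $\hat\zeta\colon S^1\to S^2$ on the one-point compactifications by $\hat\zeta(\infty)=\infty$ produces a continuous injection, i.e.\ a Jordan curve on the sphere. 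The Jordan curve theorem then yields exactly two complementary components on $S^2$, and deleting the boundary point $\infty$ shows that $\mbb C\setminus\zeta(\mbb R)$ has exactly two components. Since $|\im g|\le M$, the curve lies in the strip $\{|\im|\le M\}$, so each of the connected sets $\{\im<-2M\}$ and $\{\im>M\}$ lies in a single component.

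Next I would show that $z(\Omega^-)$ is a union of these components. The map $z$ is proper on $\overline{\Omega^-}$: if $z(w_n)$ stays bounded then $w_n=z(w_n)-g(w_n)$ is bounded as well, so $z(\overline{\Omega^-})$ is closed, while $z(\Omega^-)$ is open by the open mapping theorem, with $\partial\big(z(\Omega^-)\big)\subseteq z(\partial\Omega^-)=\zeta(\mbb R)$. An open set whose boundary is contained in $\zeta(\mbb R)$ meets each component of $\mbb C\setminus\zeta(\mbb R)$ either in the whole component or in the empty set (intersecting with a component splits it into two relatively open pieces, and connectedness kills one), so $z(\Omega^-)$ is a union of the two components.

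To identify the component and finish, I would argue via the far field and Rouch\'e's theorem. For $z_0$ with $\im z_0<-2M$, the circle $|w-z_0|=2M$ lies in $\Omega^-$, and there $|g|\le M<|w-z_0|$, so applying Rouch\'e to $z(w)-z_0$ against $w-z_0$ gives exactly one simple preimage inside the disc; every preimage satisfies $|w-z_0|=|g(w)|\le M<2M$, so that is the total count. Hence $\{\im<-2M\}\subseteq z(\Omega^-)$. Conversely, $z(w)=z_0$ with $\im z_0>M$ forces $\im w=\im z_0-\im g(w)>0$, which is impossible in $\Omega^-$, so $\{\im>M\}$ is disjoint from $z(\Omega^-)$. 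Therefore the two components are genuinely distinct, namely $D^-\supseteq\{\im<-2M\}$ and $D^+\supseteq\{\im>M\}$, and $z(\Omega^-)=D^-$ is exactly the region below the curve. Finally, $z\colon\Omega^-\to D^-$ is a surjective proper holomorphic map, hence a finite branched cover whose number of sheets is constant; the Rouch\'e count shows this number is $1$, so $z$ is injective with $z_w\ne0$, i.e.\ an analytic isomorphism onto $D^-$.

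The hard part is the unboundedness of both the curve and the domain: neither the Jordan curve theorem nor the preimage count applies directly, and boundary injectivity of $\zeta$ does not by itself yield interior injectivity of $z$. The uniform bound $|z(w)-w|\le M$ is what rescues each step---it makes $\zeta$ proper (so compactification gives a true Jordan curve), confines the curve to a strip (so the two far-field half-planes separate), makes $z$ proper (so its image is a union of components carrying a well-defined constant degree), and lets Rouch\'e pin that degree to $1$. The two places I would check most carefully are the containment $\partial\big(z(\Omega^-)\big)\subseteq\zeta(\mbb R)$ and the constancy of the branched-cover degree, since these are precisely where properness does the essential work.
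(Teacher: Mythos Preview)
Your argument is correct and takes a genuinely different route from the paper's. The paper also compactifies (via an explicit M\"obius map $T(z)=-(z+Ai)/(z-Ai)$ rather than your one-point compactification, but to the same effect) to invoke the Jordan curve theorem. Where the approaches diverge is in counting preimages: the paper computes the winding number of the image of the boundary of a large half-disk $S_a=\{|w|<a,\ \im w<0\}$ around a target point $z_0$, splits the contour integral into a real-axis part $n_1$ and a semicircular part $n_2$, and evaluates the limit $a\to\infty$ of each piece separately using the bound $|z_w-1|\le M/(\veps-\beta)$ and dominated convergence. This yields $n(\Gamma_a,z_0)=1$ for $z_0$ below the curve and $0$ above, and the argument treats every $z_0$ directly.

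Your approach replaces that explicit integral analysis with the structural facts that $z$ is proper on $\overline{\Omega^-}$ (so $z(\Omega^-)$ is open with boundary in $\zeta(\mbb R)$, hence a union of Jordan components) and that a proper surjective holomorphic map between domains has constant degree. You then pin the degree with a single Rouch\'e estimate deep in the lower half-plane. This is cleaner and avoids the asymptotic limits, at the cost of invoking the branched-cover degree theorem as a black box; the paper's computation is more self-contained and has the side benefit of producing the preimage count at every point without first proving $z(\Omega^-)=D^-$. Note also that your sentence ``$z(\Omega^-)$ is a union of the two components'' tacitly uses that $z(\Omega^-)$ cannot contain a point of $\zeta(\mbb R)$; this follows only after you show $D^+\cap z(\Omega^-)=\emptyset$ (an open neighbourhood of a curve point would meet $D^+$), so you may want to reorder or flag that step.
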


\begin{proof}
We do not assume $\zeta(\alpha)=\xi(\alpha)+i\eta(\alpha)$ is a graph
--- only that it does not self-intersect. We first need to show that
$\Gamma=\{\zeta(\alpha)\,:\,\alpha\in\mbb R\}$ separates the complex
plane into precisely two regions. (In the graph case, this is obvious.)
Let $A=2M$ and consider the linear fractional transformation
\begin{equation}
  T(z) = -\frac{z+Ai}{z-Ai}, \qquad\quad
  T^{-1}(\lambda) = Ai\frac{\lambda-1}{\lambda+1}.
\end{equation}
Note that $T$ maps the real line to the unit circle and
$T^{-1}(e^{-i\theta})=A\tan(\theta/2)$.
Let $g(\theta)=T\circ\zeta\circ T^{-1}(e^{-i\theta})$.  Since
$\zeta(\alpha)$ lies inside a closed ball of radius $M$ centered at
$\alpha$, $\zeta(\alpha)$ remains in the strip $\im z\in[-M,M]$ and
approaches complex $\infty$ as $\alpha\rightarrow\pm\infty$.  Since
$T(\infty)=-1$, $g$ becomes continuous on $[-\pi,\pi]$ if we define
$g(\pm\pi)=-1$.  Since $T$ is bijective and $\zeta$ is injective, $g$
is a Jordan curve and separates the complex plane into two regions.
The curve $g(\theta)$ takes values in the set $\{\lambda :
|\lambda+1/3|\ge2/3, |\lambda-1|\le 2\}$, which is the image of the
strip $\im z\in[-M,M]$ under $T$. An argument similar to Lemma~2 of
section 4.2.1 of \cite{ahlfors} shows that if $|\lambda+1/3|<2/3$, then
$\lambda$ is inside the Jordan curve. In particular, if $\im z<-M$,
then $T(z)$ is inside the curve.  We conclude that there is a
well-defined ``fluid'' region $\Omega$ that is mapped topologically
by $T$ to the inside of the Jordan curve, and a ``vacuum'' region
$\Omega_v$ that is mapped topologically to the outside of the Jordan
curve.

Next we show that $z(w)$ is univalent and maps the lower half-plane
onto the fluid region.  Consider the path $\gamma_a$ in the $w$-plane
that traverses the boundary $\pa S_a$ of the half-disk
$S_a=\{w\,:\,|w|<a,\,\im w<0\}$. Suppose
$z_0\in\Omega\cup\Omega_v=\mbb C\setminus\Gamma$.  If $a>|z_0|+M$,
then $z(w)-z_0$ has no zeros on $\gamma_a$. Indeed, $w\in\gamma_a$
requires $\im w=0$ or $|w|=a$. Assuming $z(w)=z_0$ would require
$z_0\in\Gamma$ or $|z(w)|\ge a-M>|z_0|$, a contradiction in either
case.  We can therefore define the winding number of
$\Gamma_a=z(\gamma_a)$ around $z_0$,
\begin{equation}\label{eq:nGam}
  n(\Gamma_a,z_0) = \int_{\Gamma_a}\frac{dz}{z-z_0} =
  \int_{\gamma_a}\frac{z'(w)\,dw}{z(w)-z_0}, \qquad (a>|z_0|+M).
\end{equation}
Since $n(\Gamma_a,z_0)$ counts the number of solutions of $z(w)=z_0$
inside $S_a$ and all solutions in the lower half-plane belong to $S_a$
as soon as $a>|z_0|+M$, $n(\Gamma_a,z_0)$ is a non-negative integer
that gives the number of solutions of $z(w)=z_0$ in the lower
half-plane. It is independent of $a$ once $a>|z_0|+M$, which
is assumed in (\ref{eq:nGam}).

We decompose $n(\Gamma_a,z_0) = n_2(z_0,a) - n_1(z_0,a)$, where
\begin{equation}\label{eq:n1:n2}
  n_1(z_0,a) = \frac1{2\pi i}\int_{-a}^a
  \frac{\zeta'(\alpha)d\alpha}{\zeta(\alpha)-z_0}, \qquad\quad
  n_2(z_0,a) = \frac1{2\pi i}\int_{-\pi}^0
  \frac{z'(ae^{i\theta})(iae^{i\theta})}{z(ae^{i\theta})-z_0}\,d\theta.
\end{equation}
Let $n_1(z_0) = \lim_{a\rightarrow\infty}n_1(z_0,a)$.  We will show that
\begin{equation}\label{eq:n1:formula}
  n_1(z_0) = 
  \begin{cases}
    -1/2, & z_0\in\Omega, \\
    \phm 1/2, & z_0\in\Omega_v.
  \end{cases}
\end{equation}
First, if $z_0=-iA$, where $A>M$, then $\im\{\zeta(\alpha)-z_0\}>0$
for $\alpha\in\mbb R$.  Thus, $\zeta(\alpha)-z_0$ does not cross the
principal branch cut of the logarithm. As a result,
\begin{equation*}
  n_1(-iA,a) = \frac1{2\pi i}\left[
    \ln\left|\frac{\zeta(a)+iA}{\zeta(-a)+iA}\right| +
    i\opn{Arg}\big(\zeta(a)+iA\big) -
    i\opn{Arg}\big(\zeta(-a)+iA\big)\right].
\end{equation*}
Since $|\zeta(\alpha)-\alpha|\le M<A$, as $a\rightarrow\infty$ we obtain
\begin{equation*}
  n_1(-iA) = \frac{\ln(1)+i0-i\pi}{2\pi i} = -\frac12.
\end{equation*}
A similar argument shows that $n_1(iA)=1/2$. Let $[z_0,z_1]$ denote
the line segment in $\mbb C$ connecting two points $z_0$ and $z_1$,
and suppose this line segment does not intersect the curve
$\zeta(\alpha)$. We claim that if the limit defining $n_1(z_0)$
exists, the limit defining $n_1(z_1)$ also exists, and
$n_1(z_1)=n_1(z_0)$. First note that
\begin{equation*}
  n_1(z_1,a)=n_1(z_0,a)+\frac1{2\pi i}\int_{-a}^a\left[
    \frac{\zeta'(\alpha)}{\zeta(\alpha)-z_1}-
    \frac{\zeta'(\alpha)}{\zeta(\alpha)-z_0}\right]
  \,d\alpha.
\end{equation*}
Since $\zeta(\alpha)$ does not cross $[z_0,z_1]$, 
$\frac{\zeta(\alpha)-z_1}{\zeta(\alpha)-z_0}$ does not cross
the negative real axis or 0. Thus,
\begin{equation*}
  n_1(z_1,a)=n_1(z_0,a)+\left[\opn{Log}\frac{\zeta(\alpha)-z_1}{
      \zeta(\alpha)-z_0}\right]_{\alpha=-a}^{\alpha=a}.
\end{equation*}
Taking the limit as $a\rightarrow\infty$ gives $n_1(z_1)=n_1(z_0)+0$,
as claimed. Every point of $\Omega$ is connected to $-iA$ by a
polygonal path that remains inside $\Omega$, and every point in
$\Omega_v$ is connected to $iA$ by a polygonal path that remains
inside $\Omega_v$. The result (\ref{eq:n1:formula}) follows.

Next consider $n_2(z_0,a)$ in (\ref{eq:n1:n2}).
Since $|z(w)-w|\le M$, the Cauchy integral formula gives
$|z'(\alpha+i\beta)-1|\le
M/(\veps-\beta)$ for $\beta\in(-\infty,\veps)$.
For $a>2(M+|z_0|)$,
\begin{equation*}
  |z(ae^{i\theta})-z_0|\ge (a-M-|z_0|)> a/2,
\end{equation*}
thus the modulus of the integrand in the formula for
  $n_2(z_0,a)$ in (\ref{eq:n1:n2}) is bounded
uniformly by $2(1+M/\veps)$ for large $a$.  For fixed
$\theta\in(-\pi,0)$, $z'(ae^{i\theta})\rightarrow 1$ and
$[z(ae^{i\theta})-z_0]/a\rightarrow e^{i\theta}$, so the integrand
approaches $i$ pointwise on the interior of the integration interval
as $a\rightarrow\infty$. By the dominated convergence theorem,
$\lim_{a\rightarrow\infty}n_2(z_0,a)=1/2$. Combining these results, we
find that $\lim_{a\rightarrow\infty} n(\Gamma_a,z_0) =
1/2-n_1(z_0)$. But since $n(\Gamma_a,z_0)$ is constant for
$a>|z_0|+M$, we conclude that
\begin{equation}
  n(\Gamma_a,z_0) = \left\{\begin{array}{rl}
    1, & z_0\in\Omega \\
    0, & z_0\in\Omega_v
  \end{array}\right\}, \qquad\quad
  (a>|z_0|+M).
\end{equation}
This shows that when solving the equation $z(w)=z_0$, if
$z_0\in\Omega$, there is precisely one solution $w_0$ in the lower
half-plane, and if $z_0\in\Omega_v$, there are no solutions $w_0$ in
the lower half-plane. Since $z(w)$ is an open mapping, it cannot map a
point in the lower half-plane to the boundary $\Gamma$, since a nearby
point would then have to be mapped to $\Omega_v$. It follows that
$z(w)$ is a 1-1 mapping of $\{\im w<0\}$ onto $\Omega$. It is then a
standard result that $z'(w)$ has no zeros in the lower half-plane and
the inverse function $w(z)$ exists and is analytic on $\Omega$.
\end{proof}

\begin{example}
  The function $z(w)=2w^2/(2w-i)$ satisfies $|z(w)-w|=|iw|/|2w-i|$.
  Writing $w=\alpha+i\beta$, we have
  $|z(w)-w|^2=(\alpha^2+\beta^2)/[4\alpha^2+(2\beta-1)^2]$.  If
  $\beta\in(-\infty,1/4]$, then $(2\beta-1)^2=4\beta^2-4\beta+1\ge4\beta^2$ and
  $|z(w)-w|^2\le1/4$. Moreover,
  $\zeta(\alpha)=2\alpha^2(2\alpha+i)/(4\alpha^2+1)$ is injective in
  spite of a cusp at the origin (where $\zeta'(0)=0$).  The hypotheses
  of Theorem~\ref{thm:conformal} are satisfied with $\veps=1/4$ and
  $M=1/2$, so $z$ maps the half-plane $\mbb C^-$ conformally onto the
  region below the curve $\zeta(\alpha)$, in spite of the cusp. We
  will usually assume $\zeta'(\alpha)\ne0$ for $\alpha\in\mbb R$ so
  that the curve is smooth.
\end{example}

\begin{cor}\label{cor:conformal}
  Suppose $k>0$ is irrational,
  $\tilde\eta(\alpha_1,\alpha_2)=\sum_{(j_1,j_2)\in\mbb Z^2}
  \hat\eta_{j_1,j_2}e^{i(j_1\alpha_1+j_2\alpha_2)}$, and there exist constants $C$
          and $\veps>0$ such that
  \begin{equation}
    \hat\eta_{-j_1,-j_2}=\overline{\hat\eta_{j_1,j_2}}, \qquad
    \big|\hat\eta_{j_1,j_2}\big|\le Ce^{-3\veps K\max(|j_1|,|j_2|)},
    \qquad\quad
    (j_1,j_2)\in\mbb Z^2,
  \end{equation}
  where $K=\max(k,1)$.  Let $x_0$ be real and define
  $\tilde\xi=x_0+H[\tilde\eta]$, $\tilde\zeta= \tilde\xi+i\tilde\eta$
  and
  \begin{equation}
    \tilde z(\alpha_1,\alpha_2,\beta) = x_0 + i\hat\eta_{0,0} +
    \sum_{j_1+j_2k<0} 2i\hat\eta_{j_1,j_2}e^{-(j_1+j_2k)\beta}e^{i(j_1\alpha_1+
        j_2\alpha_2)}, \qquad (\beta<\veps),
  \end{equation}
  where the sum is over all integer pairs $(j_1,j_2)$ satisfying the
  inequality.
  Suppose also that for each fixed $\theta\in[0,2\pi)$, the function
  $\alpha\mapsto\zeta(\alpha;\theta)= \alpha+\tilde
  \zeta(\alpha,\theta+k\alpha)$ is injective from $\mbb R$ to $\mbb C$
  and $\zeta_\alpha(\alpha;\theta)\ne0$ for $\alpha\in\mbb R$. Then
  for each $\theta\in\mbb R$, the curve $\zeta(\alpha;\theta)$
  separates the complex plane into two regions and
  \begin{equation}
    z(\alpha+i\beta;\theta) = (\alpha+i\beta) +
    \tilde z(\alpha,\theta+k\alpha,\beta), \qquad
    (\beta<\veps)
  \end{equation}
  is an analytic isomorphism of the lower half-plane onto the
  region below $\zeta(\alpha;\theta)$. Moreover, there is a constant
  $\delta>0$ such that $|z_w(w;\theta)|\ge\delta$ for
  $\im w\le 0$ and $\theta\in\mbb R$.
\end{cor}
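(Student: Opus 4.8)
The plan is to deduce the separation and isomorphism claims from Theorem~\ref{thm:conformal} applied one value of $\theta$ at a time, and then to obtain the uniform lower bound on $|z_w|$ by transferring the problem to the torus and invoking compactness. First I would record the closed form
\[
  z(\alpha+i\beta;\theta) = w + x_0 + i\hat\eta_{0,0} +
  \sum_{j_1+j_2k<0} 2i\hat\eta_{j_1,j_2}e^{ij_2\theta}e^{i(j_1+j_2k)w},
  \qquad w=\alpha+i\beta,
\]
obtained from the definition of $\tilde z$ by writing $e^{-(j_1+j_2k)\beta}e^{i(j_1+j_2k)\alpha}=e^{i(j_1+j_2k)w}$. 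For fixed $\theta$ the summand is dominated on $\{\im w\le\beta_0\}$, for any $\beta_0\in[0,\veps)$, by $2Ce^{-K(3\veps-2\beta_0)\max(|j_1|,|j_2|)}$, using $|j_1+j_2k|\le 2K\max(|j_1|,|j_2|)$ together with the hypothesized decay $|\hat\eta_{j_1,j_2}|\le Ce^{-3\veps K\max(|j_1|,|j_2|)}$; since $3\veps-2\beta_0>\veps>0$, the series converges locally uniformly on $\{\im w<\veps\}$, so $z(\cdot;\theta)$ is analytic there. Evaluating the same bound with $\beta\to\veps$ gives $|z(w;\theta)-w|\le|x_0|+|\hat\eta_{0,0}|+\sum 2Ce^{-\veps K\max(|j_1|,|j_2|)}=:M$, a constant independent of $\theta$. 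With the injectivity of $\zeta(\cdot;\theta)$ assumed in the statement (and the $2\pi$-periodicity of $z(w;\theta)$ in $\theta$, which comes from the factors $e^{ij_2\theta}$ and extends the hypothesis from $[0,2\pi)$ to $\mbb R$), all three hypotheses of Theorem~\ref{thm:conformal} hold for each $\theta$, yielding the separation of the plane and the conformal isomorphism.

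For the uniform lower bound I would introduce the single torus function
\[
  Z(\alpha_1,\alpha_2,\beta) = 1 -
  \sum_{j_1+j_2k<0} 2(j_1+j_2k)\hat\eta_{j_1,j_2}
  e^{-(j_1+j_2k)\beta}e^{i(j_1\alpha_1+j_2\alpha_2)},
\]
so that differentiating the displayed formula for $z$ gives $z_w(\alpha+i\beta;\theta)=Z(\alpha,\theta+k\alpha,\beta)$. For $\beta\le0$ the $(j_1,j_2)$ term is bounded by $4KC\max(|j_1|,|j_2|)e^{-3\veps K\max(|j_1|,|j_2|)}$, so the series converges uniformly, $Z$ is continuous on $\mbb T^2\times(-\infty,0]$, and $Z\to1$ uniformly as $\beta\to-\infty$. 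The key observation is that for each fixed $\beta\le0$ the map $(\alpha,\theta)\mapsto(\alpha\bmod2\pi,\,(\theta+k\alpha)\bmod2\pi)$ is \emph{surjective} onto $\mbb T^2$ (take $\alpha=\alpha_1^\ast$ and $\theta=\alpha_2^\ast-k\alpha_1^\ast$), whence
\[
  \inf_{\theta\in\mbb R,\ \im w\le0}|z_w(w;\theta)| =
  \inf_{(\alpha_1,\alpha_2)\in\mbb T^2,\ \beta\le0}|Z(\alpha_1,\alpha_2,\beta)|.
\]

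It then suffices to bound $|Z|$ away from zero on $\mbb T^2\times(-\infty,0]$. I would first argue $Z\ne0$ there: a value $Z(\alpha_1,\alpha_2,\beta)$ with $\beta\le0$ equals $z_w(\alpha_1+i\beta;\theta)$ for $\theta=\alpha_2-k\alpha_1$, and this is nonzero because the isomorphism from the first step forces $z_w\ne0$ on the open lower half-plane, while the hypothesis $\zeta_\alpha(\cdot;\theta)\ne0$ gives $z_w=\zeta_\alpha\ne0$ on the real axis. Then continuity together with $Z\to1$ as $\beta\to-\infty$ furnishes $B<0$ with $|Z|\ge 1/2$ for $\beta\le B$, and on the compact slab $\mbb T^2\times[B,0]$ the continuous, nonvanishing $|Z|$ attains a positive minimum $\delta_1$; taking $\delta=\min(\delta_1,1/2)$ finishes the proof. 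The main obstacle is precisely the passage from the per-$\theta$ nonvanishing of $z_w$ to a bound that is \emph{uniform} in $\theta$: this is what the torus reformulation plus the surjectivity of the characteristic map accomplishes, reducing an infimum over a noncompact family of half-planes indexed by $\theta$ to the infimum of one continuous function over the compact set $\mbb T^2\times[B,0]$, where compactness applies directly.
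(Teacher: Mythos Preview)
Your proof is correct and follows essentially the same approach as the paper: verify the hypotheses of Theorem~\ref{thm:conformal} for each $\theta$ via the series estimates, then pass to the torus function $Z$ (the paper calls it $F$) and combine the uniform limit $Z\to1$ as $\beta\to-\infty$ with compactness on a slab $\mbb T^2\times[B,0]$ and the per-$\theta$ nonvanishing of $z_w$. The only cosmetic difference is that the paper obtains $Z\to1$ via the Cauchy estimate $|z_w-1|\le M/(\veps-\beta)$ rather than your direct tail bound on the differentiated series; both arguments are valid.
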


\begin{proof}
  First we confirm that $z(w;\theta)$ and $\zeta(w;\theta)$ satisfy the
  hypotheses of Theorem~\ref{thm:conformal}. The formula
  \begin{equation}\label{eq:z:w:cor}
    z(w;\theta) = w + x_0 + i\hat\eta_{0,0} +
    \sum_{j_1+j_2k<0}\left(2i\hat\eta_{j_1,j_2}
      e^{ij_2\theta}\right)e^{i(j_1+j_2k)w}
  \end{equation}
  expresses $z(w;\theta)$ as a uniformly convergent series of
  analytic functions on the region $\im w<\veps$, so it is analytic in
  this region. This follows from the inequalities
  \begin{equation}
    \begin{gathered}
      0 < -(j_1+j_2k)\le|j_1|+|j_2|k\le 2K\max(|j_1|,|j_2|), \\
      -(j_1+j_2k)\beta \le -(j_1+j_2k)\veps \le
      2\veps K\max(|j_1|,|j_2|), \qquad (-\infty<\beta\le\veps) \\
      \left|\left(2i\hat\eta_{j_1,j_2}
        e^{ij_2\theta}\right)e^{i(j_1+j_2k)w}\right|
      \le 2Ce^{-\veps K\max(|j_1|,|j_2|)}, \qquad (\im w\le\veps)
    \end{gathered}
  \end{equation}
  and the fact that for each non-negative integer $s$, there are $4s$
  index pairs $(j_1,j_2)$ in the shell $\max(|j_1|,|j_2|)=s$ and
  satisfying $j_1+j_2k<0$:
  \begin{equation}
      \sum_{j_1+j_2k<0}\left|\left(2i\hat\eta_{j_1,j_2}
        e^{ij_2\theta}\right)e^{i(j_1+j_2k)w}\right| \le
      \sum_{s=1}^\infty (2C)(4s)e^{-\veps Ks}<\infty, \qquad
      (\im w\le\veps).
  \end{equation}
  This also implies that there is a bound $M$ such that
  $|z(w;\theta)-w|\le M$ for $\im w\le\veps$ and $\theta\in\mbb R$.  Let
  $\xi(\alpha;\theta)=\alpha+\tilde\xi(\alpha,\theta+k\alpha)$ and
  $\eta(\alpha;\theta)=\tilde\eta(\alpha,\theta+k\alpha)$ denote the
  real and imaginary parts of $\zeta(\alpha;\theta)$.  Setting
  $w=\alpha\in\mbb R$ in (\ref{eq:z:w:cor}) and taking real and
  imaginary parts confirms that $z(w;\theta)\vert_{w=\alpha} =
  \zeta(\alpha;\theta)$.  By assumption, $\zeta(\alpha;\theta)$ is
  injective, so Theorem~\ref{thm:conformal} implies that $z(w;\theta)$
  is an analytic isomorphism of the lower half-plane onto the region
  below the curve $\zeta(\alpha;\theta)$.  Differentiating
  (\ref{eq:z:w:cor}) term by term \cite{ahlfors} shows that
  $z_w(\alpha+i\beta;\theta)=F(\alpha,\theta+k\alpha,\beta)$, where
  \begin{equation}
    F(\alpha_1,\alpha_2,\beta) = 1 - \sum_{j_1+j_2k<0}
    2(j_1+j_2k)\hat\eta_{j_1,j_2}e^{-(j_1+j_2k)\beta}e^{i(j_1\alpha_1+j_2\alpha_2)}.
  \end{equation}
  We claim that $F(\alpha_1,\alpha_2,\beta)\rightarrow1$ uniformly
  in $(\alpha_1,\alpha_2)$ as $\beta\rightarrow-\infty$.
  Indeed, arguing as in (\ref{eq:zw:lim}), we see that
    $|F(\alpha_1,\alpha_2,\beta) - 1|=
    |z_w(\alpha_1+i\beta;\alpha_2-k\alpha_1)-1|\le M/(\veps-\beta)$.
    Thus, for $\beta\le -B$ with $B=2M$,
  $|F(\alpha_1,\alpha_2,\beta)|\ge1/2$.  Since
  $|F(\alpha_1,\alpha_2,\beta)|$ is continuous, it achieves its
  minimum over $(\alpha_1,\alpha_2)\in\mbb T^2$ and $-B\le\beta\le0$.
  Denote this minimum by $\delta$. If $\delta$ were zero, there would
  exist $\alpha_1$, $\alpha_2$ and $\beta\le0$ such that
  $F(\alpha_1,\alpha_2,\beta)=0$. But then
  $z_w(\alpha_1+i\beta;\theta)=F(\alpha_1,\theta+k\alpha_1,\beta)=0$
  with $\theta=\alpha_2-k\alpha_1$. The case $\beta=0$ is ruled out by
  the assumption that $\zeta_\alpha(\alpha;\theta)\ne0$ while
  $\beta<0$ contradicts $z(w;\theta)$ being 1-1 on $\mbb C^-$.  So
  $\delta>0$ and $|F(\alpha_1,\alpha_2,\beta)|\ge\min(\delta,1/2)$ for
  all $\beta\le0$. Decreasing $\delta$ to 1/2 if necessary gives the
  desired lower bound $|z_w(w;\theta)|\ge\delta$.
\end{proof}

\section{Quasi-Periodic Families of Solutions}\label{sec:families}

In this appendix we explore the effect of introducing phases in the
reconstruction of one-dimensional quasi-periodic solutions of
(\ref{general_conformal}) from solutions of the torus version of these
equations. This ultimately makes it possible to show that if all the
solutions in the family are single-valued and have no vertical tangent
lines, the corresponding solutions of the original graph-based
formulation (\ref{general_initial})--(\ref{Bernoulli}) of the Euler
equations are quasi-periodic in physical space.

\begin{theorem}\label{thm:family}
  The solution pair $(\tilde\zeta,\tilde\varphi)$ on the torus
  represents an infinite
  family of quasi-periodic solutions on $\mbb R$ given by
  \begin{equation}\label{eq:family}
    \begin{aligned}
      \zeta(\alpha,t\,;\,\theta_1,\theta_2,\delta) &= \alpha +
      \delta + \tilde\zeta(
      \theta_1+\alpha,\theta_2+k\alpha,t), \\
    \varphi(\alpha,t\,;\,\theta_1,\theta_2) &= \tilde\varphi(
      \theta_1+\alpha,\theta_2+k\alpha,t),
    \end{aligned} \qquad
    \left( \begin{aligned}
        \alpha\in\mbb R, \, t\ge0 \\
        \theta_1,\theta_2,\delta\in\mbb R
        \end{aligned}\right).
  \end{equation}
\end{theorem}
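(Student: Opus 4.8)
The plan is to show that the three-parameter reconstruction map $\tilde u \mapsto u(\alpha;\bds\theta) := \tilde u(\theta_1+\alpha,\theta_2+k\alpha)$ intertwines the torus version of the equations of motion (in the sense of Remark~\ref{rmk:torus}) with the one-dimensional system (\ref{general_conformal}), so that membership in the solution set is preserved under reconstruction for every choice of $(\theta_1,\theta_2,\delta)$. Concretely, I would verify that the reconstructed pair satisfies the conformal relation (\ref{eq:xi:from:eta}), the $\eta_t$ and $\varphi_t$ equations of (\ref{general_conformal}), and the constant equation (\ref{eq:x0:evol}); as explained just before Remark~\ref{rmk:torus}, the $\xi_t$ equation is then automatic.

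First I would record the elementary commutation properties of this map. By the chain rule, $\partial_\alpha[u(\cdot;\bds\theta)] = (\tilde u_\alpha)(\bds k\alpha+\bds\theta)$ with the directional derivative $\tilde u_\alpha = (\partial_{\alpha_1}+k\partial_{\alpha_2})\tilde u$, matching the convention of Remark~\ref{rmk:torus}. The crucial fact is that the Hilbert transform commutes with the shift $S_{\bds\theta}$, established in Remark~\ref{rmk:f:quasi}, so $H[u(\cdot;\bds\theta)](\alpha) = (H[\tilde u])(\bds k\alpha+\bds\theta)$. Pointwise algebraic operations (products, powers, quotients) commute with reconstruction trivially, the time derivative passes through because $\bds\theta$ is held fixed, and the mean $P_0[u(\cdot;\bds\theta)] = \hat u_{\bds 0}$ is independent of $\bds\theta$ (shifting does not change the zeroth Fourier coefficient), whence $P = \mathrm{id}-P_0$ also commutes with reconstruction. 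I would also note that $\delta$ enters (\ref{eq:family}) only as an additive constant in $\xi$, hence drops out of every derivative and of the entire evolution, surviving solely in the conformal relation.

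Next I would handle the integration constants, which is where the only genuine subtlety lies. Writing $\tilde\xi = x_0(t)+H[\tilde\eta]$ from the torus solution and applying the commutation of $H$, the real part of (\ref{eq:family}) yields $\xi(\alpha;\bds\theta,\delta) = \alpha + (\delta+x_0(t)) + H[\eta(\cdot;\bds\theta)](\alpha)$, so (\ref{eq:xi:from:eta}) holds with $x_0^{\mathrm{1D}}(t) = \delta+x_0(t)$; in particular $P_0[\xi_\alpha]=1$ for every family member. I then need to confirm that the single scalar $C_1(t)$ carried by the torus solution is exactly the one demanded of each reconstructed solution by (\ref{eq:x0:evol}). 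Solving (\ref{eq:x0:evol}) for $C_1$ and using $P_0[\xi_\alpha]=1$ gives $C_1 = dx_0^{\mathrm{1D}}/dt + P_0[\xi_\alpha H[\psi_\alpha/J] - \eta_\alpha\psi_\alpha/J]$. Since $\delta$ is constant, $dx_0^{\mathrm{1D}}/dt = dx_0/dt$, and because $P_0$ ignores the phase $\bds\theta$, the remaining mean equals its torus counterpart, which vanishes in infinite depth with $k$ irrational by the identity (\ref{eq:H:ident}) of Remark~\ref{rmk:ab:same} (after using $P_0\circ H = 0$ and $\xi_\alpha = 1+\tilde\xi_\alpha$). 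Thus $C_1 = dx_0/dt$ for every member of the family, matching the torus value independently of $(\theta_1,\theta_2,\delta)$.

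Finally, with the commutation dictionary and the constant-matching in hand, I would evaluate each torus equation in (\ref{general_conformal}) at the point $\bds k\alpha+\bds\theta$: every derivative, Hilbert transform, pointwise operation, projection, and scalar constant transfers verbatim to its one-dimensional counterpart, so the reconstructed pair $(\zeta(\cdot;\bds\theta,\delta),\varphi(\cdot;\bds\theta))$ solves the $\eta_t$ and $\varphi_t$ equations, and the $\xi_t$ equation follows as above. I expect the main obstacle to be the bookkeeping of the integration constants — in particular confirming that imposing a normalization such as $\tilde\xi(0,0,t)=0$ on the torus does not force a $\bds\theta$-dependent constant in the reconstructed equations — which is resolved precisely by the phase-independence of $P_0$ together with the infinite-depth cancellation in Remark~\ref{rmk:ab:same}.
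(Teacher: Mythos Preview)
Your proposal is correct and follows essentially the same route as the paper: both arguments rest on the commutation of $\partial_\alpha$, $H$, pointwise operations, and $P_0$ with the reconstruction map $\tilde u\mapsto\tilde u(\bds k\alpha+\bds\theta)$, together with the observation that $\delta$ enters only additively in $\xi$ and that $x_0^{1\mathrm D}(t)=\delta+x_0(t)$.

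The one mild difference is in the handling of $C_1$. The paper simply declares that in the 1D calculation $C_1$ is \emph{chosen} to agree with the torus value, appealing to the fact that $C_1$ only affects the tangential velocity and is therefore a free reparametrization parameter; the kinematic condition~(\ref{kinematic_conformal_1}) and Bernoulli equation~(\ref{eq:bernoulli:conf}) then follow regardless of $C_1$. You instead verify directly that the $C_1$ demanded by (\ref{eq:x0:evol}) for each reconstructed solution coincides with the torus value, invoking the infinite-depth identity of Remark~\ref{rmk:ab:same}. Your version is a bit more explicit and ties the argument more tightly to the ``$(\zeta,\varphi)$ satisfies (\ref{general_conformal})'' criterion stated just before Remark~\ref{rmk:torus}; the paper's version is shorter and emphasizes that the constant is a gauge choice. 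Either way the content is the same.
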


\begin{proof}
  We claim that by solving (\ref{general_conformal}) throughout $\mbb
  T^2$ in the sense of Remark~\ref{rmk:torus}, any one-dimensional
  (1D) slice of the form (\ref{eq:family}) will satisfy the kinematic
  condition (\ref{kinematic_conformal_1}) and the Bernoulli equation
  (\ref{eq:bernoulli:conf}). Let us freeze $\theta_1$, $\theta_2$ and
  $\delta$ and drop them from the notation on the left-hand side of
  (\ref{eq:family}). Consider substituting $\eta=\im\zeta$ and
  $\varphi$ from (\ref{eq:family}) into (\ref{general_conformal}), and
  let $u(\alpha)=\tilde u(\theta_1+\alpha,\theta_2+k\alpha)$ represent
  the input of any $\alpha$-derivative or Hilbert transform in an
  intermediate calculation.  Both $\eta$ and $\varphi$ are of this
  form.  By Remark~\ref{rmk:f:quasi},
  $H[u](\alpha) = H[\tilde u](\theta_1+\alpha,\theta_2+k\alpha)$, and
  clearly $u'(\alpha) = [(\pa_{\alpha_1}+k\pa_{\alpha_2})\tilde u]
  (\theta_1+\alpha,\theta_2+k\alpha)$, so the output retains this
  form. We conclude that computing (\ref{general_conformal}) on the
  torus gives the same results for $\tilde\eta_t$ and
  $\tilde\varphi_t$ when evaluated at
  $(\theta_1+\alpha,\theta_2+k\alpha)$ as the 1D calculations of
  $\eta_t$ and $\varphi_t$ when evaluated at~$\alpha$. Since
  $\tilde\xi(\cdot,t)=x_0(t)+H[\tilde\eta(\cdot,t)]$ on $\mbb T^2$,
  \begin{equation}\label{eq:xi:family}
    \xi(\alpha,t) = \alpha+\delta+x_0(t)+H[\eta(\cdot,t)](\alpha),
  \end{equation}
  which follows from (\ref{eq:family}) and $H[\eta(\cdot,t)](\alpha)=
  H[\tilde\eta(\cdot,t)](\theta_1+\alpha,\theta_2+k\alpha)$.  Thus,
  computing $\xi_\alpha=1+H[\eta_\alpha]$ in (\ref{general_conformal})
  gives the same result as just differentiating $\xi$ from
  (\ref{eq:family}) and (\ref{eq:xi:family}).  In the 1D problem, the
  right-hand side of (\ref{kinematic_conformal_3}) represents complex
  multiplication of $z_\alpha$ with a bounded analytic function
  (namely $z_t/z_\alpha$) whose imaginary part equals $-\chi$ on the
  real axis; thus, in (\ref{kinematic_conformal_3}), $\xi_t$ differs
  from $H[\eta_t]$ by a constant. This constant is determined by
  comparing $\xi_t$ in (\ref{kinematic_conformal_3}) with $\xi_t$ from
  (\ref{eq:xi:family}), which leads to the same formula
  (\ref{eq:x0:evol}) for $dx_0/dt$ that is used in the torus
  calculation. Here we note that a phase shift does not affect the
  mean of a periodic function on the torus,
  i.e.~$P_0[S_{\bds\theta}\tilde u]=P_0[\tilde u]$ where
  $S_{\bds\theta}[\tilde u](\bds\alpha)=\tilde u(\bds\alpha+\bds\theta)$.
  We have assumed
  that in the 1D calculation, $C_1$ is chosen to agree with that of
  the torus calculation. Since $C_1$ only affects the tangential
  velocity of the interface parametrization, it can be specified
  arbitrarily. Left-multiplying (\ref{kinematic_conformal_3}) by
  $(-\eta_\alpha, \xi_\alpha)$ eliminates $C_1$ and yields the
  kinematic condition (\ref{kinematic_conformal_1}). Since the
  Bernoulli equation (\ref{eq:bernoulli:conf}) holds on the torus, it
  also holds in the 1D calculation, as claimed.
\end{proof}

For each solution in the family (\ref{eq:family}), there are many
others that represent identical dynamics up to a spatial phase shift
or $\alpha$-reparametrization. Changing $\delta$ merely shifts the
solution in physical space. In fact, $\delta$ does not appear in the
equations of motion (\ref{kinematic_conformal_3}) --- it is only used
to reconstruct the curve via (\ref{eq:xi:family}).  The relations
\begin{equation}\label{eq:canon:fam}
  \begin{aligned}
    \zeta(\alpha+\alpha_0,t\,;\,\theta_1,\theta_2,\delta) &= \zeta(
      \alpha,t\,;\,\theta_1+\alpha_0, \theta_2+k\alpha_0,\delta+\alpha_0), \\
    \varphi(\alpha+\alpha_0,t\,;\,\theta_1,\theta_2) &= \varphi(
      \alpha,t\,;\,\theta_1+\alpha_0, \theta_2+k\alpha_0),
  \end{aligned}
\end{equation}
show that shifting $\alpha$ by $\alpha_0$ leads to another
solution already in the family. This shift reparametrizes the curve
but has no effect on its evolution in physical space. If we identify
two solutions that differ only by a spatial phase shift or
$\alpha$-reparametrization, the parameters
$(\theta_1,\theta_2,\delta)$ become identified with
$(0,\theta_2-k\theta_1,0)$.  Every solution is therefore equivalent
to one of the form
\begin{equation}\label{eq:smaller:family}
  \begin{aligned}
    \zeta(\alpha,t\,;\,0,\theta,0) &=
    \alpha + \tilde\zeta(\alpha,\theta+k\alpha,t), \\
    \varphi(\alpha,t\,;\,0,\theta) &=
    \tilde\varphi(\alpha,\theta+k\alpha,t)
  \end{aligned}
  \qquad
  \alpha\in\mbb R\,,\;
  t\ge 0\,,\;
  \theta\in[0,2\pi).
\end{equation}
Within this smaller family, two values of $\theta$ lead to
equivalent solutions if they differ by $2\pi(n_1k+n_2)$ for some
integers $n_1$ and $n_2$. This equivalence is due to
solutions ``wrapping around'' the torus with a spatial shift,
\begin{equation}\label{eq:wrap:around}
  \zeta(\alpha+2\pi n_1,t\,;\, 0,\theta,0) =
  \zeta(\alpha,t\,;\,0,\theta+2\pi(n_1k+n_2),2\pi n_1), \quad
  \big(\alpha\in[0,2\pi),\;n_1\in\mbb Z\big).
\end{equation}
Here $n_2$ is chosen so that $0\le\big(\theta+2\pi(n_1k+n_2)\big)<2\pi$ and we
used periodicity of $\zeta(\alpha,t\,;\,\theta_1,\theta_2,\delta)$
with respect to $\theta_1$ and $\theta_2$. It usually suffices to
restrict attention to $\alpha\in[0,2\pi)$ by making use of
(\ref{eq:wrap:around}).  One exception is determining whether the
curve self-intersects. In that case it is more natural to tile the
plane with periodic copies of the torus and consider the straight line
parametrization of (\ref{eq:smaller:family}). Indeed, it is
conceivable that
\begin{equation}\label{eq:intersect}
  \zeta(\alpha,t;0,\theta,0)=\zeta(\beta,t;0,\theta,0)
\end{equation}
with $|\alpha-\beta|$ as large as $2M$, where $M$ is a bound on
$|\tilde\zeta|$ over $\mbb T^2$, and the condition
(\ref{eq:intersect}) becomes hard to understand if
(\ref{eq:wrap:around}) is used to map $\alpha$ and $\beta$ back to
$[0,2\pi)$ with different choices of $n_1$ or $n_2$.

We now show that $\eta^\ph(x,t\,;\,\theta_1,\theta_2,\delta)$ and
$\varphi^\ph(x,t\,;\,\theta_1,\theta_2,\delta)$ can be defined and
computed easily from $\zeta(\alpha,t\,;\,\theta_1,\theta_2,\delta)$ and
$\varphi(\alpha,t\,;\,\theta_1,\theta_2)$ if all of the waves in the
family (\ref{eq:smaller:family}) are single-valued and have no
vertical tangent lines, and that $\eta^\ph$ and $\varphi^\ph$ are
quasi-periodic functions of~$x$. To simplify notation, let
  $\bds\alpha=(\alpha_1,\alpha_2)$, $\bds x=(x_1,x_2)$ and
  $\bds k=(1,k)$.

\begin{theorem}\label{thm:quasi:phys1}
Fix $t\ge0$ and suppose $\xi_\alpha(\alpha,t\,;\,0,\theta,0)>0$ for all
$(\alpha,\theta)\in[0,2\pi)\times[0,2\pi)$. Then the equation
\begin{equation}\label{eq:mcA:def1}
  \mc A(\bds x,t) + \tilde\xi\big(\bds x+\bds k\mc A(\bds x,t)\,,\,t\big) = 0,
\end{equation}
defines a unique function $\mc A(x_1,x_2,t)$ on $\mbb T^2$ that is
periodic and real analytic in $x_1$ and $x_2$. The inverse of the
change of variables $\bds x = \bds\alpha + \bds k\tilde\xi(\bds\alpha,t)$
on $\mbb T^2$ is given by
  \begin{equation}\label{eq:chg:vars:T2}
    \bds\alpha = \bds x + \bds k\mc A(\bds x,t).
  \end{equation}
\end{theorem}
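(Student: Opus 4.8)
The plan is to reduce the two-dimensional implicit equation (\ref{eq:mcA:def1}) to a one-dimensional monotone root-finding problem along lines in the direction $\bds k$, which yields global existence and uniqueness of $\mc A$ in one stroke, and then to verify directly that the resulting map inverts the change of variables. First I would record that the hypothesis $\xi_\alpha(\alpha,t\,;\,0,\theta,0)>0$ for all $(\alpha,\theta)$ is equivalent to the pointwise bound $1+\tilde\xi_\alpha>0$ on all of $\mbb T^2$, where $\tilde\xi_\alpha=(\pa_{\alpha_1}+k\pa_{\alpha_2})\tilde\xi$ is the directional derivative in the direction $\bds k$ from Remark~\ref{rmk:torus}. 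This holds because $\xi_\alpha(\alpha;\theta)=1+\tilde\xi_\alpha(\alpha,\theta+k\alpha)$, and as $(\alpha,\theta)$ ranges over $[0,2\pi)^2$ the point $(\alpha,\theta+k\alpha)$ sweeps out all of $\mbb T^2$.

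For existence and uniqueness of $\mc A(\bds x,t)$, I would fix $\bds x$ and $t$ and study the scalar function $h(s)=s+\tilde\xi(\bds x+\bds ks,t)$ for $s\in\mbb R$. Its derivative is $h'(s)=1+\tilde\xi_\alpha(\bds x+\bds ks,t)>0$, so $h$ is strictly increasing; since $\tilde\xi$ is continuous on the compact torus and hence bounded, $h(s)\to\pm\infty$ as $s\to\pm\infty$. Thus $h$ has exactly one zero, which I define to be $\mc A(\bds x,t)$, so that (\ref{eq:mcA:def1}) holds by construction. Periodicity in $x_1$ and $x_2$ follows because replacing $\bds x$ by $\bds x+2\pi\bds e_1$ or $\bds x+2\pi\bds e_2$ leaves the argument $\bds x+\bds ks$ of $\tilde\xi$ unchanged modulo the period lattice, hence leaves $h$ and its root unchanged. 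Real analyticity of $\mc A$ in $(x_1,x_2)$ follows from the real analytic implicit function theorem applied to $\Psi(s,\bds x)=s+\tilde\xi(\bds x+\bds ks,t)$, whose $s$-derivative $1+\tilde\xi_\alpha$ is nonvanishing; here I use the assumption from Section~\ref{sec:quasi} that $\tilde\xi$ is real analytic.

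Finally I would verify that $\bds\alpha=\bds x+\bds k\mc A(\bds x,t)$ is the two-sided inverse of $\bds x=\bds\alpha+\bds k\tilde\xi(\bds\alpha,t)$ on $\mbb T^2$. In one direction, substituting $\bds\alpha=\bds x+\bds k\mc A$ into the forward map gives $\bds x+\bds k\big[\mc A(\bds x,t)+\tilde\xi(\bds x+\bds k\mc A(\bds x,t),t)\big]$, and the bracket vanishes by (\ref{eq:mcA:def1}), returning $\bds x$. In the other direction, starting from $\bds\alpha$ and setting $\bds x=\bds\alpha+\bds k\tilde\xi(\bds\alpha,t)$, one checks that $s=-\tilde\xi(\bds\alpha,t)$ is a root of $h$, since $\bds x-\bds k\tilde\xi(\bds\alpha,t)=\bds\alpha$ yields $h(s)=-\tilde\xi(\bds\alpha,t)+\tilde\xi(\bds\alpha,t)=0$; uniqueness of the root then forces $\mc A(\bds x,t)=-\tilde\xi(\bds\alpha,t)$, so $\bds x+\bds k\mc A(\bds x,t)=\bds\alpha$. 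As a consistency check, the Jacobian of the forward map is $\det(I+\bds k\otimes\nabla\tilde\xi)=1+\tilde\xi_\alpha>0$, matching the positive slope $h'$ that drives the argument.

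I do not anticipate a serious obstacle. The only delicate point is upgrading the local implicit function theorem to a global statement, and the strict monotonicity of $h$ along the $\bds k$-direction handles this cleanly, so no degree-theoretic or covering-space argument for global injectivity of the torus map is needed.
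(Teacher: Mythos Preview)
Your proposal is correct and follows essentially the same approach as the paper: both reduce (\ref{eq:mcA:def1}) to a one-dimensional monotone root-finding problem along lines in the $\bds k$-direction, using the hypothesis to get $h'(s)=1+\tilde\xi_\alpha>0$, boundedness of $\tilde\xi$ for existence, and the real analytic implicit function theorem for regularity. Your version is slightly more thorough in verifying the inverse in both directions and noting the Jacobian check, while the paper additionally extracts a uniform lower bound $\veps(t)>0$ on $1+\tilde\xi_\alpha$ by compactness, but these are minor embellishments on the same argument.
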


\begin{proof}
  First we check that if $\mc A$ satisfies (\ref{eq:mcA:def1}), then
  (\ref{eq:chg:vars:T2}) is the inverse of the change of variables
  $\bds x = \bds\alpha + \bds k\tilde\xi(\bds\alpha,t)$. Given
  $\bds x\in\mbb T^2$, define $\bds\alpha$ by (\ref{eq:chg:vars:T2}).
  Then
  \begin{equation}
    \bds\alpha + \bds k\tilde\xi(\bds\alpha,t) =
    \big(\bds x + \bds k\mc A(\bds x,t)\big) - \bds k\mc A(\bds x,t) = \bds x,
  \end{equation}
  as required. Next we show existence and uniqueness of a solution
  $\mc A$ of (\ref{eq:mcA:def1}) under the assumed hypotheses. Given
  $\bds\alpha=(\alpha_1,\alpha_2)\in\mbb T^2$, the definition
  (\ref{eq:family}) gives
  \begin{equation}\label{eq:1plusD}
    \xi_\alpha(\alpha_1,t\,;\,0,\alpha_2-k\alpha_1,0)
       = 1+[\pa_{\alpha_1}+k\pa_{\alpha_2}]\tilde\xi(\alpha_1, \alpha_2, t),
  \end{equation}
  where the left-hand side means $(d/d\alpha)\big
  \vert_{\alpha=\alpha_1} \xi(\alpha,t\,;\,0,\alpha_2-k\alpha_1,0)$.
  We know the right-hand side is periodic and continuous on $\mbb T^2$
  while the left-hand is positive on the primitive cell
  $\{(\alpha_1,\alpha_2): 0\le\alpha_1<2\pi\;,\;k\alpha_1\le
  \alpha_2<k\alpha_1+2\pi\}$.  Therefore, both sides of
  (\ref{eq:1plusD}) are bounded below by some $\veps(t)>0$ that does
  not depend on $\bds\alpha\in\mbb T^2$. Let $M(t)$ be a bound on
  $|\tilde\xi(\bds\alpha,t)|$ over $\mbb T^2$. Then for fixed
  $\bds x\in\mbb R^2$ (with $t$ also fixed), the function
  $g(\alpha) = g(\alpha;\bds x,t) = \alpha + \tilde\xi(\bds
    x+\bds k \alpha,t)$ is strictly monotonically increasing on $\mbb R$ (as
    $g'(\alpha)\ge\veps(t)$) and satisfies
  $g(-M(t))\le0$ and $g(M(t))\ge0$. Thus, we can define $\mc A(\bds x,t)$ as
  the unique solution of $g(\alpha)=0$. It follows that $|\mc A(\bds
    x,t)|\le M(t)$. If $n_1$ and $n_2$ are integers, replacing $\bds x$
  in (\ref{eq:mcA:def1}) by $\bds y=(y_1,y_2)=(x_1+2\pi n_1,x_2+2\pi
    n_2)$ and using periodicity of $\tilde\xi(\bds\alpha,t)$ gives
  \begin{equation}
    \mc A(\bds y,t) + \tilde\xi\big(\bds x + \bds k\mc A(\bds y,t),t\big) = 0.
  \end{equation}
  Since the solution of this equation is unique, $\mc A(\bds y,t)=\mc
  A(\bds x,t)$. This shows that $\mc A(\bds x,t)$ is periodic in $\bds
  x$, and hence well-defined on $\mbb T^2$. It is also real analytic,
  which follows from the implicit function theorem, noting that
  $g(\alpha;x_1,x_2,t)$ is real analytic in $\alpha$, $x_1$ and
  $x_2$ for fixed $t$ and $\pa g/\pa \alpha$ is never zero. For the
  same reason, $\mc A(\bds x,t)$ will depend as smoothly on $t$ as
  $\tilde\xi(\bds\alpha,t)$ does.
\end{proof}

The change of variables (\ref{eq:chg:vars:T2}) allows us transform the
torus functions $\tilde\xi$, $\tilde\eta$ and $\tilde\varphi$ in
conformal space to physical space
\begin{equation}
  \begin{aligned}
    \tilde\eta^\ph(\bds x,t) &= \tilde\eta(\bds x + \bds k\mc A(\bds x,t),\,t), \\
    \tilde\varphi^\ph(\bds x,t) &= \tilde\varphi(\bds x + \bds k\mc A(\bds x,t),\,t),
  \end{aligned} \qquad\quad
  \begin{aligned}
    \tilde\eta(\bds\alpha,t) &= \tilde\eta^\ph
              (\bds\alpha + \bds k\tilde\xi(\bds\alpha,t),\,t), \\
    \tilde\varphi(\bds\alpha,t) &= \tilde\varphi^\ph
                 (\bds\alpha + \bds k\tilde\xi(\bds\alpha,t),\,t).
  \end{aligned}
\end{equation}
We then write $\bds\theta=(\theta_1,\theta_2)$ and define the quasi-periodic slices
\begin{equation}
  \begin{aligned}
    \eta^\ph(x,t\,;\,\bds\theta,\delta) &=
    \tilde\eta^\ph\big(\bds\theta+\bds k(x-\delta)\,,\,t\big), \\
    \varphi^\ph(x,t\,;\,\bds\theta,\delta) &=
    \tilde\varphi^\ph\big(\bds\theta+\bds k(x-\delta)\,,\,t\big),
  \end{aligned}
\end{equation}
which express $\zeta(\alpha,t\,;\,\bds\theta,\delta)$ as a graph and
$\varphi(\alpha,t\,;\,\bds\theta)$ as a function of $x$:
\begin{align}
\notag
\eta(\alpha,t\,;\,\bds\theta,\delta) &= \tilde\eta(\bds\theta+\bds k\alpha,\,t) \\
\label{eq:eta:tilde:fam}
&=  \tilde\eta^\ph\big(\bds\theta + \bds k\alpha +
    \bds k\tilde\xi(\bds\theta+\bds k\alpha,t),\,t\big) \\
  \notag
  &= \tilde\eta^\ph\big(\bds\theta + \bds k(\xi(\alpha)-\delta),\,t\big) =
  \eta^\ph\big(\xi(\alpha),t\,;\,\bds\theta,\delta\big), \\
  \label{eq:phi:tilde:fam}
  \varphi(\alpha,t\,;\,\bds\theta) &=
  \tilde\varphi^\ph(\bds\theta+\bds k(\xi(\alpha)-\delta),\,t) =
  \varphi^\ph\big(\xi(\alpha),t\,;\,\bds\theta,\delta\big),
\end{align}
where $\xi(\alpha)=\xi(\alpha,t\,;\,\bds\theta,\delta)=
\alpha+\delta+\tilde\xi(\bds\theta+\bds k\alpha,t)$.
These equations confirm (\ref{eta_chain_rule}) and (\ref{eq:phi:tilde}),
which are the assumptions connecting solutions of
(\ref{general_initial}) to those of (\ref{general_conformal}). Thus,
$\eta^\ph(x,t\,;\,\bds\theta,\delta)$ and
$\varphi^\ph(x,t\,;\,\bds\theta,\delta)$ are solutions of
(\ref{general_initial}), the graph-based formulation of the water
wave equations. In the right-hand sides of (\ref{eq:eta:tilde:fam})
and (\ref{eq:phi:tilde:fam}), we can compute the $\alpha$ such
that $\xi(\alpha)=x$ as follows:
\begin{equation}
  \begin{aligned}
    \xi(\alpha,t\,;\,\bds\theta,\delta)=x \quad &\Leftrightarrow \quad
    \alpha+\delta + \tilde\xi(\bds\theta+\bds k\alpha,t) = x \\
    & \Leftrightarrow
    \quad (\bds\theta + \bds k\alpha) + \bds k\tilde\xi(\bds\theta + \bds k\alpha,t)
      = [\bds\theta + \bds k(x-\delta)] \\
      & \Leftrightarrow \quad
      (\bds\theta + \bds k\alpha) = [\bds\theta + \bds k(x-\delta)] + \bds k
      \mc A\big(\bds\theta + \bds k(x-\delta),t\big) \\
    & \Leftrightarrow \quad \alpha = (x-\delta) + \mc A(\bds\theta + \bds k(x-\delta),t),
  \end{aligned}
\end{equation}
where we used (\ref{eq:chg:vars:T2}) with $\bds\alpha = \bds\theta + \bds k\alpha$ and
$\bds x=\bds\theta + \bds k(x-\delta)$
to obtain the third line from the second.

\section{Details on implementing the exponential time differencing schemes}
\label{sec:etd}

In this section we summarize how to solve the evolution equations
(\ref{eq:ssd:L})--(\ref{eq:ssd:N}) using the 4-stage 4th order ETD scheme of
Cox and Matthews \cite{cox:matthews, kassam} or the 6-stage 5th order ETD
scheme of Whalen, Brio and Moloney \cite{whalen:etd}.  When using an
$s$-stage ETD scheme to solve the ODE
\begin{equation}\label{eq:etd:gen}
  u_t = Lu+\mc{N}(t,u),
\end{equation}
the numerical solution $u_n$ is advanced from $t_n$ to $(t_n+h)$
via
\begin{equation}\label{eq:ETD:step}
  \begin{aligned}
    &\;\; \left.\begin{aligned}
       U_r &= e^{c_rhL}u_n + h\sum_{j=1}^s a_{rj}(hL)\mc N_j, \\[-3pt]
       \mc N_r &= \mc N(t_n+c_rh,U_r), \\
     \end{aligned}\right\} \;\; 1\le r\le s, \\
    & u_{n+1} = e^{hL}u_n + h\sum_{r=1}^s b_r(hL)\mc N_r.
  \end{aligned}
  \;\;
  \parbox{1.5in}{
    \centering
    $\begin{array}{c|c} c & A(z) \\ \hline & \raisebox{-2pt}{
        $b(z)^T$} \end{array}$ \\[2pt]
      Butcher array
  }
\end{equation}
The Butcher array for the Cox-Matthews scheme may be written
\begin{equation}
  \begin{array}{c|cccc}
    0 & 0 \\
    \frac12 & \frac12\phi_1(\frac z2) & 0 \\[2pt]
    \frac12 & 0 & \frac12\phi_1(\frac z2) & 0 \\[2pt]
    1 & \frac z4\phi_1(\frac z2)^2 & 0 & \phi_1(\frac z2) & 0 \\[2pt]
    \hline
    & \raisebox{-1pt}{$b_1(z)$} & \raisebox{-1pt}{$b_2(z)$} &
      \raisebox{-1pt}{$b_3(z)$} & \raisebox{-1pt}{$b_4(z)$}
  \end{array} \qquad
  \begin{aligned}
    b_1(z) &= {\jt\frac23}\phi_3(z) - {\jt\frac32}\phi_2(z) + \phi_1(z), \\
    b_2(z) &= b_3(z) = -{\jt\frac23}\phi_3(z) + \phi_2(z), \\
    b_4(z) &= {\jt\frac23}\phi_3(z) - {\jt\frac12}\phi_2(z),
    \end{aligned}
\end{equation}
where
\begin{equation}\label{eq:phi:k:def}
  \phi_k(z) = \left\{
  \begin{array}{cl}
    e^z, & k=0, \\
    k\int_0^1 \theta^{k-1} e^{(1-\theta)z}d\theta,
    & k\ge1
  \end{array}\right\}
  = \frac{k!}{z^k}\bigg( e^z - \sum_{l=0}^{k-1}\frac{z^l}{l!}\, \bigg),
  \quad k=0,1,2,\dots.
\end{equation}
The Butcher array for the Whalen-Brio-Moloney scheme is given in Table
2 of \cite{whalen:etd}.  Both of these schemes are explicit methods,
i.e.~$A(z)$ is strictly lower-triangular. Moreover, $c_r=\sum_{j=1}^s
A_{rj}(0)$.  It follows that $c_1=0$, $U_1=u_n$, and for $r\ge2$, the
upper limit of the sum in the formula for $U_r$ can be replaced by
$r-1$ so that only previously computed values of $\mc N_j$ are needed
to evaluate $U_r$. The only implicit ETD schemes we are aware of are
the high-order collocation methods of Chen and Wilkening
\cite{chen:wilkening,shkoller}.

Let $\mc V$ denote the space of real-valued functions defined on a
uniform $M_1\times M_2$ grid overlaid on the torus $\mbb T^2$ via
(\ref{eq:T2:discr}). When solving (\ref{eq:ssd:L})--(\ref{eq:ssd:N}),
the state space for (\ref{eq:etd:gen}) consists of state vectors
$u=(\tilde\eta,\tilde\varphi)\in\mc V^2$.  The nonlinear function $\mc
N(u)$ in (\ref{eq:ssd:N}) does not explicitly depend on time and is
computed using the pseudo-spectral approach described in
Section~\ref{sec:num}, i.e.~derivatives and the Hilbert transform are
computed in Fourier space while the quadratic nonlinearities in
(\ref{eq:ssd:N}) are evaluated pointwise on the grid. Evaluation of
$e^{c_rhL}u_n$, $a_{rj}(hL)\mc N_j$, $e^{hL}u_n$ and $b_r(hL)\mc N_r$
in (\ref{eq:ETD:step}) are all of the form $\psi(hL)v$ where $\psi(z)$
is an entire function involving the $\phi_k(z)$ functions and $v$ is a
state vector. Since the two-dimensional FFT diagonalizes $L$ (see
  below), the numerical instabilities discussed by Kassam and
Trefethen \cite{kassam} are avoided by using the series expansion of
(\ref{eq:phi:k:def}) for $|z|\le1$, which is
$\sum_{l=k}^\infty(k!/l!)z^{l-k}$. Replacing the upper limit by $k+19$
is sufficient to achieve double-precision accuracy.  There is no
catastrophic cancellation of digits since the leading terms of $e^z$
are eliminated analytically before evaluating the series
numerically. It is therefore not necessary to use the contour integral
approach advocated in \cite{kassam}. For $|z|>1$, one can just
evaluate $(k!/z^k)(e^z-\sum_{l=0}^{k-1}z^l/l!)$ as written, or use the
scaling and modified squaring algorithm of Skaflestad and Wright
\cite{skaflestad:09}; see also \cite{rocky:osc,chen:wilkening}.

Next we explain how to compute $\psi(hL)v$.  We temporarily allow
functions in $\mc V$ to be complex-valued and regard it as a complex
vector space. In the end, only real-valued functions in $\mc V$ will
actually arise in the calculation.  Let $\mc F$ denote the 2D FFT, and
consider the space $\hat{\mc V}=l^2(\mc J)$ of functions $\hat f$
taking values $\hat f_{j_1,j_2}$ on the discrete Fourier lattice
\begin{equation}
  \mc J = \{\bds j=(j_1,j_2)\,:\, -M_i/2+1\le j_i\le M_i/2\,,\, i=1,2\}.
\end{equation}
The operator $\mbb F=\opn{diag}(\mc F, \mc F)$ is an isomorphism of
the state space $\mc V^2$ onto $\hat{\mc V}^2$, and $L$ is
block-diagonalized by $\mbb F$.  In more detail $L=\mbb F^{-1}S\mbb
F$, where $S$ leaves invariant the two-dimensional Fourier subspaces
$(\hat{\mc V}^2)_{\bds j} = \opn{span}\{e_{\bds j}^+,e_{\bds j}^-
\}\subset\hat{\mc V}^2$, where $\bds j\in\mc J$ and
\begin{equation}
  e_{\bds j}^+=(\delta_{\bds j},0), \qquad\quad e_{\bds j}^-=(0,\delta_{\bds j}).
\end{equation}
Here $0\in\hat{\mc V}$ is the zero function ($0_{\bds l}=0$ for $\bds
  l\in\mc J$), and $\delta_{\bds j}\in\hat{\mc V}$ is a lattice
version of the Kronecker delta, i.e.~$(\delta_{\bds j})_{\bds l}=1$ if
$\bds l=\bds j$ and 0 if $\bds l\in\mc{J}\setminus\{\bds j\}$. The
restriction of $S$ to $(\hat{\mc V}^2)_{\bds j}$ has the following
matrix representation with respect to the basis $\{e_{\bds j}^+,
e_{\bds j}^-\}$:
\begin{equation}\label{eq:Sj:def}
  S_{\bds j} = \begin{pmatrix}
    0 & a_{\bds j} \\ -b_{\bds j} & 0
  \end{pmatrix}, \qquad
  a_{\bds j} = |j_1+j_2k|, \qquad
  b_{\bds j} =
    \begin{cases} 0, & \bds j=(0,0), \\
      g+a_j^2, & \text{otherwise}.
    \end{cases}
\end{equation}
When $\bds j=(0,0)$, $S_{\bds j}$ is the zero matrix, so it is
diagonalized by $E_{\bds j}=I_{2\times 2}$ with eigenvalues
$\lambda_{\bds j}^\pm=0$.  Otherwise we have $S_{\bds j} = E_{\bds
  j}\Lambda_{\bds j}E_{\bds j}^{-1}$ with
\begin{equation}\label{eq:Ej:def}
  \Lambda_{\bds j} = \begin{pmatrix} \lambda^+_{\bds j} & \\ &
    \lambda^-_{\bds j}
  \end{pmatrix}, \qquad
  E_{\bds j} = \begin{pmatrix} \sqrt{a_{\bds j}/b_{\bds j}} & \sqrt{a_{\bds j}/b_{\bds j}} \\
    i & -i \end{pmatrix}, \qquad
  E_{\bds j}^{-1} = \frac12\begin{pmatrix} \sqrt{b_{\bds j}/a_{\bds j}} & -i \\
    \sqrt{b_{\bds j}/a_{\bds j}} & i \end{pmatrix}
\end{equation}
and $\lambda^\pm_{\bds j} = \pm i\sqrt{a_{\bds j}b_{\bds j}}$.  Let
$\Lambda$ and $E$ be the operators on $\hat{\mc V}^2$ that leave the
subspaces $(\hat{\mc V}^2)_{\bds j}$ invariant and have matrix
representations $\Lambda_{\bds j}$ and $E_{\bds j}$ with respect to
$\{e_{\bds j}^+, e_{\bds j}^-\}$.
Then $Q=\mbb F^{-1}E$ diagonalizes $L$ and hence $\psi(hL)$:
\begin{equation}
  \begin{aligned}
    L &= Q\Lambda Q^{-1}, \\
    \psi(hL) &= Q\psi(h\Lambda)Q^{-1},
  \end{aligned} \qquad\quad
  \begin{aligned}
    \Lambda e_{\bds j}^\pm &= \lambda_j^\pm e_{\bds j}^\pm, \\
    \psi(h\Lambda)e_{\bds j}^\pm &= \psi(\lambda_j^\pm)e_{\bds j}^\pm.
  \end{aligned}
\end{equation}
The functions $\psi(z)$ that arise in (\ref{eq:ETD:step}) all have the
property that $\psi(\bar z) = \overline{\psi(z)}$. Let
$v\in\mc V^2$ be a real-valued state vector and
denote the intermediate steps in computing $\psi(hL)v$ as
\begin{equation}\label{eq:psi:hL:v}
  \xymatrixcolsep{2.5pc}
  \xymatrix{ \psi(hL)v  \;\;\;
    & \ar@{|->}[l]_-{\mbb F^{-1}} \;\;\; \hat y \;\;\;
    & \ar@{|->}[l]_-{E} \;\;\; \hat x \;\;\;
    & \ar@{|->}[l]_-{\psi(h\Lambda)} \;\;\; \hat w \;\;\;
    & \ar@{|->}[l]_-{E^{-1}} \;\;\; \hat v \;\;\;
    & \ar@{|->}[l]_-{\mbb F} \;\;\; v.
  }
\end{equation}
We denote the components of $\hat v$ in the $e_{\bds j}^\pm$ basis by
$\hat v_{\bds j}^\pm$, with similar notation for $\hat w$, $\hat x$,
and $\hat y$.  Since $v$ is real-valued and $a_{-\bds j}=a_{\bds j}$,
$b_{-\bds j}=b_{\bds j}$ in (\ref{eq:Sj:def}), we have
\begin{equation}\label{eq:v:E:lam}
  \hat v_{-\bds j} = \overline{\hat v_{\bds j}}, \qquad
  E_{-\bds j}=E_{\bds j}, \qquad
  \psi(h\lambda_{-\bds j}^\pm) =
  \psi(h\lambda_{\bds j}^\pm) = \psi\Big(\;\overline{h\lambda_{\bds j}^\mp}\;\Big)
  = \overline{\psi(h\lambda_{\bds j}^\mp)}.
\end{equation}
Inspecting the formulas in (\ref{eq:Ej:def}), we see that conjugating
the inputs of $E_{\bds j}^{-1}$ gives the conjugates of the outputs in
the opposite order. (When computing $b=Ax$, we refer to the components
  of $x$ as inputs to $A$ and those of $b$ as outputs.)  It then
follows from (\ref{eq:v:E:lam}) that
\begin{equation}
  \hat w_{-\bds j}^\pm = \overline{\hat w_{\bds j}^\mp}, \qquad
  \hat x_{-\bds j}^\pm = \psi(h\lambda_{-\bds j}^\pm)w_{-\bds j}^\pm
  = \overline{\psi(h\lambda_{\bds j}^\mp)w_{\bds j}^\mp} =
  \overline{\hat x_{\bds j}^\mp}.
\end{equation}
Conjugating and reversing the order of the inputs to $E_{\bds j}$
gives the conjugates of the outputs in the original order. Thus,
\begin{equation}\label{eq:yhat:sym}
  \hat y_{-\bds j}=\overline{\hat y_{\bds j}}
\end{equation}
and $\psi(hL)v$ is real. This also justifies using the `r2c'
version of the two-dimensional FFT to compute $\hat v=\mbb Fv$, which
only returns values of $v_{\bds j}$ with $\bds j=(j_1,j_2)\in\mc J$
and $j_1\ge0$. We then only compute $\hat w$, $\hat x$ and $\hat y$
for these values of $\bds j$. The missing entries are assured to
satisfy (\ref{eq:yhat:sym}), which is the assumption needed to
apply the `c2r' version of the inverse FFT to obtain
$\psi(hL)v=\mbb F^{-1}\hat y$.

The most expensive steps of evaluating $\psi(hL)v$ are the FFTs in
$\mbb F$ and $\mbb F^{-1}$. To implement (\ref{eq:ETD:step}), one has
to apply $\mbb F$ to $u_n$ and $\mc N_1,\dots,\mc N_s$ and apply $\mbb
F^{-1}$ to obtain $U_2,\dots,U_s$ and $u_{n+1}$. Since $\mbb F$ and
$\mbb F^{-1}$ each involve 2 FFT's, the functional calculus steps of
(\ref{eq:ETD:step}) involve a total of $4s+2$ FFT's. Meanwhile,
evaluating $\mc N_1,\dots,\mc N_s$ using the pseudo-spectral method to
compute derivatives and Hilbert transforms in (\ref{eq:ssd:N})
involves $10s$ FFTs. These would have to be computed using a
Runge-Kutta method anyway, so the cost of an $s$-stage ETD method is
approximately 40\% higher than that of an $s$-stage RK method with the
same stepsize $h$. But as seen in Figure~\ref{stepper_plot}, significantly
larger steps can often be taken with the ETD method for a given
accuracy goal, making the ETD method more efficient in spite of the
additional cost per step.

We remark that we use the formulas for $a_{\bds j}$ and $b_{\bds j}$
in (\ref{eq:Sj:def}) even for the Nyquist modes $\bds j=(j_1,j_2)$
with $j_1=M_1/2$ or $j_2=M_2/2$. We avoid setting $a_{\bds j}=0$ and
$b_{\bds j}=g$ for these modes, which would have been consistent
with our treatment of $H$, $\pa_{\alpha}^2$ and $P$ in the definition
of $L$ in (\ref{eq:ssd:L}), as it would lead to a Jordan block in the
diagonalization of $L$.  It makes little difference since the Nyquist
modes are intended to remain close to roundoff-level values throughout
the computation, and in fact are set to zero at the end of each
timestep by the filter (\ref{eq:filter}). In general, if a Jordan
block arises in the diagonalization of $L$ and the corresponding
eigenspace contains ``low-frequency'' modes that have to be resolved
to get an accurate result, one can transfer a term from $L$ to $\mc N$
in the decomposition (\ref{eq:etd:gen}) so that the modified $L$ is
diagonalizable.  This technique is demonstrated in \cite{shkoller}.

%-------------------------------------------------------------------------------

\section{The spatially quasi-periodic water wave problem in 3D}
\label{sec:3d}

%-------------------------------------------------------------------------------

In this section we briefly outline how to formulate the equations of
motion describing the evolution of water waves at the surface of a
three-dimensional fluid with spatially quasi-periodic boundary
conditions.  Since the conformal mapping framework does not generalize
to this setting, we will only discuss the graph-based formulation in
physical space.  Let $\Phi(x_1,x_2,y,t)$ denote the velocity potential
in the fluid.  The surface variables on $\mbb R^2$ that describe the
state of the system are
\begin{equation}
  \eta(x_1,x_2,t), \qquad \varphi(x_1,x_2,t)=\Phi(x_1,x_2,\eta(x_1,x_2,t),t).
\end{equation}
The equations of motion governing their evolution may be written
\cite{lannes:05}
\begin{align}
  \label{eq:eta:t:3d}
  \eta_t &= G(\eta)\varphi, \\
  \label{eq:phi:t:3d}
  \varphi_t &= \frac12\left(
    \frac{\big( G(\eta)\varphi + \nabla_{\bds x}\eta \cdot
        \nabla_{\bds x}\varphi\big)^2}{1+\big|\nabla_{\bds x}\eta\big|^2}
    - \big|\nabla_{\bds x}\varphi\big|^2\right) - g\eta + \tau\kappa
  + C(t)
\end{align}
where $C(t)$ is an arbitrary function of time, $\nabla_{\bds
  x}=\big(\pa_{x_1},\pa_{x_2}\big)^T$, and
\begin{equation}\label{eq:kappa:3d}
  \kappa=\nabla_{\bds x}\cdot\left(\frac{\nabla_{\bds x}\eta}{
      \sqrt{1+|\nabla_{\bds x}\eta|^2}}\right)
\end{equation}
is the mean curvature. We have also introduced the
Dirichlet-Neumann operator \cite{CraigSulem},
\begin{equation}\label{eq:G:def:3d}
  G(\eta)\varphi = \Phi_y - \eta_{x_1}\Phi_{x_1} - \eta_{x_2}\Phi_{x_2},
\end{equation}
where $\Phi(x_1,x_2,y,t)$ is the solution of
\begin{equation}
  \begin{alignedat}{2}
    \big( \pa_{x_1}^2 + \pa_{x_2}^2 + \pa_y^2\big) \Phi &= 0, \qquad
    & -\infty &<y < \eta(\bds x,t), \quad \bds x\in\mbb R^2, \\
    \Phi &= \varphi, &       y &= \eta(\bds x,t), \\
    \Phi_y &\rightarrow 0, & y &\rightarrow-\infty.
  \end{alignedat}
\end{equation}
Note that $t$ could be dropped from the notation in $\varphi$, $\eta$
and $\Phi$ when defining $G$ as time is frozen when solving the
auxiliary problem of reconstructing the velocity potential $\Phi$ in
the fluid from its boundary value $\varphi$ on the free surface $\eta$
and computing the scaled normal derivative (\ref{eq:G:def:3d}). Using
$\Phi_{x_i}=\varphi_{x_i} - \eta_{x_i}\Phi_y$ in (\ref{eq:G:def:3d}),
we see that (\ref{eq:phi:t:3d}) is equivalent to
\begin{equation}\label{eq:phi:t:3d:b}
  \varphi_t = \frac12\bigg(
    \Big(1 + \big|\nabla_{\bds x}\eta\big|^2\Big)\Phi_y^2 -
    \big|\nabla_{\bds x}\varphi \big|^2 \bigg) - g\eta + \tau\kappa + C(t),
\end{equation}
which can also be derived easily from the 3D analog of
(\ref{Bernoulli}).

Following the definitions in \cite{dynnikov2005topology},
we consider quasi-periodic functions on $\mbb R^2$ with
$d$ quasi-periods, which have the form
\begin{equation*}
  u(\bds{x}) = \tilde{u}(\bds{K}\bds{x})
  = \sum_{\bds j\in\mbb Z^d}\hat u_{\bds j}e^{i\bds j^T\bds K\bds x},  \qquad 
  \bds{x} = (x_1; x_2) \in\mathbb{R}^2, \qquad
  \bds{K} \in \text{Mat}_{d\times 2}(\mathbb R).
\end{equation*}
Here $\tilde{u}:\mbb R^d\to\mbb R$ is real analytic and periodic (so
  well-defined on $\mbb T^d$), $\hat u_{\bds j}$ are its Fourier
modes, and a semicolon separates entries of a column vector. The rows of $\bds K$
may be assumed to be linearly independent over $\mbb Z$ since
otherwise a new function $\tilde v:\mathbb{T}^{d-1} \to \mathbb{R}$
and matrix $\bds L\in \text{Mat}_{(d-1)\times 2}(\mathbb R)$ can be
constructed so that $u(\bds x)=\tilde v(\bds L\bds x)$. Indeed, if the
rows of $\bds K$ are linearly dependent over $\mbb Z$, there is a
unimodular integer matrix $\bds J$ such that $\bds J\bds K=[\bds
  L;\bds 0]$, i.e.~the last row of $\bds J\bds K$ contains only zeros
and the first $d-1$ rows define $\bds L$. One can then define $\hat
v_{\bds l}=\sum_{l_d\in\mbb Z} \hat u_{\bds j(\bds l,l_d)}$ and
confirm that $u(\bds x)=\sum_{\bds l\in\mbb Z^{d-1}}\hat v_{\bds
  l}e^{i\bds l^T\bds L\bds x}$, where $\bds j(\bds l,l_d) = \bds
J^T(l_1;\dots;l_{d-1};l_d)$ for $\bds l\in\mbb Z^{d-1}$ and $l_d\in\mbb Z$. One
also finds that $\tilde v(y)= \tilde u\big(J^{-1}(y;0)\big)$
for $y\in \mbb T^{d-1}$.  We do not know a reference for these
calculations, but they are straightforward. The procedure can be
repeated until the minimal $d$ is found.  For the wave to be genuinely
two-dimensional and quasi-periodic, we require $\bds K$ to have full
rank and $d\ge3$.  For example, one choice of $\bds{K}$ when $d = 3$
is
\begin{equation*}
  \bds K = \begin{pmatrix}
    1&0\\
    0&1\\
    \sqrt{2}&\sqrt{3}
  \end{pmatrix}.
\end{equation*}

Just as in Remark~\ref{rmk:torus}, substitution of the quasi-periodic
functions
\begin{equation}\label{eq:eta:phi:3d}
  \eta(\bds{x}, t) = \tilde{\eta}(\bds{K}\bds{x}, t), \qquad\quad
  \varphi(\bds{x}, t) = \tilde{\varphi}(\bds{K}\bds{x}, t)
\end{equation}
into (\ref{eq:eta:t:3d})--(\ref{eq:kappa:3d}) gives
evolution equations for $\tilde\eta(\bds{\tilde x},t)$ and
$\tilde\varphi(\bds{\tilde x},t)$ on the torus $\mbb T^d$.
One just has to replace $\nabla_{\bds x}$ by
$(\bds k_1\cdot\nabla_{\bds{\tilde x}};\bds k_2\cdot\nabla_{\bds{\tilde x}})$,
where $\bds k_1,\bds k_2\in\mbb R^d$ are the columns of $\bds K$
and $\bds{\tilde x}=(\tilde x_1;\dots;\tilde x_d)\in\mbb T^d$.
This amounts to using the chain rule,
$D_{\bds x}\eta=D_{\bds{\tilde x}}\tilde\eta\cdot \bds K$, where
$D_{\bds x}=(\pa_{x_1},\pa_{x_2})=(\nabla_{\bds x})^T$. It is also
necessary to define a quasi-periodic Dirichlet-Neumann operator
via
\begin{equation}
  \tilde G(\tilde\eta)\tilde\varphi = \widetilde{G(\eta)\varphi}.
\end{equation}
Given $\tilde\eta$ and $\tilde\varphi$ in (\ref{eq:eta:phi:3d}) with
$t$ fixed, one would construct the solution
$\tilde\Phi(\bds{\tilde x},y,t)$ of
\begin{equation}\label{eq:laplace:3d:Q}
  \begin{alignedat}{2}
    (\bds{k}_1\cdot\nabla_{\tilde{\bds{x}}})^2 \tilde{\Phi} 
    + (\bds{k}_2\cdot\nabla_{\tilde{\bds{x}}})^2\tilde{\Phi}
    + (\partial_y)^2\tilde{\Phi} &= 0,  
    \qquad &-\infty &< y < \tilde{\eta}(\tilde{\bds{x}}, t), \quad
    \bds{\tilde x}\in\mbb R^d, \\
    \tilde\Phi (\tilde{\bds{x}}, \tilde{\eta}(\tilde{\bds{x}}, t),t)
    &= \tilde\varphi(\tilde{\bds{x}}, t), 
    \qquad & y &= \tilde\eta(\tilde{\bds{x}}, t), \\
    \partial_y \tilde{\Phi} &\to 0, 
    \qquad & y &\to -\infty
  \end{alignedat}
\end{equation}
and then evaluate
\begin{equation*}
  \tilde G(\tilde{\eta})\tilde{\varphi} = \Big[\partial_y\tilde{\Phi}
    - (\bds{k}_1\cdot\nabla_{\tilde{\bds{x}}}\tilde{\eta})
    (\bds{k}_1\cdot\nabla_{\tilde{\bds{x}}}\tilde{\Phi} )
    -(\bds{k}_2\cdot\nabla_{\tilde{\bds{x}}}\tilde{\eta})
    (\bds{k}_2\cdot\nabla_{\tilde{\bds{x}}}\tilde{\Phi} )
  \Big]_{y = \tilde{\eta}}.
\end{equation*}
A calculation similar to the one for the periodic case
\cite{zakharov1968stability} shows that the torus version
of (\ref{eq:eta:t:3d})--(\ref{eq:phi:t:3d}) is a Hamiltonian
system with energy
\begin{equation}
  E = \frac1{(2\pi)^d}\int_{\mbb T^d} \frac{1}{2}
  \tilde{\varphi}\tilde G(\tilde{\eta})\tilde{\varphi} + \frac{1}{2} g
  \tilde{\eta}^2 + \tau \left(\sqrt{1+(\bds{k}_1\cdot
        \nabla_{\tilde{\bds{x}}}\tilde{\eta})^2 +(\bds{k}_2\cdot
        \nabla_{\tilde{\bds{x}}}\tilde{\eta})^2} - 1\right)
  d\tilde{\bds{x}},
\end{equation}
and $\tilde\eta$ and $\tilde\varphi$ are conjugate variables.

Whereas the conformal mapping approach of Section~\ref{sec:gov:conf}
employs a quasi-periodic Hilbert transform to efficiently compute the
Dirichlet-Neumann operator for a 2D fluid, it remains an open problem
to devise and implement an efficient method for solving
(\ref{eq:laplace:3d:Q}) for a 3D fluid with quasi-periodic boundary
conditions.  In a finite-depth variant of the problem, the finite
element approach of Wilkening and Rycroft
\cite{rycroft2013computation} and the Transformed Field Expansion
method of Nicholls and Reitich
\cite{nicholls:reitich:01,nicholls:reitich:06,qadeer} are candidate
approaches that have been used successfully for periodic boundary
conditions. Both approaches could work in principle for quasi-periodic
boundary conditions but will suffer from the curse of dimensionality
as the $(d+1)$-dimensional region
\begin{equation*}
  \Omega_{\tilde{\eta},h,t} = \big\{ (\bds{\tilde x};y)\;:\;
  -h < y < \tilde\eta(\bds{\tilde x},t)\;,\; \bds{\tilde x}\in\mbb T^d
  \big\}
\end{equation*}
has to be discretized, where $h$ is the fluid depth. We do not know
what to expect for the condition number of a linear system that
discretizes (\ref{eq:laplace:3d:Q}), which is elliptic with respect to
$(\bds x;y)$ on quasi-periodic slices through
$\Omega_{\tilde{\eta},h,t}$ but not with respect to $(\bds{\tilde
    x};y)$ on $\Omega_{\tilde{\eta},h,t}$. The infinite depth case can
often be dealt with by introducing a transparent boundary condition
along a fictitious interface at some depth $y=-h$ below the free
surface, discretizing the fluid region $\Omega_{\tilde{\eta},h,t}$
above the interface, and using series expansions in the unbounded
region below the interface \cite{nicholls:reitich:06}. We have not
worked out the details in a quasi-periodic setting.

A final option is
to avoid the Dirichlet-Neumann operator altogether by using a weakly
nonlinear model water wave equation. One could introduce torus versions of
the equations of motion to obtain genuine quasi-periodic solutions
rather than solving a system of coupled single-mode nonlinear
Schr\"odinger equations \cite{bridges2001,ablowitz2015interacting}.
The curse of dimensionality will also be a challenge for weakly
nonlinear models with this torus approach, especially if larger
values of $d$ are considered.

%\FloatBarrier
\bibliographystyle{abbrv}

% \citecolor{moser1966theory}
% \bibliography{refs}

\end{document}